\documentclass[ejs,preprint]{imsart}
\setattribute{journal}{name}{}
\RequirePackage[round]{natbib}
\RequirePackage[colorlinks,citecolor=blue,urlcolor=blue]{hyperref}

\usepackage{bbm}
\usepackage{comment}
\usepackage{amsmath, amsthm, amssymb,stmaryrd}
\usepackage{algpseudocode}
\usepackage{algorithm}
\usepackage{graphicx}
\usepackage{subfigure}
\usepackage{lineno}

\let\hat\widehat
\newtheorem{thm}{Theorem}
\newtheorem{lem}[thm]{Lemma}

\theoremstyle{remark}
\newtheorem{remark}{Remark}

\newcommand\R{\mathbb{R}}
\newcommand\D{\mathbb{D}}
\newcommand\K{\mathbb{K}}
\newcommand\E{\mathbb{E}}

\newcommand\G{\mathbb{G}}

\newcommand\B{\mathbb{B}}
\newcommand\X{\mathbb{X}}
\newcommand\Y{\mathbb{Y}}
\newcommand\W{\mathbb{W}}
\renewcommand\P{\mathbb{P}}

\newcommand\Haus{{\sf Haus}}

\newcommand{\var}{\mathrm{Var}}

\newcommand\cL{{\cal L}}
\newcommand\cF{{\cal F}}

\newcommand\cR{{\cal R}}
\newcommand\cG{{\cal G}}
\newcommand\cN{{\cal N}}

\catcode`@=11
\newskip\beforeproofvskip
\newskip\afterproofvskip
\beforeproofvskip=\medskipamount
\afterproofvskip=\bigskipamount

\def\prooftag{Proof}
\def\proofskip{\enspace}

\def\proof{\@ifnextchar[{\@@proof}{\@proof}}  
\def\@startproof{\par\vskip\beforeproofvskip\leavevmode}
\def\@proof{\@startproof{\scshape\prooftag.}\proofskip}
\def\@@proof[#1]{\@startproof {\scshape\prooftag #1.}\proofskip}

\catcode`@=12

\let\hat\widehat
\let\tilde\widetilde

\begin{document}

%
\begin{frontmatter}

\title{Nonparametric Inference via Bootstrapping the Debiased Estimator}
\runtitle{Bootstrapping the Debiased Estimator}

\begin{aug}
  \author{\fnms{Gang}
    \snm{Cheng}\ead[label=e1]{gangc@uw.edu}}
    \and
  \author{\fnms{Yen-Chi}
    \snm{Chen}\ead[label=e2]{yenchic@uw.edu}}
  \affiliation{Department of Statistics\\University of Washington}
  \runauthor{G. Cheng and Y.-C. Chen}
  \address{Department of Statistics\\University of Washington\\
Box 354322\\ Seattle, WA 98195 \\
          \printead{e1}}
    \address{Department of Statistics,\\ University of Washington\\
Box 354322\\ Seattle, WA 98195 \\
          \printead{e2}}
\end{aug}

\begin{abstract}
In this paper, we propose to construct confidence bands 
by bootstrapping the debiased kernel density estimator (for density estimation) 
and the debiased local polynomial regression estimator (for regression analysis).
The idea of using a debiased estimator was 
{
recently employed 
by \cite{calonico2018effect}}
to construct a confidence interval of the density function 
(and regression function) at a given point
by explicitly estimating stochastic variations.
We extend their ideas of using the debiased estimator and further propose a bootstrap approach for
constructing simultaneous confidence bands.
This modified method has an advantage that we can easily choose the smoothing bandwidth
from conventional bandwidth selectors and the confidence band will be asymptotically valid. 
We prove the validity of the bootstrap confidence band and generalize it to
density level sets and inverse regression problems.
Simulation studies confirm the validity of the proposed confidence bands/sets.
We apply our approach to an Astronomy dataset to show its applicability.
\end{abstract}

\begin{keyword}[class=MSC]
\kwd[Primary ]{62G15}
\kwd[; secondary ]{62G09, 62G07, 62G08}
\end{keyword}

\begin{keyword}
\kwd{Kernel density estimator}
\kwd{local polynomial regression}
\kwd{level set}
\kwd{inverse regression}
\kwd{confidence set}
\kwd{bootstrap}
\end{keyword}

\end{frontmatter}

\section{Introduction}
In nonparametric statistics, how to construct a confidence band 
has been a central research topic for several decades.
However, this problem has not yet been fully resolved because
of its intrinsic difficulty.
The main issue is that 
the nonparametric estimation error generally contains a bias part and a stochastic variation part.
Stochastic variation can be captured using a limiting distribution or a resampling approach, such as the bootstrap
\citep{Efron1979}.
However, the bias is not easy to handle because it often involves higher-order derivatives of
the underlying function and cannot be easily captured by resampling methods (see, e.g., page 89 in \citealt{wasserman2006all}). 

To construct a confidence band, two main approaches are proposed in the literature.
The first one is to undersmooth the data so the bias converges faster than the stochastic variation 
\citep{bjerve1985uniform, hall1992bootstrap, hall1993empirical, chen1996empirical,wasserman2006all}. 
Namely, we choose the tuning parameter (e.g., the smoothing bandwidth in the kernel estimator) 
in a way such that the bias shrinks faster than the stochastic variation. 
Because the bias term is negligible compared 
to the stochastic variation, the resulting confidence band is (asymptotically) valid.
However, the conventional bandwidth selector (e.g., the ones described in \citealt{sheather2004density})
does not give an undersmoothing bandwidth so it is unclear how to practically implement this method.
The other approach estimates the bias and then constructs a confidence band after correcting the bias
{\citep{hardle1988bootstrapping,hardle1991bootstrap,hall1992effect, eubank1993confidence, sun1994simultaneous, hardle1995better, neumann1995automatic, xia1998bias, hardle2004bootstrap}.}
The second approach is sometimes called a \emph{debiased}, or bias-corrected, approach.
Because the bias term often involves higher-order derivative of the targeted function, 
we need to introduce another estimator of the derivatives to correct the bias
and obtain a consistent bias estimator.
Estimating the derivatives involves a non-conventional smoothing bandwidth (often we have to oversmooth the data)
so it is not easy to choose it in practice (there are some methods discussed in \citealt{Chacon2011}). 



In this paper, we introduce a simple approach to constructing confidence bands
for both density and regression functions by bootstrapping a debiased estimator,
which can be viewed as a synthesis of both the debiased and the undersmoothing methods. 
Our method is featured with the fact that one can use a conventional smoothing bandwidth selector, which does not
involve an explicitly undersmoothing nor oversmoothing. 
We use the kernel density estimator (KDE) to estimate the density function
and local polynomial regression for inferring the regression function. 
Our method is based on the debiased estimator proposed in \cite{calonico2018effect},
where the authors propose a confidence interval of a fixed point using 
an explicit estimation of the errors.
However, they consider univariate density and their approach is only valid for a given point, which limits the applicability.
We generalize their idea to multivariate densities and propose using the bootstrap
to construct a confidence band that is uniform for every point in the support. 
Thus, our method could be viewed as a debiased approach.
A feature of this debiased estimator 
is that we
are able to construct a confidence band even without a consistent bias estimator.  
Thus, our approach requires only one single tuning parameter-the smoothing bandwidth-and this tuning parameter is compatible with
most off-the-shelf bandwidth selectors, such as the rule of thumb in
the KDE or cross-validation in regression \citep{fan1996local,wasserman2006all,scott2015multivariate}. 
Further, we prove that after correcting for the bias in the usual KDE, the bias of the debiased KDE is now on a higher order than the usual KDE, while the stochastic variation for the debiased KDE is still on the same order as the usual KDE. Thus, choosing bandwidth by balancing bias and stochastic variation for the usual KDE turns out to be undersmoothing for the debiased KDE.  
This leads to a simple but elegant approach of constructing a valid confidence band with a uniform coverage over the entire support.
Note that \cite{bartalotti2017bootstrap} also used a bootstrap approach with the debiased estimator
to construct a CI. But their focus is on inferring the regression function of a given point under the regression discontinuity design problem.


As an illustration, consider
Figure~\ref{fig::ex01}, where we apply the nonparametric bootstrap with $L_\infty$ metric to construct
confidence bands.
We consider one example for density estimation and one example for regression.
In the first example (top row of Figure~\ref{fig::ex01}),
we have 
a size $2000$ random sample from a Gaussian mixture, such that 
with a probability of $0.6$, a data point is generated from the standard normal and with a probability of $0.4$, a data point is
from a normal centered at $4$.  We want to compare the coverage performance of usual KDE and the debiased KDE.  
We choose the smoothing bandwidth using the rule of thumb \citep{Silverman1986} { of the usual KDE, then use this bandwidth to}
estimate the density using both usual KDE and the debiased KDE, and then use the bootstrap to construct a $95\%$ confidence band. 
In the left two panels, we display one example of the confidence band for the population density function (black curve)
with a confidence band from bootstrapping the usual KDE (red band) and that from bootstrapping the debiased KDE (blue band).
The right panel shows the coverage of the bootstrap confidence band under various nominal levels.
For the second example (bottom row of Figure~\ref{fig::ex01}), we 
consider estimating the regression function of $Y = \sin(\pi\cdot X)+\epsilon$, where $\epsilon\sim N(0,0.1^2)$
and $X$ is from a uniform distribution on $[0,1]$. 
We generate $500$ points and apply the local linear smoother to estimate the regression function.
We select the smoothing bandwidth by repeating a $5$-fold cross validation of the local linear smoother.
Then we estimate the regression function using both the local linear smoother (red) and the debiased local linear smoother (blue)
and apply the empirical bootstrap to construct $95\%$ confidence bands. 
In both cases, we see that bootstrapping the usual estimator does not yield an asymptotically valid confidence band,
but bootstrapping the debiased estimator gives us a valid confidence band with nominal coverages. 
It is worth mentioning that in both density estimation and regression analysis case, the debiased method only requires one bandwidth which is the same bandwidth as the original method. This illustrates the obvious convenience of our method. 


\begin{figure}
\centering
\includegraphics[height=1.5in]{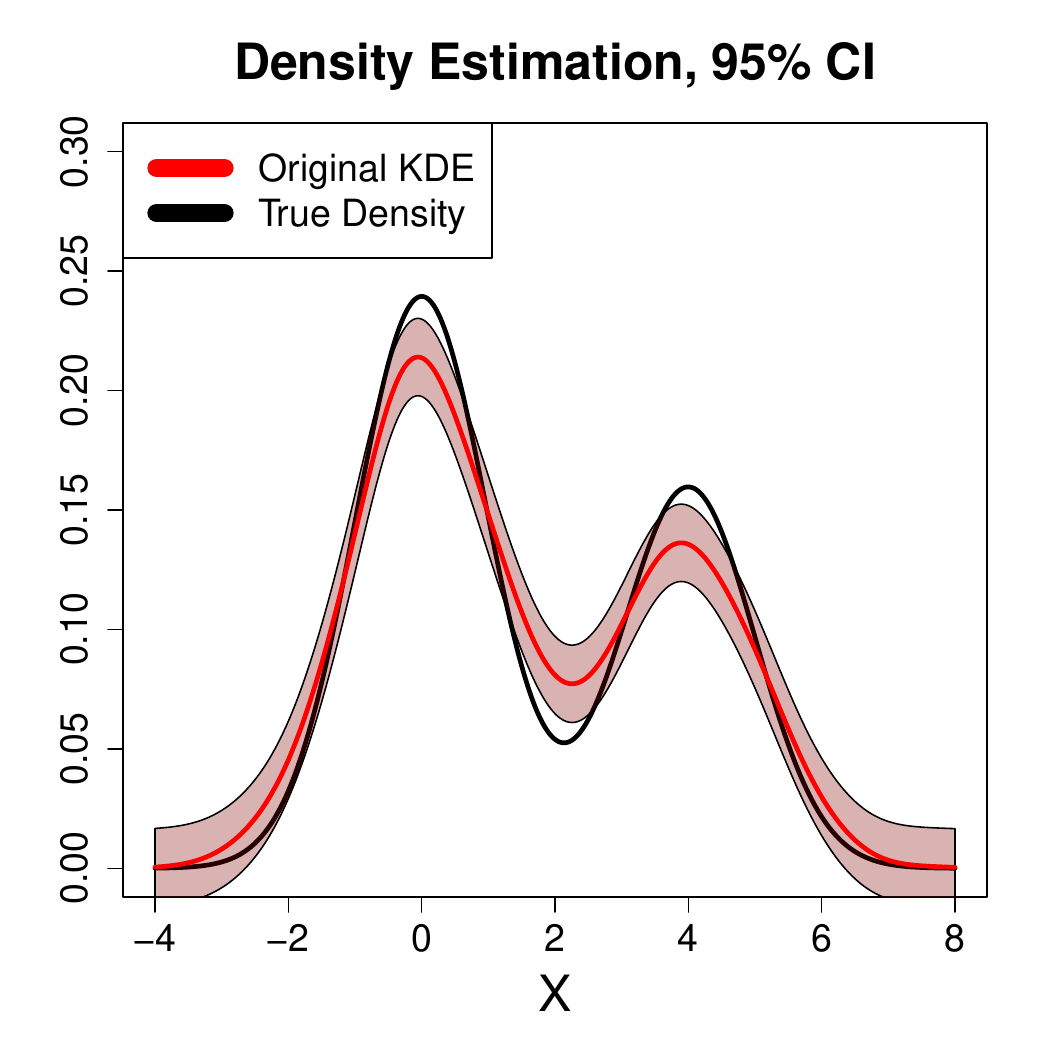}
\includegraphics[height=1.5in]{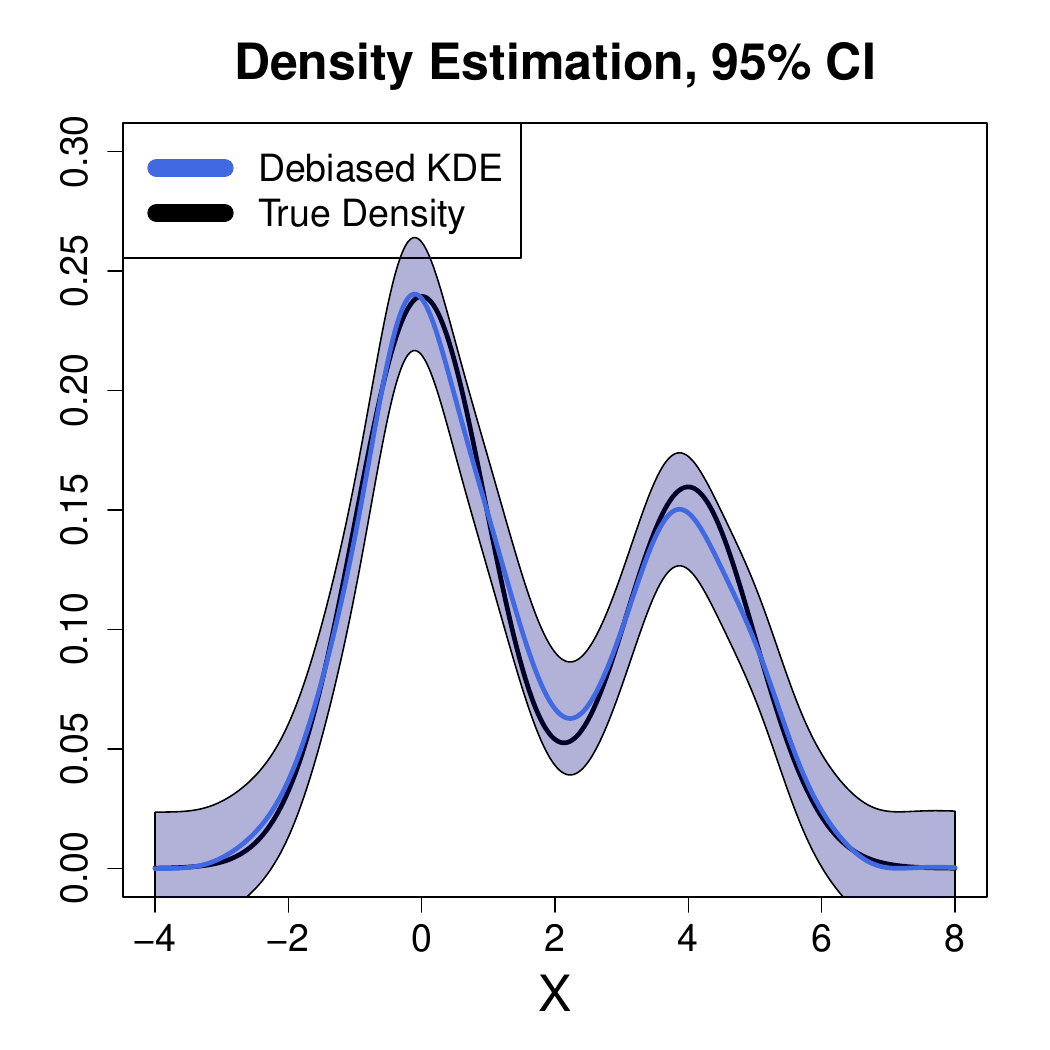}
\includegraphics[height=1.5in]{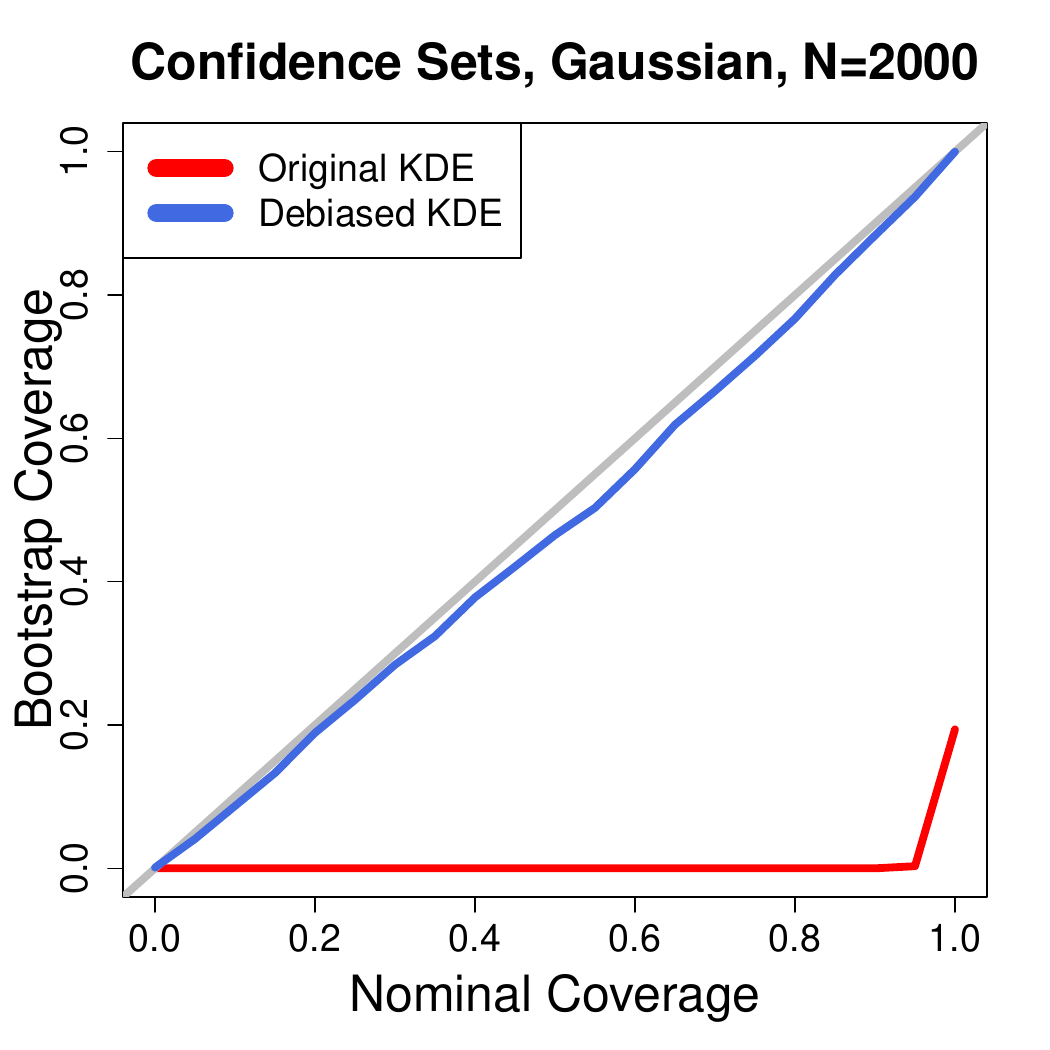}\\
\includegraphics[height=1.5in]{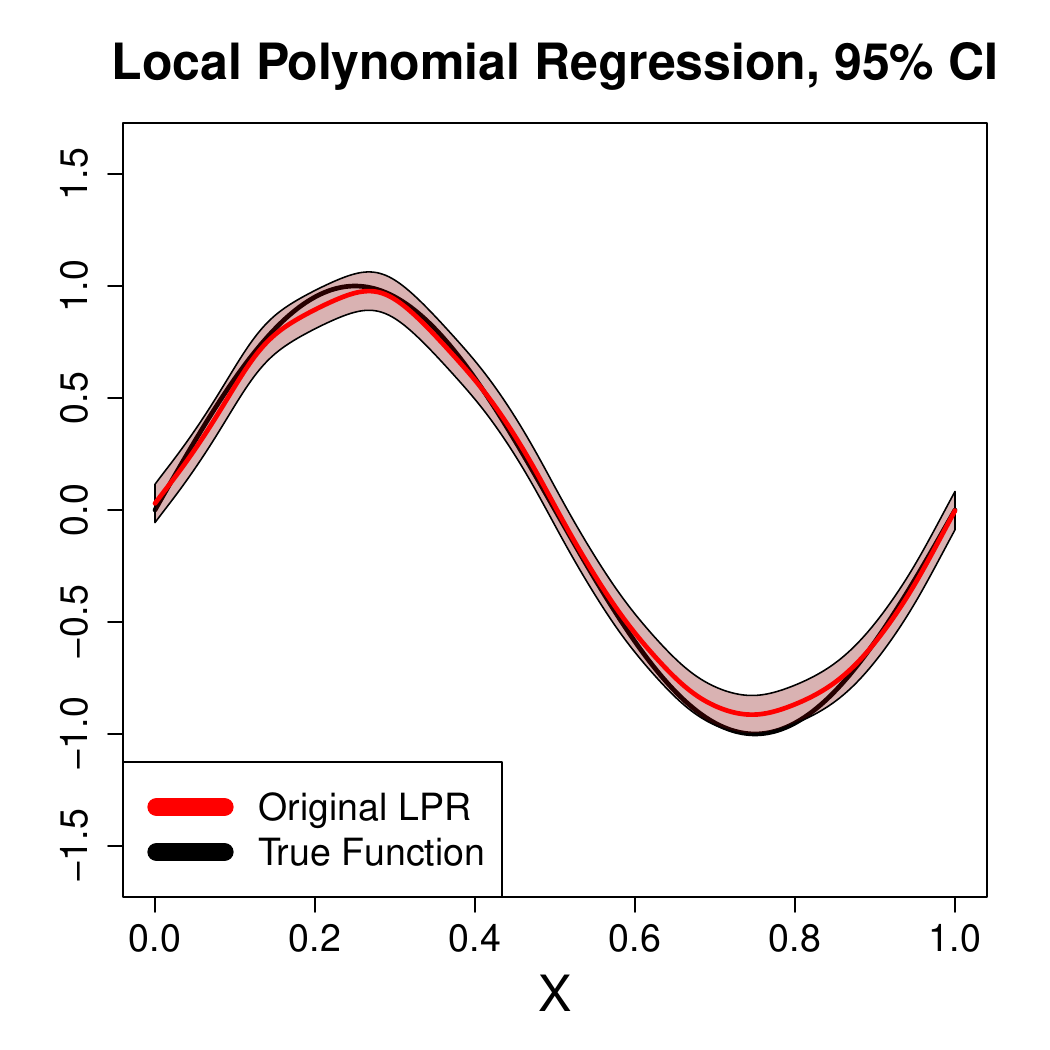}
\includegraphics[height=1.5in]{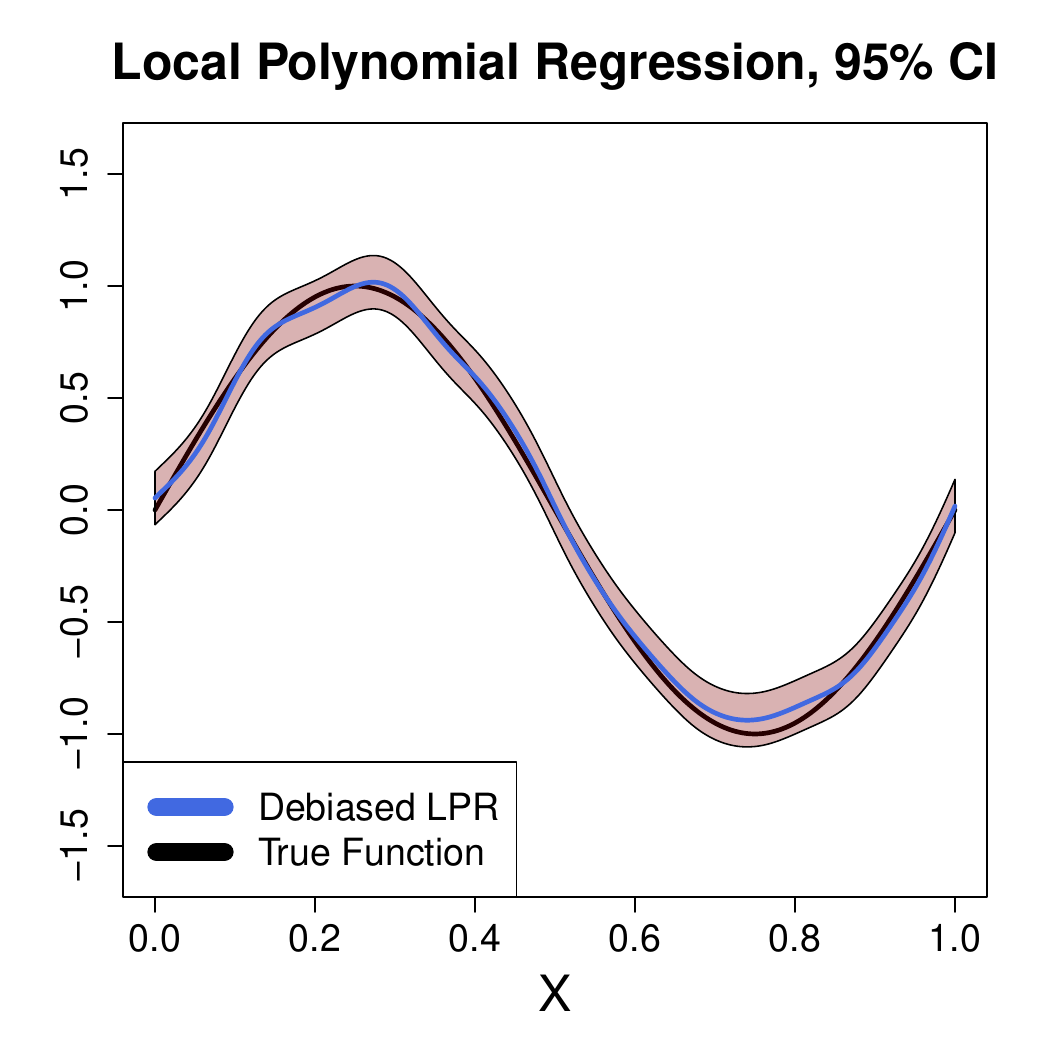}
\includegraphics[height=1.5in]{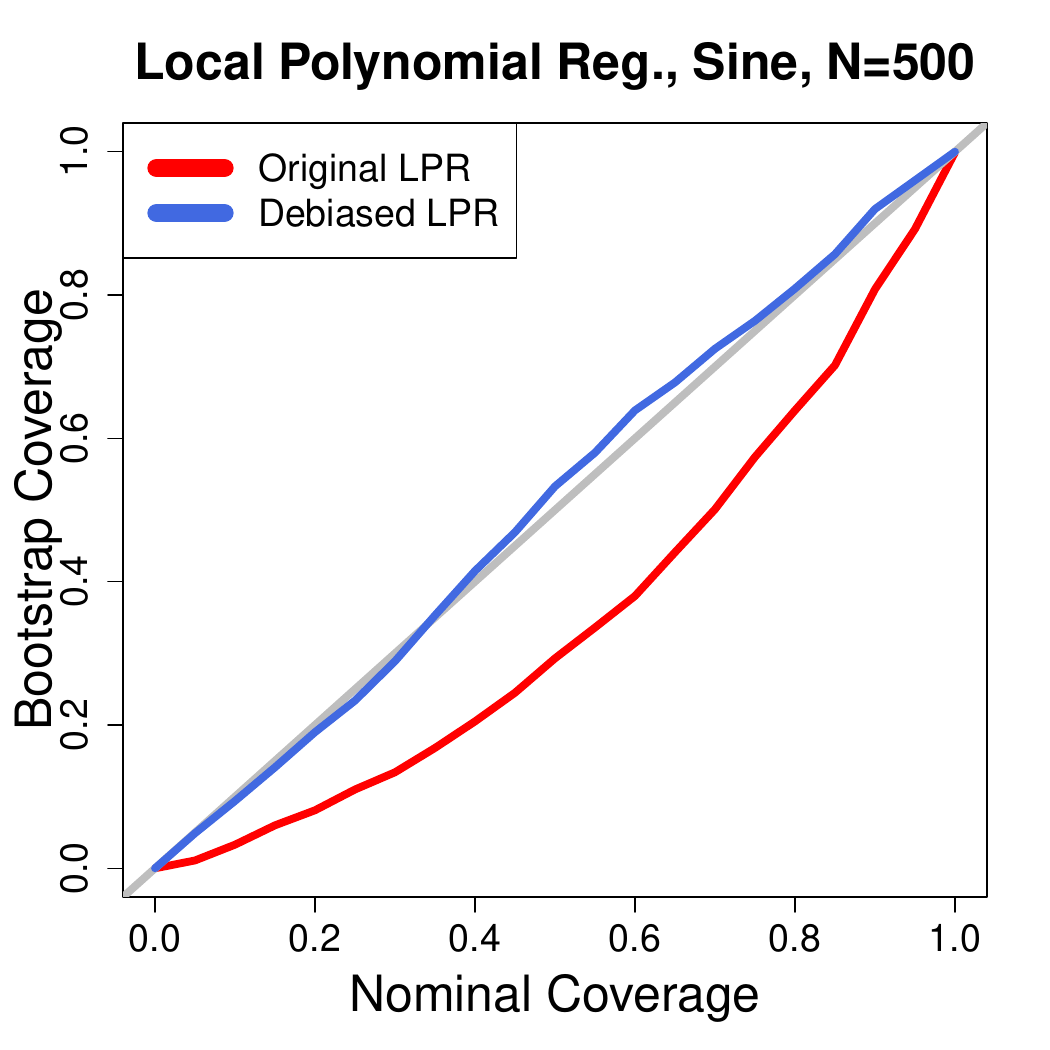}

\caption{Confidence bands from bootstrapping the usual estimator versus bootstrapping the debiased estimator.
In the top row, we consider estimating the density function of a Gaussian mixture. 
And in the bottom row, we consider estimating the regression function of a sine structure. 
One each row, the left two panels displayed one instance of $95\%$ bootstrap confidence band for both original and the debiased estimators, the right panel
shows the coverage of bootstrap confidence band under different nominal levels.}
\label{fig::ex01}
\end{figure}

\emph{Main Contributions.}
\begin{itemize}
\item 
We propose our confidence bands for both density estimation and regression problems
(Section~\ref{sec::CI::KDE} and \ref{sec::LS}). 
\item We generalize these confidence bands to both density level set and inverse regression problems (Section~\ref{sec::LV} and \ref{sec::inv}). 
\item We derive the convergence rate of the debiased estimators under uniform loss (Lemma~\ref{lem::KDE_infty} and \ref{lem::LS_empirical}).
\item We derive the asymptotic theory of the debiased estimators and prove the consistency of confidence bands
(Theorem~\ref{thm::KDE_gaussian}, \ref{thm::KDE_CI}, \ref{thm::LS_gaussian}, and \ref{thm::LS_CI}).
\item We use simulations to show that our confidence bands/sets are indeed asymptotically valid
and apply our approach to an Astronomy dataset to demonstrate the applicability (Section~\ref{sec::simulation}). 
\end{itemize}

\emph{Related Work.}
Our method is 
{inspired by} the pilot work in \cite{calonico2018effect}.
Our confidence band is a bias correction (debiasing) method, which is a common method for constructing confidence bands of nonparametric
estimators.
The confidence sets about level sets and inverse regression are related to 
\cite{lavagnini2007statistical,bissantz2009asymptotic,birke2010confidence,tang2011two,mammen2013confidence,chen2017density}.

\emph{Outline.}
In Section~\ref{sec::debiased}, we give a brief review of the debiased estimator proposed in \cite{calonico2018effect}. 
In Section~\ref{sec::CI}, we propose our approaches for constructing confidence bands
of density and regression functions and generalize these approaches to density level sets and inverse regression problems.
In Section~\ref{sec::theory}, we derive a convergence 
rate for the debiased estimator and prove the consistency of confidence bands.
In Section~\ref{sec::simulation}, we use simulations to demonstrate that our proposed confidence bands/sets
are indeed asymptotically valid. 
Finally, we conclude this paper and discuss some possible future directions in Section~\ref{sec::discuss}.

\section{Debiased Estimator}	\label{sec::debiased}
Here we briefly review the debiased estimator
of the KDE and local polynomial regression proposed in 
\cite{calonico2018effect}.

\subsection{Kernel Density Estimator}
Let $X_1,\cdots,X_n$ be IID from an unknown density function $p$ with a 
support $\K \subset \mathbb{R}^d$, $p$ is at least second-order continuously differentiable. 
The (original) KDE is
$$
\hat{p}_h(x) = \frac{1}{nh^d}\sum_{i=1}^n K\left(\frac{x-X_i}{h}\right),
$$
where $K(x)$ is a smooth function known as the kernel function 
and $h>0$ is the smoothing bandwidth.
Here we will assume $K(x)$ to be a second-order kernel function such as Gaussian
because this is a common scenario that practitioners are using. 
One can extend the idea to higher-order kernel functions. 

The bias of $\hat{p}_h$ often involves the Laplacian of the density, $\nabla^2 p(x)$, 
we define an estimator of it using another smoothing bandwidth $b>0$ as
$$
\hat{p}^{(2)}_b(x) = \frac{1}{nb^{d+2}}\sum_{i=1}^n K^{(2)}\left(\frac{x-X_i}{b}\right),
$$
where $K^{(2)}(x) = \nabla^2 K(x)$ is the Laplacian (second derivative) of the kernel function $K(x)$.

Let $\tau = \frac{h}{b}$.
Formally, the \emph{debiased KDE} is 
\begin{equation}
\begin{aligned}
\hat{p}_{\tau,h}(x) &= \hat{p}_h(x) - \frac{1}{2}c_K\cdot h^2\cdot \hat{p}^{(2)}_b(x)\\
& = \frac{1}{nh^d}\sum_{i=1}^n K\left(\frac{x-X_i}{h}\right) - \frac{1}{2}\cdot c_K\cdot h^2\cdot \frac{1}{nb^{d+2}}\sum_{i=1}^n K^{(2)}\left(\frac{x-X_i}{b}\right)\\
& = \frac{1}{nh^d}\sum_{i=1}^n M_\tau\left(\frac{x-X_i}{h}\right),
\end{aligned}
\label{eq::dKDE}
\end{equation}
where 
\begin{equation}
M_\tau(x) = K(x) -\frac{1}{2} c_K \cdot \tau^{d+2} \cdot K^{(2)}(\tau\cdot x),
\label{eq::M}
\end{equation}
and $c_K = \int x^2 K(x)dx$. 
Note that when we use the Gaussian kernel, $c_K=1$.
The function $M_\tau(x)$ can be viewed as a new kernel function, which we called the
\emph{debiased kernel function}.
Actually, this kernel function is a higher-order kernel function \citep{scott2015multivariate,calonico2018effect}.
Note that the second quantity $\frac{1}{2}c_K\cdot h^2\cdot \hat{p}^{(2)}_b(x)$
is an estimate for the asymptotic bias in the KDE. 
An important remark is that \emph{we allow $\tau\in (0,\infty)$} to be a fixed number and still have a valid confidence band.  
In practice, we often choose $h=b$ ($\tau=1$) for simplicity and it works well in our experiements.
{
Because the estimator in equation \eqref{eq::dKDE} 
uses the same smoothing bandwidth for both density and bias estimations,
it does not provide a consistent estimate of the second derivative (bias)
so it is not a traditional debiased estimator. 
}

With a fixed $\tau $, we only need one bandwidth for the debiased estimator, {which is designed for the original KDE. } 
Note that when using the MISE-optimal bandwidth for the usual KDE $h = O(n^{-1/(d+4)})$,
$\hat{p}^{(2)}_b(x)$ may \emph{not} be a consistent estimator of $p^{(2)}_b(x)$ since the variance of $\hat{p}^{(2)}(x)$ is at the order of $O(1)$.  
Although it is not a consistent estimator, it is unbiased in the limit. 
Thus, adding this term to the original KDE trades the bias of $\hat{p}_h(x)$ into the stochastic variability of $\hat{p}_{\tau,h}(x)$
and knock the bias into the next order, which is an important property that allows us to choose $h$ and $b$ to be of the same order.
For statistical inference, as long as we can use resampling methods to
capture the variability of the estimator, we are able to construct a valid confidence band.


%

\subsection{Local Polynomial Regression}
Now we introduce the debiased estimator for the local polynomial regression \citep{fan1996local,wasserman2006all}.
For simplicity, we consider the local linear smoother (local polynomial regression with degree $1$) and assume that 
the covariate has dimension $1$.
One can generalize this method into a higher-order local polynomial regression and
multivariate covariates.

Let $(X_1,Y_1),\cdots, (X_n,Y_n)$ be the observed random sample for the covariate $X_i\in\D\subset \R$
and the response $Y_i\in\R$.
The parameter of interest is
the regression function $r(x) = \E(Y_i|X_i=x)$.

The local linear smoother estimates $r(x)$ by
\begin{equation}
\hat{r}_h(x) = \sum_{i=1}^n \ell_{i,h}(x) Y_i,
\label{eq::LS}
\end{equation}
with 
\begin{align*}
\ell_{i,h}(x) &= \frac{\omega_{i,h}(x)}{\sum_{j=1}^n\omega_{j,h}(x)} \\
\omega_{i,h}(x)&= K\left(\frac{x-X_i}{h}\right)(S_{n,h,2}(x)-(X_i-x)S_{n,h,1}(x))\\
S_{n,h,j}(x)&= \sum_{i=1}^n(X_i-x)^{j}K\left(\frac{x-X_i}{h}\right), \quad j=1,2,
\end{align*}
where $K(x)$ is the kernel function and $h>0$ is the smoothing bandwidth.

To debias $\hat{r}_h(x)$, we use the local polynomial regression
for estimating the second derivative $ r''(x)$.
We consider the third-order local polynomial regression estimator of $r''(x)$ \citep{fan1996local, xia1998bias}, 
which is given by
\begin{equation}
\hat{r}^{(2)}_b(x) = \sum_{i=1}^n \ell_{i,b}(x,2)Y_i
\label{eq::2LS}
\end{equation}
with 
\begin{align*}
\ell_b(x,2)^T &= \left(\ell_{1,b}(x,2),\cdots,\ell_{n,b}(x,2)\right)\in\R^n\\
&= 2! e_3^T (X^T_x W_{b,x} X_x)^{-1}X^T_xW_x,
\end{align*}
where
\begin{align*}
e_3^T& = (0,0,1,0),\\
X_x &=\begin{pmatrix}
  1 & X_1-x & \cdots & (X_1-x)^3\\
  1 & X_2-x & \cdots & (X_2-x)^3 \\
  \vdots  & \vdots  & \ddots & \vdots  \\
  1 & X_n-x & \cdots &(X_n-x)^3
 \end{pmatrix}\in \R^{n\times 4},\\
 W_{b,x}&= {\sf Diag}\left(K\left(\frac{x-X_1}{b}\right),\cdots, K\left(\frac{x-X_n}{b}\right)\right)\in\R^{n\times n}.
\end{align*}
Namely, $\hat{r}^{(2)}_b(x)$ is the local polynomial regression estimator of second derivative $r^{(2)}(x)$
using smoothing bandwidth $b>0$.

By defining $\tau= h/b$, the \emph{debiased local linear smoother} is 
\begin{equation}
\hat{r}_{\tau,h}(x) = \hat{r}_h(x) - \frac{1}{2}\cdot c_K\cdot h^2\cdot \hat{r}^{(2)}_{h/\tau}(x),
\label{eq::dLS}
\end{equation}
where $c_K = \int x^2 K(x)dx$ is the same as the constant used in the debiased KDE.
Note that in practice, we often choose $h=b (\tau=1)$. 
Essentially, the debiased local linear smoother uses $\hat{r}^{(2)}_{h/\tau}(x)$
to correct the bias of the local linear smoother $\hat{r}_h(x)$.

\begin{remark}
One can also construct a debiased estimator using the kernel regression (Nadaraya-Watson estimator; \citealt{nadaraya1964estimating}).
However, because the bias of the kernel regression has an extra design bias term
$$
\frac{1}{2}c_K\cdot h^2\cdot \frac{r'(x)p'(x)}{p(x)},
$$ 
the debiased estimator will be more complicated.
We need to estimate $r'(x), p'(x),$ and $p(x)$ to correct the bias.

\end{remark}

\section{Confidence Bands}	\label{sec::CI}

\subsection{Inference for Density Function}	\label{sec::CI::KDE}

\begin{figure}
\fbox{\parbox{\textwidth}{
\begin{center}
{\sc Confidence Bands of Density Function}
\end{center}
\begin{center}
\begin{enumerate}
\item 
Choose the smoothing bandwidth $h_{RT}$ by a standard approach such as the rule of thumb or cross-validation \citep{Silverman1986,sheather1991reliable,sheather2004density}. 
\item 
Compute the debiased KDE $\hat{p}_{\tau,h_{RT}}$ with a fixed value $\tau$ (in general, we choose $\tau=1$).
\item
Bootstrap the original sample for $B$ times and compute the bootstrap debiased KDE
$$
\hat{p}^{*(1)}_{\tau,h_{RT}}, \cdots, \hat{p}^{*(B)}_{\tau,h_{RT}}.
$$
\item
Compute the quantile 
$$
\hat{t}_{1-\alpha} = \hat{F}^{-1}(1-\alpha), \quad \hat{F}(t) = \frac{1}{B}\sum_{j=1}^BI\left(\|\hat{p}^{*(j)}_{\tau,h_{RT}}-\hat{p}_{\tau, h_{RT}}\|_\infty<t\right).
$$
\item 
Output the confidence band 
$$
\hat{C}_{1-\alpha}(x) = \left[\hat{p}_{\tau, h}(x) - \hat{t}_{1-\alpha},\,\, \hat{p}_{\tau, h}(x) + \hat{t}_{1-\alpha}\right].
$$
\end{enumerate}
\end{center}
}}
\caption{Confidence bands of the density function.}
\label{fig::alg::KDE}
\end{figure}

Here is how we construct our confidence bands of density function.
Given the original sample $X_1,\cdots, X_n$, we apply the empirical bootstrap \citep{Efron1979} to generate
the bootstrap sample
$X_1^*,\cdots,X_n^*$. 
Then we apply the debiased KDE \eqref{eq::dKDE} with the bootstrap sample to obtain
the bootstrap debiased KDE. 
\begin{equation}
\begin{aligned}
\hat{p}^*_{\tau,h}(x) 
 = \frac{1}{nh^d}\sum_{i=1}^n M_\tau\left(\frac{x-X^*_i}{h}\right),
\end{aligned}
\label{eq::dKDE_bt}
\end{equation}
where $M_\tau$ is the debiased kernel defined in equation \eqref{eq::M}.
Finally, we compute the bootstrap $L_\infty$ metric $\left\|\hat{p}^*_{\tau,h}-\hat{p}_{\tau,h}\right\|_\infty$,
where $\|f\|_\infty = \sup_{x}|f(x)|$.

Let $\hat{F}(t) =P\left(\left\|\hat{p}^{*}_{\tau,h}-\hat{p}_{\tau,h}\right\|_\infty\leq t|X_1,\cdots,X_n\right)$ 
be the distribution of the bootstrap $L_\infty$ metric
and let $\hat{t}_{1-\alpha}$ be the $(1-\alpha)$ quantile of $\hat{F}(t)$.
Then a $(1-\alpha)$ confidence band of $p$ is 
$$
\hat{C}_{1-\alpha}(x) = \left[\hat{p}_{\tau, h}(x) - \hat{t}_{1-\alpha}, \,\,\hat{p}_{\tau, h}(x) + \hat{t}_{1-\alpha}\right].
$$ 

In Theorem~\ref{thm::KDE_CI}, we prove that this is an asymptotic valid confidence band of $p$
when $\frac{ nh^{d+4}}{\log n}\rightarrow c_0 \geq 0$ for some $c_0 < \infty$ and some other regularity conditions for bandwidth $h$ hold. 
Namely, we will prove
$$
P\left( p(x) \in \hat{C}_{1-\alpha}(x)\,\, \forall x\in\K \right) = 1-\alpha+o(1).
$$
The constraint on the smoothing bandwidth allows us to choose $h = O(n^{-1/(d+4)})$, which is the rate of
most bandwidth selectors in the KDE literature \citep{Silverman1986,sheather1991reliable,sheather2004density, hall1983}.  
Thus, we can choose the tuning parameter using one of these standard methods and bootstrap the debiased estimators
to construct a confidence band.
{ Note for our purpose of inference, the bandwidth was chosen to optimize the original KDE}. 
Though the construction of a confidence band is simple,
it leads to a band with a simultaneous coverage.
Figure~\ref{fig::alg::KDE} provides a summary of the proposed procedure. 

Note that one can replace the KDE using the local polynomial density estimator
and the resulting confidence band is still valid.
The validity of the confidence band follows from the validity of 
the confidence band of the local linear smoother (Theorem~\ref{thm::LS_CI}).

\begin{remark}	\label{rm::honest0}
{
An alternative approach to constructing confidence band is via bootstrapping
a weighted $L_\infty$ statistic such that the difference $\hat{p}_{\tau,h}-p$
is inversely weighted according to an estimate of its variance.
This leads to a variable bandwidth confidence band.
For one concrete example, 
we consider using $\hat{\sigma}_{rbc}$, 
the estimated variance of $\hat{p}_{\tau,h}$ in 
\cite{calonico2018effect}
to construct a variable-width confidence band.
Specifically, we bootstrap
$$
\left\|\frac{\hat{p}_{\tau,h}-p}{\hat{\sigma}_{rbc}}\right\|_\infty,
$$
where $\sigma^2_{rbc} = (nh^d) \var(\hat{p}_{\tau, h})  = \frac{1}{h^d} \left[ \E[M_{\tau} \left(\frac{x - X_i}{h}\right)^2] - \E^2[M_{\tau} \left(\frac{x - X_i}{h}\right)]\right]$ and naturally $\hat{\sigma}_{rbc}^2 = \frac{1}{h^d} \left[ \frac{1}{n} \sum_{i=1}^n M_{\tau}^2 \left( \frac{x - X_i}{h}\right) - \left(\frac{1}{n} \sum_{i=1}^n M_{\tau} \left( \frac{x - X_i}{h} \right)  \right)^2 \right]$.  
$\hat{\sigma}^2_{rbc}$ is non-asymptotic and the above statistic is exactly the studentization quantity proposed in 
\cite{calonico2018effect} to take into account the additional variability introduced by bias term.
We choose $\tilde{t}_{1-\alpha}$ as the $1-\alpha$ quantile of 
$$
\left\|\frac{\hat{p}^*_{\tau,h}-\hat{p}_{\tau,h}}{\hat{\sigma}^*_{rbc}}\right\|_\infty
$$
and construct a confidence band using
\begin{equation}
\tilde{C}_{1-\alpha}(x) = \left[\hat{p}_{\tau, h}(x) - \tilde{t}_{1-\alpha}\hat{\sigma}_{rbc}(x), \,\,\hat{p}_{\tau, h}(x) +\tilde{t}_{1-\alpha}\hat{\sigma}_{rbc}(x)\right].
\label{eq::honest}
\end{equation}
A feature of this confidence band is that the width of the resulting confidence band depends on $x$
and by a similar derivation as Theorem~\ref{thm::KDE_CI},
it is also an asymptotically valid confidence band (more details are given in Appendix \ref{sec::RM2}).   
}
\end{remark}


\begin{remark}
In a sense, the debiased estimator is similar to the debiased lasso
\citep{javanmard2014confidence,
van2014asymptotically,
zhang2014confidence}
where we add an extra term to the original estimator to correct the bias
so that the stochastic variation dominates the estimation error. 
Then the stochastic variation can be estimated using either a limiting distribution
or a bootstrap, which leads to a (asymptotically) valid confidence band. 

\end{remark}

\subsubsection{Inference for Density Level Sets}	\label{sec::LV}
In addition to the confidence band of $p$, bootstrapping the debiased KDE gives us
a confidence set of the \emph{level set} of $p $.
Let $\lambda$ be a given level.
We define
$$
D=\{x: p(x)= \lambda\}
$$
as the $\lambda$-level set of $p $ \citep{Polonik1995,Tsybakov1997}.

A simple estimator for $D$ is the plug-in estimator based on the debiased KDE:
$$
\hat{D}_{\tau, h} = \{x: \hat{p}_{\tau,h}(x)=\lambda\}. 
$$
Under regularity conditions, a consistent density estimator leads to a consistent
level set estimator \citep{Polonik1995,Tsybakov1997,Cuevas2006,Rinaldo2010b,qiao2017asymptotics}. 

Now we propose a confidence set of $D$ based on bootstrapping the debiased KDE. 
We will use the method proposed in \cite{chen2017density}.
To construct a confidence set for $D$, we introduce the \emph{Hausdorff distance} which is defined as
$$
\Haus(A,B) = \max\left\{\sup_{x\in A}d(x,B), \sup_{x\in B}d(x,A)\right\}. 
$$
The Hausdorff distance is like an $L_\infty$ metric for sets.

Recall that $\hat{p}^*_{\tau,h}$ is the bootstrap debiased KDE. 
Let $\hat{D}^*_{\tau,h} = \{x: \hat{p}^*_{\tau,h}(x)=\lambda\}$ be the plug-in estimator of $D$
using the bootstrap debiased KDE.
Now define $\hat{t}^{LV}_{1-\alpha}$ to be the $1-\alpha$ quantile of the distribution of the bootstrap Hausdorff distance
$$
\hat{F}^{LV}(t) =P\left(\Haus(\hat{D}^{*}_{\tau,h}, \hat{D}_{\tau,h})<t|X_1,\cdots,X_n\right).
$$
Then a $(1-\alpha)$ confidence set of $D$ is 
$$
\hat{D}_{\tau,h} \oplus \hat{t}^{LV}_{1-\alpha},
$$
where $A\oplus r =\{x: d(x,A)\leq r\}$ for a set $A$ and a scalar $r>0$.
In Theorem~\ref{thm::LV_CI}, we prove that this is an asymptotically valid confidence set of $D$.

\begin{remark}	\label{rm::LV::alternative}
\cite{mammen2013confidence} 
proposed an alternative way to construct
confidence sets for the level sets by inverting the confidence bands of KDE.
They proposed using
$$
\{x: |\hat{p}_h(x)-\lambda|<\epsilon_{n,\alpha}\}
$$
as a confidence set of $D$,
where $\epsilon_{n,\alpha}$ is some suitable quantity computed from the data. 
This idea also works for the debiased KDE; we 
can construct a confidence set as
$$
\left\{x: |\hat{p}_{\tau,h}(x)-\lambda|<\hat{t}_{1-\alpha}\right\},
$$
where $\hat{t}_{1-\alpha}$ is the $1-\alpha$ quantile of bootstrap $L_\infty$ metric given in Section~\ref{sec::CI::KDE}.
Moreover, Theorem~\ref{thm::KDE_CI} implies
that this is also an asymptotically valid confidence set.
\end{remark}

\subsection{Inference for Regression Function}	\label{sec::LS}

\begin{figure}
\fbox{\parbox{\textwidth}{
\begin{center}
{\sc Confidence Bands of Regression Function}
\end{center}
\begin{center}
\begin{enumerate}
\item 
Choose the smoothing bandwidth $h_{CV}$ by cross-validation ($5$-fold or $10$-fold) or other bandwidth selector 
with the usual local linear smoother.
\item 
Compute the debiased local linear smoother $\hat{r}_{\tau, h_{CV}}$ with a fixed value $\tau$ (in general, we choose $\tau=1$).
\item
Bootstrap the original sample for $B$ times and compute the bootstrap debiased local linear smoother 
$$
\hat{r}^{*(1)}_{\tau,h_{CV}}, \cdots, \hat{r}^{*(B)}_{\tau,h_{CV}}.
$$
\item
Compute the quantile 
$$
\hat{s}_{1-\alpha} = \hat{G}^{-1}(1-\alpha), \quad \hat{G}(s) = \frac{1}{B}\sum_{j=1}^BI\left(\|\hat{r}^{*(j)}_{\tau,h_{CV}}-\hat{r}_{\tau, h_{CV}}\|_{\infty}<s\right).
$$
\item 
Output the confidence band 
$$
 \hat{C}^R_{1-\alpha}(x) = \left[\hat{r}_{\tau, h}(x) - \hat{s}_{1-\alpha},\,\, \hat{r}_{\tau, h}(x) + \hat{s}_{1-\alpha}\right]
$$
\end{enumerate}
\end{center}
}}
\caption{Confidence bands of the regression function.}
\label{fig::alg::reg}
\end{figure}

Now we turn to the confidence band for the regression function $r(x)$.
Again we propose using the empirical bootstrap (in the regression case it is also
known as the paired bootstrap) to estimate $r(x)$.
Other bootstrap methods, such as the multiplier bootstrap (also known as the wild bootstrap; \citealt{wu1986jackknife})
or the residual bootstrap \citep{freedman1981bootstrapping}, will also work under slightly different assumptions.
Recall that $\hat{r}_{\tau,h}(x)$ is the debiased local linear smoother.

Given the original sample $(X_1,Y_1),\cdots,(X_n,Y_n)$, we generate a bootstrap sample, denoted as
$(X_1^*,Y_1^*),\cdots,(X_n^*,Y_n^*)$. 
Then we compute the debiased local linear smoother using the bootstrap sample 
to get the bootstrap debiased local linear smoother $\hat{r}^*_{\tau,h}(x)$.
Let 
$\hat{s}_{1-\alpha}$ be the $(1-\alpha)$ quantile of 
the distribution
$$
\hat{G}(s) = P\left(\|\hat{r}^*_{\tau,h}-\hat{r}_{\tau,h}\|_{\infty}< s|X_1,\cdots,X_n\right). 
$$
Then a $(1-\alpha)$ confidence band of $r(x)$ is
$$
 \hat{C}^R_{1-\alpha}(x) = \left[\hat{r}_{\tau, h}(x) - \hat{s}_{1-\alpha}, \,\,\hat{r}_{\tau, h}(x) + \hat{s}_{1-\alpha}\right].
$$
That is,
the confidence band is the debiased local linear smoother plus/minus the bootstrap quantile. 
The bottom left panel of Figure~\ref{fig::ex01} shows an example of the confidence band.

In Theorem \ref{thm::LS_CI}, we prove that $\hat{r}_{\tau,h}\pm \hat{s}_{1-\alpha}$ is indeed
an asymptotic $1-\alpha$ confidence band of the regression function $r(x)$ when $h\rightarrow 0, \frac{nh^5}{\log n}\rightarrow c_0\geq 0$ for some $c_0$ bounded and some other regularity conditions for bandwidth hold.
i.e.
$$
P\left( r(x) \in \hat{C}^R_{1-\alpha}(x) \,\,\forall x\in\D \right) = 1-\alpha+o(1).
$$
The condition on smoothing bandwidth 
is compatible with the optimal rate of the usual local linear smoother ($h = O(n^{-1/5})$) \citep{li2004cross, xia2002asymptotic}.
Thus, we suggest choosing the smoothing bandwidth by cross-validating the original local linear smoother.
This leads to a simple but valid confidence band. 
We can also use other bandwidth selectors such as those introduced in
Chapter 4 of \cite{fan1996local}; these methods all yield a bandwidth at rate $O(n^{-1/5})$,
which works for our approach.
Figure~\ref{fig::alg::reg} summarizes the above procedure of constructing a confidence band.

\subsubsection{Inference for Inverse Regression}	\label{sec::inv}

The debiased local linear smoother can be used to construct confidence sets of the inverse regression problem 
\citep{lavagnini2007statistical,bissantz2009asymptotic,birke2010confidence,tang2011two}.
Let $r_0$ be a given level, the inverse regression finds the collection of points $\cR$
such that 
$$
\cR  = \{x: r(x) = r_0\}.
$$
Namely, $R$ is the region of covariates such that the regression function $r(x)$ equals $r_0$, a fixed level.
Note that the inverse regression is also known as the calibration problem 
\citep{brown1993measurement,gruet1996nonparametric,weisberg2005applied} and regression level set \citep{cavalier1997nonparametric,Laloe2012}.

A simple estimator of $R$ is the plug-in estimator from the debiased local linear smoother:
$$
\hat{\cR}_{\tau, h} = \left\{x: \hat{r}_{\tau,h}(x)=r_0\right\}. 
$$
\cite{Laloe2012} proved that $\hat{\cR}_{\tau,h}$ is a consistent estimator of $\cR$ under smoothness assumptions.

To construct a confidence set of $\cR$, we propose the following bootstrap confidence set.
Recall that $\hat{r}^*_{\tau,h}(x)$ is the bootstrap debiased local linear smoother
and let
$$
\hat{\cR}^*_{\tau, h} = \left\{x: \hat{r}^*_{\tau,h}(x)=r_0\right\}
$$
be the plug-in estimator of $\cR$. 
Let $\hat{s}^{R}_{1-\alpha}$ be the $(1-\alpha)$ quantile of the distribution
$$
\hat{G}^R(s) = P\left(\Haus(\hat{\cR}_{\tau, h},\hat{\cR}_{\tau, h})< s|X_1,\cdots,X_n\right). 
$$
Then an asymptotic confidence set of $\cR$ is
$$
\hat{\cR}_{\tau, h}\oplus \hat{s}^{R}_{1-\alpha}= \{x\in\K: d(x,\hat{\cR}_{\tau,h})\leq \hat{s}^R_{1-\alpha}\}.
$$
In Theorem~\ref{thm::IR_CI}, we prove that $\hat{\cR}^*_{\tau, h}\oplus \hat{s}^{R}_{1-\alpha}$
is indeed an asymptotically valid $(1-\alpha)$ confidence set of $\cR$.

When $\cR$ contains only one element, say $x_0$, asymptotically
the estimator $\hat{\cR}_{\tau,h}$ will contain only one element $\hat{x}_0$. 
Moreover, 
$\sqrt{nh}(\hat{x}_0-x_0)$ converges to a mean $0$ normal distribution.
Thus, we can use the bootstrap $\hat{\cR}^*_{\tau, h}$ to estimate the variance of $\sqrt{nh}(\hat{x}_0-x_0)$
and use the asymptotic normality to construct
a confidence set. 
Namely, we use
$$
[\hat{x}_0 + z_{\alpha/2} \cdot \hat{\sigma}_R ,\,\,  \hat{x}_0 + z_{1-\alpha/2} \cdot\hat{\sigma}_R]
$$
as a confidence set of $x_0$,
where $z_{\alpha}$ is the $\alpha$ quantile of a standard normal distribution
and $\hat{\sigma}_R$ is the bootstrap variance estimate. 
We will also compare the coverage of confidence sets using this approach in
Section~\ref{sec::simulation}.

Similar to Remark \ref{rm::LV::alternative}, 
an alternative method of the confidence set of the inverse regression is given by inverting
the confidence and of the regression function:
$$
\left\{x: |\hat{m}_{\tau,h}(x)-r_0| <\hat{s}_{1-\alpha}\right\},
$$
where $\hat{s}_{1-\alpha}$ is the bootstrap $L_\infty$ metric of
the debiased local linear smoother (Section~\ref{sec::LS}). 
As long as we have an asymptotically valid confidence band of $m(x)$, 
the resulting confidence set of inverse regression is also asymptotically valid.

\cite{bissantz2009asymptotic} and \cite{birke2010confidence} suggested constructing
confidence sets of $\cR$ by undersmoothing.
However, undersmoothing is not compatible with many common bandwidth selectors
for regression analysis and the size will shrink at a slower rate.
On the other hand,
our method does not require any undersmoothing and later we will prove that
the smoothing bandwidth from cross-validation $h_{CV}$ is compatible with our method
(Theorem~\ref{thm::IR_CI}).
Thus, we can simply choose $h_{CV}$ as the smoothing bandwidth and
bootstrap the estimators to construct the confidence set.

\section{Theoretical Analysis}	\label{sec::theory}

\subsection{Kernel Density Estimator}	\label{sec::KDE}

For a multi-index vector $\beta =
(\beta_1,\ldots,\beta_d)$ of non-negative integers, we define
$|\beta| = \beta_1 + \beta_2 + \cdots + \beta_d$
and the corresponding derivative operator
\begin{equation}
D^\beta = \frac{\partial^{\beta_1}}{\partial x_1^{\beta_1}} \cdots 
\frac{\partial^{\beta_d}}{\partial x_d^{\beta_d}},
\label{eq::d1}
\end{equation}
where $D^\beta f$ is often written as $f^{[\beta]}$.
For a real number $\ell$, let $\lfloor\ell\rfloor$ be the largest integer strictly less than $\ell$.
For any given $\xi,L>0$,
we define the H\"older Class $\Sigma(\xi,L)$ 
(Definition 1.2 in \citealt{Tsybakov1997})
as the collection of functions such that
$$
\Sigma(\xi,L) =\left\{f: |f^{[\beta]}(x)-f^{[\beta]}(y)|\leq L|x-y|^{\xi-|\beta|}, \,\forall \beta\,\, s.t.\,\, |\beta|=  \lfloor\xi\rfloor\right\}.
$$

To derive the consistency of confidence bands/sets, we need the following assumptions.\\
{\bf Assumptions.}
\begin{itemize}
\item[(K1)] $K(x)$ is a second order kernel function, symmetric and has at least second-order bounded derivative
and
$$
\int \|x\|^2 K^{[\beta]}(\|x\|) dx <\infty, \qquad \int  \left(K^{[\beta]}(\|x\|)\right)^2 dx <\infty,
$$
where $K^{[\beta]}$ is partial derivative of $K$ with respect to the multi-index vector $\beta=(\beta_1,\cdots,\beta_d)$
and 
for $|\beta| \leq 2$.  

\item[(K2)] Let 
\begin{align*}
\mathcal{K}_\gamma &= \left\{y\mapsto K^{[\beta]}\left(\frac{\|x-y\|}{h}\right): x\in\mathbb{R}^d, |\beta|=\gamma, h>0\right\},
\end{align*}
where $K^{[\beta]}$ is defined in equation \eqref{eq::d1}
and $\mathcal{K}^*_\ell = \bigcup_{\gamma=0}^\ell \mathcal{K}_\gamma$. 
We assume that $\mathcal{K}^*_2$ is a VC-type class. i.e., 
there exist constants $A,v$, and a constant envelope $b_0$ such that
\begin{equation}
\sup_{Q} N(\mathcal{K}^*_2, \cL^2(Q), b_0\epsilon)\leq \left(\frac{A}{\epsilon}\right)^v,
\label{eq::VC}
\end{equation}
where $N(T,d_T,\epsilon)$ is the $\epsilon$-covering number for a
semi-metric set $T$ with metric $d_T$ and $\cL^2(Q)$ is the $L_2$ norm
with respect to the probability measure $Q$.
\item[(P)] The density function $p$  is bounded and in H\"older Class $\Sigma(2+\delta_0,L_0)$ 
for some constant $L_0>0$ and $2\geq\delta_0> 2/3$ with a compact support $\mathbb{K} \subset \mathbb{R}^d$.  
Further, for any $x_0$ on the boundary of $\K$,
$p(x_0) =0$ and $\nabla p(x_0) = 0$.
\item[(D)] The gradient on the level set $D=\{x: p(x)=\lambda\}$ is bounded from zero; i.e.,
$$
\inf_{x\in D} \|\nabla p(x)\|\geq g_0>0
$$
for some $g_0$.
\end{itemize}

(K1) is a common and mild condition on kernel functions \citep{wasserman2006all,scott2015multivariate}. 
The specific form of bias estimation depends on the order of the kernel function. 
(K2) is also a weak assumption to control the complexity of kernel functions so  
we have uniform consistency on density, gradient, and Hessian estimation
\citep{Gine2002,Einmahl2005,genovese2009path,genovese2014nonparametric,chen2015asymptotic}.
Note that many common kernel functions, such as the Gaussian kernel, satisfy this assumption.
(P) involves two parts; a smoothness assumption and a boundary assumption.
We can interpret the smoothness assumption as requiring a smooth second-order derivative of the density function.
Note that the lower bound on $\delta_0$ ($\delta_0>2/3$) is to make sure the bias of a debiased estimator
is much smaller than the stochastic variation so our confidence band is valid. 
When $\delta_0>2$, our procedure is still valid but the bias of the debiased KDE 
will be at rate $O(h^4)$ and will not be of a higher order. 
The boundary conditions of (P) are needed 
to regularize the bias on the boundary. 
(D) is a common assumption in the level set estimation literature to ensure level sets are $(d-1)$ dimensional hypersurfaces; see, e.g.,
\cite{Cadre2006}, \cite{chen2017density}, and \cite{qiao2017asymptotics}.

Our first result is the pointwise bias and variance of the debiased KDE.
\begin{lem}[Pointwise bias and variance]
Assume (K1) and (P)
and $\tau \in(0,\infty)$ is fixed.
Then the bias and variance of $\hat{p}_{\tau,h}$ is at rate
\begin{align*}
\E\left(\hat{p}_{\tau,h}(x)\right) - p(x) &= O(h^{2+\delta_0})\\
{\sf Var}(\hat{p}_{\tau,h}(x)) & = O\left(\frac{1}{nh^d} \right).
\end{align*}
\label{lem::KDE_BV}
\end{lem}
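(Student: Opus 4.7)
My plan is to compute the bias and variance directly from the kernel representation $\hat{p}_{\tau,h}(x) = (nh^d)^{-1}\sum_i M_\tau((x-X_i)/h)$, treating $M_\tau$ as a single (signed) kernel and exploiting the vanishing-moment identities that the debiasing is engineered to produce. For the bias, I would first change variables to write
\begin{equation*}
\E[\hat{p}_{\tau,h}(x)] = h^{-d}\int M_\tau\!\bigl((x-y)/h\bigr) p(y)\,dy = \int M_\tau(u)\, p(x+hu)\,du,
\end{equation*}
using $u=(y-x)/h$ together with the symmetry of $M_\tau$ (inherited from symmetry of $K$, hence of $K^{(2)}$). I would then Taylor-expand $p(x+hu)$ through second order, leveraging assumption (P):
\begin{equation*}
p(x+hu) = p(x) + h u^{\top}\nabla p(x) + \tfrac{h^2}{2} u^{\top}\nabla^2 p(x)\,u + R(x,hu),
\end{equation*}
where the H\"older condition on $\nabla^2 p$ gives $|R(x,hu)|\leq L_0 (h\|u\|)^{2+\delta_0}$.

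The heart of the argument is to establish three moment identities for $M_\tau$: (i) $\int M_\tau = 1$, because $\int K^{(2)}(\tau u)\,du=0$ by the divergence theorem applied to the decaying kernel; (ii) $\int u\,M_\tau(u)\,du = 0$, by symmetry; and (iii) $\int u_j u_k\,M_\tau(u)\,du = 0$, which is the debiasing identity the construction is tuned for. Identity (iii) follows from integrating $\int u_j u_k K^{(2)}(\tau u)\,du$ by parts twice, which produces a constant multiple of $\delta_{jk}$ that is designed to cancel the $c_K$ contribution from the original KDE term. These three identities annihilate the constant, first-order, and second-order parts of the expansion, leaving only the remainder $\int M_\tau(u) R(x,hu)\,du$, which is $O(h^{2+\delta_0})$ once (K1) is used to ensure $\int \|u\|^{2+\delta_0}|M_\tau(u)|\,du < \infty$.

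For the variance, since $\hat{p}_{\tau,h}(x)$ is an average of i.i.d.\ summands,
\begin{equation*}
{\sf Var}(\hat{p}_{\tau,h}(x)) \leq \frac{1}{n h^{2d}}\,\E\bigl[M_\tau^2\bigl((x-X_1)/h\bigr)\bigr] = \frac{1}{nh^d}\int M_\tau^2(u)\, p(x+hu)\,du.
\end{equation*}
As $h \to 0$, the integral converges to $p(x)\int M_\tau^2(u)\,du$ by dominated convergence (using boundedness of $p$ from (P)), and this constant is finite thanks to the squared-integrability of $K$ and $K^{(2)}$ in (K1) together with $\tau$ being a fixed finite positive number. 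This yields ${\sf Var}(\hat{p}_{\tau,h}(x)) = O((nh^d)^{-1})$.

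The main obstacle is identity (iii): the second-moment cancellation that justifies calling $\hat{p}_{\tau,h}$ a debiased estimator. Establishing it cleanly requires two integration-by-parts steps on the $K^{(2)}$ piece, a change of variables $v=\tau u$ to absorb $\tau$, and careful bookkeeping of the constant $c_K$ so that the leading-order bias of the original KDE is exactly cancelled. Once that identity is in hand, the remainder of the argument is a routine KDE-style calculation.
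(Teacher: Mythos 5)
Your proof is correct, but it takes a different route from the paper's. The paper never works with $M_\tau$ as a single kernel: it writes $\hat{p}_{\tau,h}=\hat{p}_h-\tfrac{1}{2}c_K h^2\hat{p}^{(2)}_b$, quotes the standard expansions $\E(\hat{p}_h(x))=p(x)+\tfrac{1}{2}c_Kh^2p^{(2)}(x)+O(h^{2+\delta_0})$ and $\E(\hat{p}^{(2)}_b(x))=p^{(2)}(x)+O(b^{\delta_0})$, and observes that the $h^2$ terms cancel, leaving $O(h^{2+\delta_0}+h^2b^{\delta_0})=O(h^{2+\delta_0})$ since $\tau=h/b$ is fixed; the variance is handled by bounding ${\sf Var}(\hat{p}_h)$, $h^4{\sf Var}(\hat{p}^{(2)}_b)$, and the cross term via Cauchy--Schwarz, with $h^4/(nb^{d+4})=\tau^{d+4}/(nh^d)$. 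You instead Taylor-expand $p$ against $M_\tau$ directly and prove the vanishing-moment identities, i.e.\ you actually verify that $M_\tau$ is a fourth-order kernel --- a fact the paper asserts in the discussion but never proves. Your route is more self-contained (it does not import the bias expansion of the derivative estimator as a black box) and it feeds directly into the empirical-process analysis over $\cF_{\tau,h}$ used later; the paper's route is shorter given standard KDE facts. One bookkeeping point you correctly anticipated: your identity (iii) closes only with the factor $\tfrac{1}{2}$ in front of $c_K\tau^{d+2}K^{(2)}(\tau x)$ (integrating $\int v_jv_k\Delta K(v)\,dv$ by parts twice yields $2\delta_{jk}$), which is the normalization the paper's proof uses even though its displayed definition of $M_\tau$ in equation \eqref{eq::M} drops the $\tfrac{1}{2}$. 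A minor caveat shared with the paper: controlling the remainder $\int M_\tau(u)R(x,hu)\,du$ at rate $O(h^{2+\delta_0})$ uses $\int\|u\|^{2+\delta_0}|M_\tau(u)|\,du<\infty$, which is slightly stronger than the second-moment condition literally stated in (K1), but this is implicit in the paper's ``standard derivation'' as well.
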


Lemma~\ref{lem::KDE_BV} is consistent with \cite{calonico2018effect} and it shows an interesting result:
the bias of the debiased KDE has rate $O(h^{2+\delta_0})$
and its stochastic variation has the same rate as the usual KDE. 
This means that the debiasing operation kicks the bias of the density estimator into
the next order and keeps the stochastic variation as the same order.
Moreover, this also implies that the optimal bandwidth for the debiased KDE
is $h = O(n^{-\frac{1}{d+4+2\delta_0}})$, which corresponds to oversmoothing the usual KDE.
This is because when $\tau$ is fixed, 
the debiased KDE is actually a KDE with a fourth-order kernel function \citep{calonico2018effect}. 
Namely, the debiased kernel $M_\tau$ is a fourth-order kernel function.
Thus, the bias is pushed to the order $O(h^{2+\delta_0})$ rather than the usual rate $O(h^2) $.

Using the empirical process theory, we can further derive the convergence rate under the $L_\infty$ error.
\begin{lem}[Uniform error rate of the debiased KDE]
Assume (K1-2) and (P) holds, and $\tau \in(0,\infty)$ is fixed, and $h = n^{-\frac{1}{\varpi}}$ for some $\varpi > 0$ such that $\frac{nh^{d+4}}{\log n}\rightarrow c_0 \geq 0$ for some $c_0$ bounded and $\frac{nh^d}{\log n} \rightarrow \infty$.
Then 
\begin{align*}
\left\|\hat{p}_{\tau,h} -p\right\|_{\infty} & = O(h^{2+\delta_0}) + O_P\left(\sqrt{\frac{\log n}{nh^d}}\right).
\end{align*}
\label{lem::KDE_infty}
\end{lem}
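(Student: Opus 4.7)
The plan is to split the error into bias and variance pieces,
$$
\hat{p}_{\tau,h}(x) - p(x) = \bigl(\E\hat{p}_{\tau,h}(x) - p(x)\bigr) + \bigl(\hat{p}_{\tau,h}(x) - \E\hat{p}_{\tau,h}(x)\bigr),
$$
and bound each uniformly in $x\in\K$. For the bias term, I would revisit the argument of Lemma~\ref{lem::KDE_BV}: the pointwise bound $\E\hat{p}_{\tau,h}(x)-p(x)=O(h^{2+\delta_0})$ is obtained by a Taylor expansion of $p$ around $x$ combined with the fact that $M_\tau$ is a fourth-order kernel (so the zeroth-, second-, and cross moments up to order $2$ are controlled exactly, and the remainder is dominated by the H\"older bound from assumption (P)). Since $p\in\Sigma(2+\delta_0,L_0)$ holds uniformly on the compact support $\K$, the same bound holds with a constant that does not depend on $x$, giving $\|\E\hat{p}_{\tau,h}-p\|_\infty = O(h^{2+\delta_0})$.

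For the stochastic part, I would recognise $\hat{p}_{\tau,h}(x)-\E\hat{p}_{\tau,h}(x)$ as an empirical process indexed by the class
$$
\cF_h = \left\{y\mapsto \tfrac{1}{h^d} M_\tau\!\left(\tfrac{x-y}{h}\right): x\in\K\right\}.
$$
Because $M_\tau(u)=K(u) - c_K\tau^{d+2} K^{(2)}(\tau u)$, each $f\in\cF_h$ is a fixed linear combination of one element of $\mathcal{K}_0$ (at bandwidth $h$) and one element of $\mathcal{K}_2$ (at bandwidth $h/\tau$, with $\tau$ fixed). By assumption (K2) the union $\mathcal{K}^*_2$ is VC-type, and VC-type is preserved under fixed linear combinations, so $\cF_h$ is VC-type with an envelope proportional to $h^{-d}$. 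The functions in $\cF_h$ have $L^2(P)$-norm of order $h^{-d/2}$ (variance $O(h^{-d})$ times the $O(1/n)$ from sample averaging, matching Lemma~\ref{lem::KDE_BV}). I would then apply the Talagrand-type uniform inequality for VC-type kernel classes of Gin\'e and Guillou (or equivalently Einmahl and Mason), which under the conditions $h\to 0$ and $nh^d/\log n\to\infty$ (implied by $nh^{d+4}/\log n\to c_0$ together with $h\to 0$, since then $nh^d = nh^{d+4}/h^4 \to\infty$) yields
$$
\sup_{x\in\K}\bigl|\hat{p}_{\tau,h}(x)-\E\hat{p}_{\tau,h}(x)\bigr| = O_P\!\left(\sqrt{\tfrac{\log n}{nh^d}}\right).
$$
Adding the two bounds gives the claim.

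The main obstacle is the careful verification that $\cF_h$ inherits the VC-type property with explicit constants and envelope so that the Gin\'e--Guillou bound applies cleanly; the subtlety is that $M_\tau$ mixes a kernel at bandwidth $h$ with a second-derivative kernel at bandwidth $b=h/\tau$, and one must check that assumption (K2), which controls $\mathcal{K}^*_2$ uniformly over all bandwidths, covers this mixed class. Once that is done, the rest is standard bookkeeping: the variance envelope $\sigma_h^2 \asymp h^{-d}$ and the sup-norm envelope $U_h \asymp h^{-d}$ plug into the maximal inequality and produce exactly the $\sqrt{\log n/(nh^d)}$ rate. I would also note, in passing, that the condition $nh^{d+4}/\log n\to c_0\geq 0$ only appears to balance bias against stochastic variation when one later wants both terms to be of comparable order; the lemma itself only needs $h\to 0$ and $nh^d/\log n\to\infty$ for the empirical-process bound, the latter being automatic under the stated hypothesis.
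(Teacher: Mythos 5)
Your proof follows essentially the same route as the paper's: the bias is handled uniformly via the argument of Lemma~\ref{lem::KDE_BV}, and the stochastic term is bounded by viewing $\hat{p}_{\tau,h}-\E\hat{p}_{\tau,h}$ as an empirical process indexed by the VC-type class generated by $M_\tau$ (inherited from (K2)) and invoking the Gin\'e--Guillou uniform-in-bandwidth rate. The only quibble is your closing remark that $nh^d/\log n\to\infty$ is ``automatic'' from the stated hypothesis: when $c_0=0$ the convergence $nh^{d+4}/\log n\to 0$ does not by itself force $nh^d/\log n\to\infty$, although the paper makes the same implicit assumption when it applies the Gin\'e--Guillou bound.
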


To obtain a confidence band, we need to study the $L_\infty$ error
of the estimator $\hat{p}_{\tau,h}$.
Recall from \eqref{eq::dKDE},
\begin{align*}
\hat{p}_{\tau,h}(x)
& = \frac{1}{nh^d}\sum_{i=1}^n M_\tau\left(\frac{x-X_i}{h}\right)\\
& = \frac{1}{h^d}\int M_\tau\left(\frac{x-y}{h}\right) d\P_n(y).
\end{align*}
Lemma~\ref{lem::KDE_BV} implies 
$$
\E\left(\hat{p}_{\tau,h}(x)\right) = \frac{1}{h^d}\int M_\tau\left(\frac{x-y}{h}\right) d\P(y) = p(x) + O(h^{2+\delta_0}).
$$

Using the notation of empirical process and defining $f_x(y) = \frac{1}{\sqrt{h^d}}M_\tau\left(\frac{x-y}{h}\right)$,
we can rewrite the difference
$$
\hat{p}_{\tau,h}(x) -p(x) = \frac{1}{\sqrt{h^d}}\left(\P_n(f_x) - \P(f_x)\right) + O(h^{2+\delta_0}).
$$
Therefore,
\begin{equation}
\sqrt{nh^d}\left(\hat{p}_{\tau,h}(x) -p(x)\right) = \G_n(f_x) + O(\sqrt{nh^{d+4+2\delta_0}}) = \G_n(f_x) + o(1)
\label{eq::emp1}
\end{equation}
when $\frac{nh^{d+4}}{\log n}\rightarrow c_0$ for some $c_0\geq0$ bounded.
Based on the above derivations, we define the function class
$$
\mathcal{F}_{\tau,h} = \left\{f_x(y)=\frac{1}{\sqrt{h^d}}M_\tau\left(\frac{x-y}{h}\right): x\in\K\right\}.
$$
By using the Gaussian approximation method of
\cite{chernozhukov2014anti,chernozhukov2014gaussian},
we derive the asymptotic behavior of $\hat{p}_{\tau,h}$.

\begin{thm}[Gaussian approximation]
Assume (K1-2) and (P). 
Assume $\tau \in(0,\infty)$ is fixed, and $h = n^{-\frac{1}{\varpi}}$ for some $\varpi > 0$ such that $\frac{nh^{d+4}}{\log n}\rightarrow c_0 \geq 0$ for some $c_0$ bounded and $\frac{nh^d}{\log n} \rightarrow \infty$.
Then there exists a Gaussian process $\B_n$ defined on $\mathcal{F}_{\tau,h}$ such that 
for any $f_1,f_2\in \mathcal{F}_{\tau,h}$, $\E(\B_n(f_1)\B_n(f_2)) = {\sf Cov}\left(f_1(X_i), f_2(X_i)\right)$
and
$$
\sup_{t\in\mathbb{R}}\left|\P\left(\sqrt{nh^d}\left\|\hat{p}_{\tau,h} -p\right\|_{\infty}\leq t\right) - \P\left(\sup_{f\in\mathcal{F}_{\tau,h}}\|\B_n(f)\|\leq t\right)\right|
= O\left(\left(\frac{\log^7 n}{nh^d}\right)^{1/8}\right).
$$
\label{thm::KDE_gaussian}
\end{thm}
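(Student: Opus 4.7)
The plan is to represent $\sqrt{nh^d}(\hat p_{\tau,h}-p)$ as an empirical process indexed by $\mathcal{F}_{\tau,h}$ plus a deterministic bias which, under the stated bandwidth rate, is $o(1)$ uniformly, and then invoke the Gaussian approximation theory of \cite{chernozhukov2014anti,chernozhukov2014gaussian} for suprema of empirical processes indexed by VC-type classes.

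First I would center and rescale. By Lemma~\ref{lem::KDE_BV}, $\sup_x |\E\hat p_{\tau,h}(x)-p(x)| = O(h^{2+\delta_0})$, so
$$\sqrt{nh^d}\,\sup_x \bigl|\E\hat p_{\tau,h}(x)-p(x)\bigr| = O\bigl(\sqrt{nh^{d+4+2\delta_0}}\bigr) = O\bigl(h^{\delta_0}\sqrt{\log n}\bigr) = o(1),$$
using $nh^{d+4}=O(\log n)$ and $h\to 0$. Combined with \eqref{eq::emp1}, this reduces the task to a Gaussian coupling for $\G_n$ on $\mathcal{F}_{\tau,h}$; the $o(1)$ bias will be absorbed into the Kolmogorov distance at the end via Gaussian anti-concentration.

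Next I would verify the hypotheses of the CCK coupling for the class $\mathcal{F}_{\tau,h}$. Since $M_\tau = K - c_K\tau^{d+2}K^{(2)}(\tau\,\cdot)$ is a fixed linear combination of a kernel and its Laplacian, assumption (K2) implies that $\mathcal{F}_{\tau,h}$ (and also $\mathcal{F}_{\tau,h}\cup(-\mathcal{F}_{\tau,h})$, used to handle $|\cdot|$) is VC-type with characteristics $(A',v)$ independent of $h$ and with constant envelope $F_{\tau,h} \leq C h^{-d/2}$. A change of variables gives $\E f_x(X)^2 = \int M_\tau(u)^2 p(x-hu)\,du = O(1)$, so $\sigma_n^2 := \sup_{f\in\mathcal{F}_{\tau,h}}\E f^2$ is bounded. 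The CCK Gaussian coupling result then yields a tight Gaussian process $\B_n$ with the stated covariance such that
$$\sup_t\,\Bigl|\P\Bigl(\sup_f |\G_n(f)|\leq t\Bigr) - \P\Bigl(\sup_f |\B_n(f)|\leq t\Bigr)\Bigr| = O\left(\left(\frac{F_{\tau,h}^2\log^7 n}{n}\right)^{1/8}\right),$$
which equals $O\bigl((\log^7 n/(nh^d))^{1/8}\bigr)$ after substituting the envelope bound. Finally I would transfer the coupling to the CDF of $\sqrt{nh^d}\|\hat p_{\tau,h}-p\|_\infty$ by absorbing the $o(1)$ bias using the Gaussian anti-concentration inequality for $\sup_f|\B_n(f)|$, whose concentration function is only polylogarithmic and so contributes a perturbation dominated by the CCK rate.

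The main obstacle is the VC-type verification: one must confirm that $\mathcal{F}_{\tau,h}$ inherits the covering number bound from $\mathcal{K}_2^*$ with constants uniform in $h$, and that the envelope magnitude $h^{-d/2}$ plugs into the CCK bound to yield exactly the exponent $1/8$ with $\log^7 n$ appearing in the numerator. A secondary, more mechanical task is checking the regularity conditions of \cite{chernozhukov2014gaussian} (boundedness of $\sigma_n$, the pre-Gaussianization moment bounds on $F_{\tau,h}$, and measurability of the supremum), all of which follow from (K1), (K2), and (P). The remainder is bookkeeping combining Lemma~\ref{lem::KDE_BV} with the bandwidth assumption to kill the bias before the coupling step.
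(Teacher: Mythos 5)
Your proposal follows essentially the same route as the paper's proof: reduce to the centered process using the bias bound and the bandwidth condition, apply the Chernozhukov--Chetverikov--Kato Gaussian coupling for suprema of empirical processes over the VC-type class $\mathcal{F}_{\tau,h}$ with envelope $O(h^{-d/2})$, and convert to a Kolmogorov-distance bound via anti-concentration (the paper makes the last step explicit by optimizing the coupling tolerance $\gamma$ against Dudley's bound $\E\sup_f\|\B_n(f)\| = O(\sqrt{\log n})$ to land on the exponent $1/8$ and the $\log^7 n$ factor). The argument is correct and your added detail on verifying the VC-type and envelope conditions is consistent with what the paper leaves implicit.
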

Theorem~\ref{thm::KDE_gaussian} shows that the $L_\infty$ metric can be approximated by
the distribution of the supremum of a Gaussian process.
The requirement on $h$, $\frac{ nh^{d+4}}{\log n}\rightarrow c_0 \geq 0$ for some $c_0$,
is very useful--it allows the case where $h=O(n^{-\frac{1}{d+4}})$, the optimal choice of smoothing bandwidth
of the usual KDE.
As a result,
we can choose the smoothing bandwidth
using standard receipts such as the 
the rule of thumb and least square cross-validation method \citep{Chacon2011,Silverman1986}.
A similar Gaussian approximation (and later the bootstrap consistency) 
also appeared in \cite{neumann1998simultaneous}.

Finally, we prove that the distribution of the bootstrap $L_\infty$ error $\|\hat{p}_{\tau,h}^*-\hat{p}_{\tau,h}\|_{\infty}$
approximates the distribution of the original $L_\infty$ error,
which leads to the validity of the bootstrap confidence band.
\begin{thm}[Confidence bands of density function]
Assume (K1-2) and (P). 
Assume $\tau \in(0,\infty)$ is fixed, and $h = n^{-\frac{1}{\varpi}}$ for some $\varpi > 0$ such that $\frac{nh^{d+4}}{\log n}\rightarrow c_0 \geq 0$ for some $c_0$ bounded and $\frac{nh^d}{\log n} \rightarrow \infty$.  
Let $\hat{t}_{1-\alpha}$ be the $1-\alpha$ quantile of the distribution of the bootstrapped $L_\infty$ metric; namely,
$$
\hat{t}_{1-\alpha} = \hat{F}^{-1}(1-\alpha),\quad
\hat{F}(t) = P\left(\left\|\hat{p}_{\tau,h}^*-\hat{p}_{\tau,h}\right\|_{\infty}<t|X_1,\cdots,X_n\right).
$$
Then define the $1 - \alpha$ confidence band $\hat{C}_{1 - \alpha}$ as 
$$
\hat{C}_{1 - \alpha}(x) = [\hat{p}_{\tau, h}(x) - \hat{t}_{1 - \alpha}, \hat{p}_{\tau, h}(x) + \hat{t}_{1 - \alpha}]
$$
we have 
$$
P\left( p(x) \in \hat{C}_{1-\alpha}(x)\,\, \forall x\in\K \right) 
=1-\alpha+ O\left(\left(\frac{\log^7 n}{nh^d}\right)^{1/8}\right).
$$
Namely, $\hat{C}_{1-\alpha}(x)$ is an asymptotically valid $1-\alpha$ confidence band of the density function $p$.

\label{thm::KDE_CI}
\end{thm}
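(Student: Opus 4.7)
The plan is to reduce the coverage probability to a statement about $L_\infty$ errors, and then to sandwich the bootstrap quantile between two Gaussian approximations plus an anti-concentration argument. Note first that $p(x)\in\hat{C}_{1-\alpha}(x)$ for all $x\in\K$ if and only if $\|\hat{p}_{\tau,h}-p\|_\infty\le\hat{t}_{1-\alpha}$, so after scaling by $\sqrt{nh^d}$ it suffices to compare the distribution of $T_n=\sqrt{nh^d}\|\hat{p}_{\tau,h}-p\|_\infty$ with the conditional distribution (given the data) of $T_n^{*}=\sqrt{nh^d}\|\hat{p}_{\tau,h}^{*}-\hat{p}_{\tau,h}\|_\infty$.

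First, I would appeal directly to Theorem~\ref{thm::KDE_gaussian}: there is a Gaussian process $\B_n$ on $\cF_{\tau,h}$ such that, writing $W_n=\sup_{f\in\cF_{\tau,h}}|\B_n(f)|$,
$$\sup_{t\in\R}\bigl|\P(T_n\le t)-\P(W_n\le t)\bigr|=O\bigl((\log^7 n/(nh^d))^{1/8}\bigr).$$
The condition $nh^{d+4}/\log n\to c_0$ together with Lemma~\ref{lem::KDE_BV} ensures $\sqrt{nh^d}\cdot h^{2+\delta_0}=O(\sqrt{\log n}\cdot h^{\delta_0})=o(1)$, so the bias is already absorbed into this approximation.

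Second, I would establish a bootstrap counterpart: conditionally on $X_1,\dots,X_n$,
$$\sup_{t\in\R}\bigl|\P(T_n^{*}\le t\mid X_1,\dots,X_n)-\P(W_n\le t)\bigr|=O_\P\bigl((\log^7 n/(nh^d))^{1/8}\bigr).$$
The proof parallels that of Theorem~\ref{thm::KDE_gaussian}, using the Chernozhukov--Chetverikov--Kato bootstrap coupling for empirical processes indexed by VC-type classes together with the fact that, by Lemma~\ref{lem::KDE_infty} applied to the constant and product classes, the bootstrap covariance $\hat\P_n(f_x f_y)-\hat\P_n(f_x)\hat\P_n(f_y)$ converges to $\P(f_x f_y)-\P(f_x)\P(f_y)$ uniformly over $\cF_{\tau,h}\times\cF_{\tau,h}$ at the required rate. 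The VC-type bound \eqref{eq::VC} on $\cM_\tau$-class kernels (inherited from (K2)) supplies the entropy control that makes the CCK machinery applicable.

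Third, let $t^{\star}_{1-\alpha}$ denote the $(1-\alpha)$-quantile of $W_n$. Combining the two displays above with the definition of $\hat{t}_{1-\alpha}$ shows that $|\hat{t}_{1-\alpha}-t^{\star}_{1-\alpha}|$ can be absorbed into an $O_\P((\log^7 n/(nh^d))^{1/8})$ shift in the threshold. The anti-concentration inequality for suprema of Gaussian processes (Chernozhukov--Chetverikov--Kato) states that $\sup_{t}\P(|W_n-t|\le\varepsilon)\lesssim\varepsilon\sqrt{\log n}$; applied with $\varepsilon$ equal to the approximation error, this swap contributes only $O((\log^7 n/(nh^d))^{1/8})$ to the coverage probability. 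Chaining the three approximations then yields $\P(T_n\le\hat{t}_{1-\alpha})=1-\alpha+O((\log^7 n/(nh^d))^{1/8})$, which is the claim.

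The main obstacle will be the conditional Gaussian approximation in the second step. Theorem~\ref{thm::KDE_gaussian} gives the unconditional version, but transferring it to the empirical bootstrap requires controlling the bootstrap covariance operator uniformly over a class whose envelope and bracketing entropy both depend on $h$; this is delicate because $h\to 0$ inflates the envelope while the sample size only grows polynomially relative to $\log n$. Once the VC-type structure of $\cF_{\tau,h}$ is leveraged to bound the relevant suprema, the rest is a bookkeeping exercise.
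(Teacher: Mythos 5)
Your proposal is correct and follows essentially the same route as the paper's proof: reduce coverage to a Kolmogorov-distance comparison between $\sqrt{nh^d}\|\hat{p}_{\tau,h}-p\|_\infty$ and the conditional bootstrap statistic, apply Theorem~\ref{thm::KDE_gaussian} on one side and its bootstrap analogue (Chernozhukov--Chetverikov--Kato coupling plus convergence of the sample covariance, i.e.\ the Gaussian comparison step) on the other, and close with anti-concentration to absorb the quantile shift. The only cosmetic difference is that the paper makes the intermediate Gaussian process $\tilde{\B}_n$ with sample covariance explicit before comparing it to $\B_n$, whereas you fold that comparison into a single conditional approximation statement.
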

Theorem~\ref{thm::KDE_CI} proves that bootstrapping the debiased KDE leads to an asymptotically valid confidence 
band of $p$.
Moreover, we can choose the smoothing bandwidth at rate $h=O(n^{-\frac{1}{d+4}})$,
which is compatible with most bandwidth selectors.
This shows that bootstrapping the debiased KDE
yields a confidence band with width shrinking at rate $O_P(\sqrt{\log n} \cdot n^{-\frac{2}{d+4}})$,
which is not attainable if we undersmooth the usual KDE.

{
Note that our confidence band has a coverage error $O\left(\left(\frac{\log^7 n}{nh^d}\right)^{1/8}\right)$,
which is due to the stochastic variation of the estimator. The bias of the debiased estimator
is of a smaller order so it does not appear in the coverage error. 
When the bias and the stochastic variation are of a similar order, there will be an additional term
from the bias and one may be able to choose the bandwidth by optimizing the coverage error \citep{calonico2015effect}.  
However, deriving the influence of bias is not easy since the limiting distribution does not have
a simple form like a Gaussian. 
}

\begin{remark}
The bootstrap consistency given in Theorem~\ref{thm::KDE_CI} 
shows that our method may be very useful in topological data analysis 
\citep{carlsson2009topology,edelsbrunner2012persistent,wasserman2018topological}.
Many statistical inferences of topological features of a density function
are accomplished by bootstrapping the $L_\infty$ distance \citep{fasy2014confidence,chazal2014stochastic,chen2016generalized,jisu2016statistical}.
However, most current approaches 
consider bootstrapping the original KDE so the inference is for the topological features of the `smoothed' density function
rather than the features of the original density function $p$. 
By bootstrapping the debiased KDE, we can construct confidence 
sets for the topological features of $p$. 
In addition, the assumption (P) in topological data analysis is reasonable
because
many 
topological features are related to the critical points (points where the density gradient is $0$)
and the curvature at these points (eigenvalues of the density Hessian matrix). 
To guarantee consistency when estimating these structures, we 
need to assume more smoothness of the density function, so (P) is a very mild assumption when we want to infer topological features.
\end{remark}

\begin{remark}	\label{rm::honest}
By a similar derivation as \cite{chernozhukov2014anti},
we can prove that $\hat{C}_{1-\alpha}(x)$
is a honest confidence band of the H\"older class $\Sigma(2+\delta_0,L_0)$ for some $\delta_0, L_0>0$.
i.e.,
$$
\inf_{p \in \Sigma(2+\delta_0,L_0)} P\left( p(x) \in \tilde{C}_{1-\alpha}(x)\,\, \forall x\in\K \right) 
=1-\alpha+ O\left(\left(\frac{\log^7 n}{nh^d}\right)^{1/8}\right).
$$

For a H\"older class $\Sigma(2+\delta_0,L_0)$, 
the optimal width of the confidence band will be at rate $O\left(n^{-\frac{1+\frac{\delta_0}{2}}{d+4+2\delta_0}}\right)$ \citep{Tsybakov1997}.
With $h=O(n^{-\frac{1}{d+4}})$, the width of our confidence band 
is at rate $O_P(\sqrt{\log n} \cdot n^{-\frac{2}{d+4}})$,
which is suboptimal when $\delta_0$ is large.
However, when $\delta_0$ is small, the size of our confidence band shrinks almost at the same rate
as the optimal confidence band.
\end{remark}

\begin{remark}
The correction in the bootstrap coverage, $O\left(\left(\frac{\log^7 n}{nh^d}\right)^{1/8}\right)$, 
is not optimal. 
\cite{chernozhukov2017central} introduced an induction method to obtain
a rate of $O(n^{-1/6})$ for bootstrapping high dimensional vectors.
We believe that one can apply a similar technique to obtain a coverage correction at rate 
$O\left(\left(\frac{\log^7 n}{nh^d}\right)^{1/6}\right)$.

\end{remark}

The Gaussian approximation also works for the Hausdorff error of the level set estimator $\hat{D}_{\tau,h}$
\citep{chen2017density}. 
Thus, bootstrapping the Hausdorff metric approximates the distribution of the actual Hausdorff error,
leading to the following result.

\begin{thm}[Confidence set of level sets]
Assume (K1-2), (P), (D), and
$\tau \in(0,\infty)$ is fixed, and $h = n^{-\frac{1}{\varpi}}$ for some $\varpi > 0$ such that $\frac{nh^{d+4}}{\log n}\rightarrow c_0 \geq 0$ for some $c_0$ bounded and $\frac{nh^{d+2}}{\log n} \rightarrow \infty$.
Recall that $C^{LV}_{n,1-\alpha} = \hat{D}_{\tau,h}\oplus \hat{s}_{1-\alpha}$.
Then 
$$
P\left(D\subset C^{LV}_{n,1-\alpha}\right)
=1-\alpha+ O\left(\left(\frac{\log^7 n}{nh^d}\right)^{1/8}\right).
$$
Namely, $C^{LV}_{n,1-\alpha}$ is an asymptotic confidence set of the level set $D=\{x: p(x)=\lambda\}$.
\label{thm::LV_CI}
\end{thm}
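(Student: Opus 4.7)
The plan is to reduce the coverage statement for the level set to a Gaussian approximation on $D$, analogous to the $L_\infty$ approximation in Theorem~\ref{thm::KDE_gaussian} but supported on the $(d-1)$-dimensional hypersurface $D$. First, I would show that under (D), when the density estimator is close to $p$ in $C^1$ norm, the Hausdorff distance between $\hat{D}_{\tau,h}$ and $D$ admits the first-order expansion
\begin{equation*}
\Haus(\hat{D}_{\tau,h}, D) = \sup_{x\in D}\frac{|\hat{p}_{\tau,h}(x)-p(x)|}{\|\nabla p(x)\|} + \text{remainder},
\end{equation*}
with a remainder of smaller order than the leading term. This is the standard argument (as in \cite{chen2015density}): project each point of $\hat{D}_{\tau,h}$ onto $D$ along the gradient direction of $p$, and use the implicit function theorem together with the gradient lower bound $g_0$. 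The additional bandwidth condition $\log n/(nh^{d+2})\to 0$ is exactly what is needed to ensure $\|\nabla\hat{p}_{\tau,h}-\nabla p\|_\infty\to 0$ so that $\hat{D}_{\tau,h}$ is close enough to $D$ for this expansion to be valid.

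Second, I would apply the Gaussian approximation machinery of \cite{chernozhukov2014anti,chernozhukov2014gaussian} to the restricted function class
\begin{equation*}
\mathcal{F}_{\tau,h}^D = \left\{f_x(y)/\|\nabla p(x)\|: x\in D\right\}\subset \mathcal{F}_{\tau,h},
\end{equation*}
paralleling the proof of Theorem~\ref{thm::KDE_gaussian}. Because $\mathcal{F}_{\tau,h}^D$ is a weighted subclass of $\mathcal{F}_{\tau,h}$ and the weight $1/\|\nabla p(x)\|$ is bounded by (D), the VC-type entropy bound carries over and the same rate $O((\log^7 n / nh^d)^{1/8})$ is obtained. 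This produces a Gaussian process $\B_n^D$ on $D$ such that $\sqrt{nh^d}\,\Haus(\hat{D}_{\tau,h}, D)$ is approximated by $\sup_{f\in\mathcal{F}_{\tau,h}^D}|\B_n^D(f)|$.

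Third, I would run the same argument for the bootstrap: by the bootstrap Gaussian approximation underlying Theorem~\ref{thm::KDE_CI}, the conditional distribution of $\sqrt{nh^d}\,\|\hat{p}^*_{\tau,h}-\hat{p}_{\tau,h}\|_\infty$ is close to the same Gaussian supremum, and the localized version on $\hat{D}_{\tau,h}$ matches $\sup_{f\in\mathcal{F}_{\tau,h}^D}|\B_n^D(f)|$ up to the same $O((\log^7 n/nh^d)^{1/8})$ error. Combined with an anti-concentration inequality for the supremum of $\B_n^D$ (again from \cite{chernozhukov2014anti}), the bootstrap quantile $\hat{t}^{LV}_{1-\alpha}$ consistently estimates the $(1-\alpha)$ quantile of $\Haus(\hat{D}_{\tau,h},D)$, which yields
\begin{equation*}
P\left(\Haus(\hat{D}_{\tau,h},D) \le \hat{t}^{LV}_{1-\alpha}\right) = 1-\alpha + O\bigl((\log^7 n / nh^d)^{1/8}\bigr).
\end{equation*}
Since $D\subset \hat{D}_{\tau,h}\oplus r$ is equivalent to $\sup_{x\in D}d(x,\hat{D}_{\tau,h})\le r$, which is dominated by $\Haus(\hat{D}_{\tau,h},D)\le r$, the claim follows.

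The main obstacle I expect is the first step: controlling the remainder in the Hausdorff expansion uniformly and showing that the bootstrap level set $\hat{D}^*_{\tau,h}$ admits the \emph{same} linearization around $\hat{D}_{\tau,h}$ (so that the bootstrap Hausdorff distance is approximated by the supremum of the bootstrap empirical process weighted by $1/\|\nabla\hat{p}_{\tau,h}\|$, which in turn is close to the $1/\|\nabla p\|$ weighting needed to match the target distribution). This requires careful uniform control of both $\nabla\hat{p}_{\tau,h}$ and $\nabla\hat{p}_{\tau,h}^*$, which is precisely why the additional condition $\log n/(nh^{d+2})\to 0$ is imposed in the theorem.
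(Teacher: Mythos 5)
Your proposal follows the same route the paper takes (and explicitly defers to Theorem 4 of \cite{chen2015density} for): linearize the Hausdorff distance via the gradient lower bound in (D) to a weighted supremum of the empirical process over $D$, apply the Gaussian approximation of Theorem~\ref{thm::KDE_gaussian} to the restricted weighted class, and match the bootstrap Hausdorff distance to the same Gaussian supremum. You also correctly identify the role of the extra condition $\log n/(nh^{d+2})\to 0$ in guaranteeing uniform gradient consistency, so the proposal is consistent with the paper's argument.
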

The proof of Theorem~\ref{thm::LV_CI} is similar to the proof of Theorem 4 in \cite{chen2017density},
so we ignore it.
The key element in the proof is showing that 
the supremum of an empirical process approximates
the Hausdorff distance,
so we can approximate the Hausdorff distance using the supremum of a Gaussian process. 
Finally we show that the bootstrap Hausdorff distance converges to the same Gaussian process.

Theorem~\ref{thm::LV_CI} proves that the bootstrapping confidence set of the level set
is asymptotically valid.
Thus, bootstrapping the debiased KDE leads to not only a valid confidence band
of the density function but also a valid confidence set of the density level set.
Note that \cite{chen2017density} proposed bootstrapping the original level set estimator $\hat{D}_h = \{x: \hat{p}_h(x)=\lambda\}$,
which leads to a valid confidence set of the smoothed level set $D_h=\{x: \E\left(\hat{p}_h(x)\right)=\lambda\}$. 
However, their confidence set is not valid for inferring $D$
unless we undersmooth the data.

\subsection{Local Polynomial Regression}	\label{sec::LPR}

To analyze the theoretical behavior of the local linear smoother, we consider the following assumptions.\\
{\bf Assumptions.}
\begin{itemize}
\item[(K3)] Let 
\begin{align*}
\mathcal{K}^\dagger_\ell &= \left\{y\mapsto \left(\frac{x-y}{h}\right)^\gamma K\left(\frac{x-y}{h}\right): x\in\D, \gamma=0,\cdots, \ell, h>0\right\},
\end{align*}
We assume that $\mathcal{K}^\dagger_6$ is a VC-type class (see assumption (K2) for the formal definition). 
\item[(R1)] The density of covariate $X$, $p_X$, 
has compact support $\D\subset\R$ and $p_X(x)>0$ for all $x\in\D$. $\sup_{x \in \D}\E(|Y|^4|X=x)\leq C_0<\infty$. Moreover, $p_X$ is continuous and the regression function $r$
is in H\"older Class $\Sigma(2+\delta_0,L_0)$ for some constant $L_0>0$ and $2\geq \delta_0>2/3$.
\item[(R2)] At any point of $\cR$, the gradient of $r$ is nonzero, i.e.,
$$
\inf_{x\in \cR} \|r'(x)\| \geq g_1>0,
$$
for some $g_1$.
\end{itemize}

(K3) is the local polynomial version assumption of (K2), 
which is a mild assumption that any kernel with a compact support and the Gaussian kernel satisfy this assumption.
(R1) contains two parts. The first part
is a common assumption to guarantee the convergence rate of the local polynomial regression \citep{fan1996local,wasserman2006all}. 
The latter part of (R1) is analogous to (P), which is a very mild condition.
(R2) is an analogous assumption to (D) that is needed to derive the convergence rate of the inverse regression. 

\begin{lem}[Bias and variance of the debiased local linear smoother]
Assume (K1), (R1),
and $\tau \in(0,\infty)$ is fixed,  
Then the bias and variance of $\hat{r}_{\tau,h}$ for a given point $x$ is at rate
\begin{align*}
\E\left(\hat{r}_{\tau,h}(x)\right) - r(x) &= O(h^{2+\delta_0}) + O\left(\sqrt{\frac{h^3}{n}}\right)\\
{\sf Var}(\hat{r}_{\tau,h}(x)) & = O\left(\frac{1}{nh} \right).
\end{align*}
with $h = O(n^{-1/5})$, the rate for bias would be 
$$
\E\left(\hat{r}_{\tau,h}(x)\right) - r(x) = O(h^{2+\delta_0})\\
$$
\label{lem::LS_BV}
\end{lem}

Define
$\Omega_k\in\R^{(k+1)\times (k+1)}$
whose elements $\Omega_{k,ij} = \int u^{i+j-2}K(u)du$.
and define $e_1^T = (1,0)$ and $e_3^T = (0,0,1,0)$. 
Let $\psi_x:\R^2\mapsto \R$ be a function 
defined as 
\begin{equation}
\begin{aligned}
\psi_x(z) =\frac{1}{p_X(x)h}\left( e_1^T\Omega_1^{-1} \Psi_{0,x}(z) - 
c_K\cdot\tau^3\cdot e_3^T \Omega_3^{-1}\Psi_{2,\tau x}(\tau z_1, z_2)\right),
\end{aligned}
\label{eq::LS::function}
\end{equation}
and
\begin{equation}
\begin{aligned}
\Psi_{0,x}(z_1,z_2)^T &= (\eta_0(x,z_1,z_2), \eta_1(x,z_1,z_2))\\
\Psi_{2,x}(z_1,z_2)^T &= (\eta_0(x,z_1,z_2), \eta_1(x,z_1,z_2),\eta_2(x,z_1,z_2),\eta_3(x,z_1,z_2))\\
\eta_j(x,z_1,z_2) &=  z_2\cdot\left(\frac{z_1-x}{h}\right)^j\cdot K\left(\frac{z_1-x}{h}\right).
\end{aligned}
\label{eq::z1z2}
\end{equation}

\begin{lem}[Empirical approximation]
Assume (K1,3), (R1),
and $\tau \in(0,\infty)$ is fixed, and $h = n^{-\frac{1}{\varpi}}$ for some $\varpi > 0$ such that $\frac{nh^{5}}{\log n}\rightarrow c_0 \geq 0$ for some $c_0$ bounded and $\frac{nh}{\log n} \rightarrow \infty$. 
Then the scaled difference $\sqrt{nh}(\hat{r}_{\tau,h}(x)-E(\hat{r}_{\tau,h}(x)))$ has the following approximation:
\begin{align*}
\sup_{x\in\D}\left\|\frac{\sqrt{nh}\left(\hat{r}_{\tau,h}(x)- E(\hat{r}_{\tau, h}(x))\right) - \sqrt{h}\G_n(\psi_x)}{\frac{1}{\sqrt{h}} \G_n(e_1^T \Psi_{x,0} - c_K \cdot \tau^3 e_3^T \Psi_{2,\tau x})}\right\| = O(h)+ O_P\left(\sqrt{\frac{\log n}{nh}}\right),
\end{align*}
where $\psi_x(z) $ is defined in equation \eqref{eq::LS::function}.
Moreover, 
the debiased local linear smoother $\hat{r}_{\tau,h}(x)$ has the following error rate
$$
\|\hat{r}_{\tau,h} - r\|_{\infty} = O(h^{2+\delta_0}) + O\left(\sqrt{\frac{h^3}{n}}\right) + O_P\left( \sqrt{\frac{\log n}{nh}}\right).
$$
with $h = O(n^{-1/5})$
$$
\|\hat{r}_{\tau,h} - r\|_{\infty} = O(h^{2+\delta_0})  + O_P\left( \sqrt{\frac{\log n}{nh}}\right).
$$
\label{lem::LS_empirical}
\end{lem}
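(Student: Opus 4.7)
The plan is to decompose the error into a bias part and a stochastic part, and then reduce the stochastic part to an empirical process indexed by a VC-type class. Writing $\hat{r}_{\tau,h}(x) - r(x) = [\hat{r}_{\tau,h}(x) - \E\hat{r}_{\tau,h}(x)] + [\E\hat{r}_{\tau,h}(x) - r(x)]$, the bias term is handled by Lemma~\ref{lem::LS_BV}: its pointwise rate $O(h^{2+\delta_0})$ is in fact uniform in $x\in\D$ because $\D$ is compact, $p_X$ is continuous and bounded away from $0$, and $r\in\Sigma(2+\delta_0,L_0)$, so the constants in the Taylor expansions controlling the bias of $\hat{r}_h$ and $\hat{r}''_{h/\tau}$ are uniform.

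For the stochastic part, I would use the closed-form representations $\hat{r}_h(x)=e_1^T(X_x^T W_{h,x}X_x)^{-1}X_x^TW_{h,x}Y$ and $\hat{r}''_{h/\tau}(x)=e_3^T(X_x^T W_{h/\tau,x}X_x)^{-1}X_x^TW_{h/\tau,x}Y$ (with $X_x$ the appropriate polynomial design matrix). The Gram matrices, suitably rescaled, satisfy
\begin{align*}
\frac{1}{nh}X_x^TW_{h,x}X_x &= p_X(x)\,\Omega_1 + O_P\!\left(\sqrt{\tfrac{\log n}{nh}}\right)\cdot D_{1,h},\\
\frac{1}{nh}X_x^TW_{h/\tau,x}X_x &= p_X(x)\,\Omega_3 + O_P\!\left(\sqrt{\tfrac{\log n}{nh}}\right)\cdot D_{3,h},
\end{align*}
uniformly in $x\in\D$, where $D_{k,h}$ absorbs the powers of $h$ coming from scaling the monomials $(X_i-x)^j$. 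These uniform bounds follow from (K3) combined with the Gin\'e--Guillou/Talagrand maximal inequality for VC-type classes. Expanding the matrix inverse by a first-order Neumann series then yields, uniformly over $x\in\D$, the linearization
\[
\hat{r}_{\tau,h}(x)-\E\hat{r}_{\tau,h}(x) \;=\; \frac{1}{nh}\sum_{i=1}^n \psi_x(X_i,Y_i) - \E\psi_x(X,Y) \;+\; O_P\!\left(\tfrac{\log n}{nh}\right),
\]
with $\psi_x$ as in \eqref{eq::LS::function}. Multiplying by $\sqrt{nh}$ and combining with the uniform bias $O(h^{2+\delta_0})$ (hence $\sqrt{nh}\cdot O(h^{2+\delta_0})=O(h^{3/2+\delta_0}\sqrt{n})=O(h)$ under the bandwidth condition $nh^{d+4}/\log n\to c_0$) produces the stated empirical-process approximation.

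For the sup-norm bound, the approximation gives $\|\hat{r}_{\tau,h}-r\|_\infty \le O(h^{2+\delta_0}) + \frac{1}{\sqrt{nh}}\sup_{x\in\D}|\G_n(\psi_x)|$. The class $\{\psi_x : x\in\D\}$ inherits the VC-type property from (K3), has envelope of order $1$ and variance of order $1$, so by the maximal inequality for empirical processes indexed by a VC-type class with a square-integrable envelope, $\sup_{x\in\D}|\G_n(\psi_x)|=O_P(\sqrt{\log n})$, giving the claimed $O_P(\sqrt{\log n/(nh)})$ stochastic term.

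The main technical obstacle will be the uniform control of the third-order local polynomial Gram matrix and its inverse, because this is a $4\times 4$ matrix whose entries involve $(X_i-x)^j$ for $j$ up to $6$ (after multiplying $X_x^T W_{h/\tau,x}X_x$ out), requiring (K3) at level $\ell=6$. One must show not only entrywise concentration but also that the minimum eigenvalue of the rescaled matrix stays bounded away from $0$ uniformly in $x$, so that the Neumann expansion of the inverse is valid. Once this uniform invertibility is established via a standard perturbation argument, the remaining steps are routine manipulations of empirical processes.
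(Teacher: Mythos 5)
Your proposal follows essentially the same route as the paper: split off the bias via Lemma~\ref{lem::LS_BV}, write the estimators in matrix form, establish uniform concentration of the rescaled Gram matrix and its inverse around $p_X(x)\Omega_k$ via Gin\'e--Guillou under (K3), linearize to obtain the influence function $\psi_x$, and finish the sup-norm bound with a VC-type maximal inequality; the "main technical obstacle" you identify is precisely the paper's Lemma~\ref{lem::LPR::inv}. The only blemish is the arithmetic slip $\sqrt{nh}\,h^{2+\delta_0}=O(h^{3/2+\delta_0}\sqrt{n})$ (it should be $h^{5/2+\delta_0}\sqrt{n}=O(\sqrt{\log n}\,h^{\delta_0})$ under the bandwidth condition), but this matches the paper's own loose treatment of the rescaled bias and does not affect the argument's structure.
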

Lemma~\ref{lem::LS_empirical} shows that we can approximate the
$\sqrt{nh}\left(\hat{r}_{\tau,h}(x)-\E(\hat{r}_{\tau, h}(x))\right)$
by an empirical process $\sqrt{h}\G_n(\psi_x)$. 
Based on this approximation, the second assertion (uniform bound) is an immediate result
from the empirical process theory in \cite{Einmahl2005}.
Lemma~\ref{lem::LS_empirical} is another form of the uniform Bahadur representation \citep{bahadur1966note,kong2010uniform}.

Note that Lemma~\ref{lem::LS_empirical} also works for the usual local linear smoother
or other local polynomial regression estimators (but centered at their expectations). 
Namely, the local polynomial regression
can be uniformly approximated by an empirical process. 
This implies that we can apply empirical process theory
to analyze the asymptotic behavior of the local polynomial regression.


\begin{remark}
\cite{fan1996local} have discussed the prototype of the empirical approximation.
However, they only derived a pointwise approximation rather than a uniform approximation. 
To construct a confidence band that is uniform for all $x\in\D$, we need a uniform approximation of the local linear smoother
by an empirical process.
\end{remark}

Now we define the function class
$$
\mathcal{G}_{\tau,h} = \left\{\sqrt{h}\psi_x(z): x\in\D\right\},
$$
where $\psi_x(z)$ is defined in equation \eqref{eq::LS::function}. 
The set $\cG_{\tau,h}$ is analogous to the set $\cF_{\tau,h}$ in the KDE case.
With this notation and using Lemma~\ref{lem::LS_empirical}, we conclude
$$
\sup_{x\in\D}\|\sqrt{nh}\left(\hat{r}_{\tau,h}(x)-r(x)\right)\| \approx \sup_{f\in \cG_{\tau,h}}\|\G_n(f)\|.
$$
Under assumption (K1, K3)
and applying the Gaussian approximation method of \cite{chernozhukov2014anti,chernozhukov2014gaussian},
the distribution of the right-hand-side will be approximated 
by the distribution of the maxima of a Gaussian process, which leads to the following conclusion.

\begin{thm}[Gaussian approximation of the debiased local linear smoother]
Assume (K1,3), (R1),
$\tau \in(0,\infty)$ is fixed, and $h = n^{-\frac{1}{\varpi}}$ for some $\varpi > 0$ such that $\frac{nh^{5}}{\log n}\rightarrow c_0 \geq 0$ for some $c_0$ bounded and $\frac{nh}{\log n} \rightarrow \infty$. 
Then there exists a Gaussian process $\B_n$ defined on $\mathcal{G}_{\tau,h}$ such that 
for any $f_1,f_2\in \mathcal{G}_{\tau,h}$, $\E(\B_n(f_1)\B_n(f_2)) = {\sf Cov}\left(f_1(X_i,Y_i), f_2(X_i,Y_i)\right)$
and
$$
\sup_{t\in\mathbb{R}}\left|\P\left(\sqrt{nh^d}\left\|\hat{r}_{\tau,h} -r\right\|_{\infty}\leq t\right) - \P\left(\sup_{f\in\mathcal{G}_{\tau,h}}\|\B_n(f)\|\leq t\right)\right|
= O\left(\left(\frac{\log^7 n}{nh}\right)^{1/8}\right).
$$
\label{thm::LS_gaussian}
\end{thm}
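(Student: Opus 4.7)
The plan mirrors the proof of Theorem~\ref{thm::KDE_gaussian}, but starts from the empirical-process approximation supplied by Lemma~\ref{lem::LS_empirical}. First I would apply that lemma to write, uniformly in $x\in\D$,
$$
\sqrt{nh}\bigl(\hat r_{\tau,h}(x) - r(x)\bigr) \;=\; \G_n(f_x) \;+\; R_n(x),
$$
where $f_x := \psi_x/\sqrt{h}$ ranges over $\cG_{\tau,h}$ and $\sup_{x\in\D}|R_n(x)| = O(h) + O_P(\sqrt{\log n/(nh)}) =: \rho_n$. Taking the supremum over $x$,
$$
\sqrt{nh}\,\|\hat r_{\tau,h} - r\|_\infty \;=\; \sup_{f\in\cG_{\tau,h}} |\G_n(f)| \;+\; O_P(\rho_n).
$$
Under the bandwidth assumption, $\sqrt{\log n}\cdot\rho_n \to 0$, so this remainder will eventually be absorbed into the target rate via anti-concentration.

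Next I would verify that $\cG_{\tau,h}$ is a VC-type class with $L^4$ envelope. Each $f_x$ is a fixed linear combination, with coefficients depending on $x$ only through $1/p_X(x)$ and the fixed matrices $\Omega_1^{-1},\Omega_3^{-1}$, of functions of the form $z_2\bigl((z_1-x)/h\bigr)^j K((z_1-x)/h)$ for $0\le j\le 3$. Assumption (K3) directly supplies a uniform covering bound for this base collection, and passing to fixed bounded linear combinations preserves VC-type. Since $p_X$ is continuous and positive on the compact $\D$ (assumption (R1)), $1/p_X$ is bounded and $\Omega_k$ stays uniformly invertible, so a measurable envelope of order $|Y|$ (times $1/\sqrt{h}$) suffices; the moment bound $\E(|Y|^4\mid X)\le C_0$ in (R1) supplies the $L^4$ envelope moments required downstream.

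With these ingredients I would invoke the Gaussian approximation theorem for empirical processes over a VC-type class (\citealt{chernozhukov2014gaussian}). This produces, on a suitable enlargement of the probability space, a centered Gaussian process $\B_n$ on $\cG_{\tau,h}$ with $\E(\B_n(f_1)\B_n(f_2)) = \Cov(f_1(X_i,Y_i),f_2(X_i,Y_i))$ and the coupling
$$
\Bigl|\sup_{f\in\cG_{\tau,h}}|\G_n(f)| - \sup_{f\in\cG_{\tau,h}}|\B_n(f)|\Bigr| \;=\; O_P\!\left(\!\left(\tfrac{\log^7 n}{nh}\right)^{1/8}\right),
$$
the $nh$ in the denominator arising because the envelope of $f_x$ carries a $1/\sqrt h$ factor. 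Combining this coupling with the anti-concentration inequality for suprema of Gaussian processes (\citealt{chernozhukov2014anti}) — whose Lévy concentration function grows only like $\sqrt{\log n}$ — converts the coupling and the remainder $R_n$ into a uniform Kolmogorov distance of order $\sqrt{\log n}\,(\rho_n + (\log^7 n/(nh))^{1/8})$, which reduces to the advertised rate.

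The main obstacle will be the VC-type and envelope verification for $\cG_{\tau,h}$: unlike the KDE case, each index function $f_x$ picks up a multiplicative factor $Y$, so the envelope is unbounded and one must propagate $L^4$ conditional moments through the CCK bounds rather than $L^\infty$ bounds. One also needs the non-degeneracy of the weak variance $\inf_{f\in\cG_{\tau,h}} \mathrm{Var}(f(X_i,Y_i))$, which follows from $\inf_{x\in\D} p_X(x)>0$ together with positive conditional variance of $Y$ implied by the moment hypothesis; it is this non-degeneracy that ultimately determines the $1/8$ exponent.
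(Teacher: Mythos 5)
Your proposal is correct and follows essentially the same route the paper intends: the paper omits this proof, stating it mirrors Theorem~\ref{thm::KDE_gaussian}, i.e., start from the uniform empirical-process approximation of Lemma~\ref{lem::LS_empirical}, apply the Gaussian coupling of \cite{chernozhukov2014gaussian} over the VC-type class $\cG_{\tau,h}$, and convert to Kolmogorov distance via anti-concentration with the $O(\sqrt{\log n})$ bound on the Gaussian supremum. Your added attention to the unbounded envelope (propagating the $L^4$ moment condition from (R1)) and to the non-degeneracy of the weak variance fills in details the paper glosses over, but does not change the argument.
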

The proof of Theorem \ref{thm::LS_gaussian} follows a similar way as the proof of Theorem~\ref{thm::KDE_gaussian}
so we omit it. 

Theorem~\ref{thm::LS_gaussian} shows that the $L_\infty$ error
of the debiased linear smoother will be approximated by the maximum of a Gaussian process.
Thus, as long as we can prove that the bootstrapped $L_\infty$ error converges
to the same Gaussian process, we have bootstrap consistency of the confidence band.

\begin{thm}[Confidence band of regression function]
Assume (K1,3), (R1),
$\tau \in(0,\infty)$ is fixed, and $h = n^{-\frac{1}{\varpi}}$ for some $\varpi > 0$ such that $\frac{nh^{5}}{\log n}\rightarrow c_0 \geq 0$ for some $c_0$ bounded and $\frac{nh}{\log n} \rightarrow \infty$. 
Let 
$\hat{s}_{1-\alpha}$ be the $(1-\alpha)$ quantile of 
the distribution
$$
\hat{G}(s) = P\left(\|\hat{r}^*_{\tau,h}-\hat{r}_{\tau,h}\|_{\infty}< s|X_1,\cdots,X_n\right). 
$$
Then define the confidence band as following:
$$
\hat{C}^R_{1-\alpha}(x) = \left[\hat{r}_{\tau, h}(x) - \hat{s}_{1-\alpha}, \,\,\hat{r}_{\tau, h}(x) + \hat{s}_{1-\alpha}\right].
$$
We would have 
$$
P\left( r(x) \in \hat{C}^R_{1-\alpha}(x)\,\, \forall x\in\D \right)
= 1-\alpha+O\left(\left(\frac{\log^7 n}{nh}\right)^{1/8}\right).
$$
Namely, $\hat{C}^R_{1-\alpha}(x)$ is an asymptotically valid $1-\alpha$ confidence band of the regression function $r$.
\label{thm::LS_CI}
\end{thm}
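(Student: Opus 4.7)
The plan is to mirror the proof of Theorem~\ref{thm::KDE_CI} but carried out on the function class $\cG_{\tau,h}$ attached to the debiased local linear smoother. Writing $\hat{s}^*_{1-\alpha}$ for the true $(1-\alpha)$ quantile of $\sqrt{nh}\|\hat{r}_{\tau,h}-r\|_\infty$, the argument reduces to showing that the bootstrap quantile $\sqrt{nh}\,\hat{s}_{1-\alpha}$ is within $O\bigl((\log^7 n/nh)^{1/8}\bigr)$ of $\hat{s}^*_{1-\alpha}$, and then invoking anti-concentration for the supremum of a Gaussian process to convert this into the stated coverage error.

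First I would combine Lemma~\ref{lem::LS_empirical} with the bandwidth assumption $nh^5/\log n\to c_0$ to see that the deterministic bias $O(h^{2+\delta_0})$ becomes $o(1/\sqrt{\log n})$ after multiplication by $\sqrt{nh}$, so that $\sqrt{nh}\|\hat{r}_{\tau,h}-r\|_\infty$ equals $\sup_{x\in\D}|h^{-1/2}\G_n(\psi_x)|=\sup_{f\in\cG_{\tau,h}}|\G_n(f)|$ up to a remainder of order $O(h)+O_P(\sqrt{\log n/(nh)})$. Theorem~\ref{thm::LS_gaussian} then couples this supremum to the maximum of the Gaussian process $\B_n$ on $\cG_{\tau,h}$ at rate $O((\log^7 n/nh)^{1/8})$; this is the Gaussian side of the problem.

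Next I would establish the bootstrap analogue of Lemma~\ref{lem::LS_empirical}: conditional on the sample, $\sqrt{nh}(\hat{r}^*_{\tau,h}(x)-\hat{r}_{\tau,h}(x))$ is uniformly close to $h^{-1/2}\G_n^*(\hat{\psi}_x)$, where $\G_n^*$ is the bootstrap empirical process and $\hat{\psi}_x$ is the plug-in version of $\psi_x$ built from $\hat{p}_X$ and empirical design moments. The two inputs for this step are (i) uniform consistency of the bootstrap design moments $S_{n,h,j}^*$ and $\hat p_X^*$ over $\D$, obtained by applying the VC-type bound from assumption (K3) conditionally on the data, and (ii) a linearisation of the ratio form of the local polynomial weights around $\hat\psi_x$. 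One then verifies $\sup_{x\in\D}|\hat\psi_x-\psi_x|=o_P(1)$ by the same VC arguments, so that replacing $\hat\psi_x$ by $\psi_x$ inside $\G_n^*$ adds only a negligible term. After this, the bootstrap Gaussian approximation of \cite{chernozhukov2014gaussian} (applied conditionally, with the envelope and covering number of $\cG_{\tau,h}$ controlled by (K3)) gives that the conditional law of $\sup_{f\in\cG_{\tau,h}}|\G_n^*(f)|$ matches the unconditional law of $\sup_{f\in\cG_{\tau,h}}|\B_n(f)|$ at the same $O((\log^7 n/nh)^{1/8})$ rate, with probability tending to one.

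Chaining these two Gaussian couplings through the Chernozhukov--Chetverikov--Kato anti-concentration inequality for suprema of Gaussian processes translates Kolmogorov-distance closeness into quantile closeness, so $\sqrt{nh}\,\hat{s}_{1-\alpha}=\hat{s}^*_{1-\alpha}+O_P\bigl((\log^7 n/nh)^{1/8}\bigr)$; integrating against the (anti-concentrated) distribution of $\sqrt{nh}\|\hat{r}_{\tau,h}-r\|_\infty$ then yields the stated coverage. The hard part will be the bootstrap empirical approximation in the third step: unlike the KDE case, where the bootstrap statistic factors cleanly as $h^{-d/2}(\hat\P^*_n-\hat\P_n)(f_x)$, the debiased local linear smoother is a ratio whose denominators depend on the bootstrap sample, so the linearisation producing $\hat\psi_x$ must be carried out uniformly in $x$ with an error small enough to survive the $\sqrt{nh}$ scaling and the $\sqrt{\log n}$ factor coming from the supremum. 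Everything else is a conditional re-run of the ingredients already assembled for Theorem~\ref{thm::KDE_CI} and Theorem~\ref{thm::LS_gaussian}.
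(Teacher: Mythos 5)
Your proposal is correct and follows essentially the same route as the paper, which simply states that the proof mirrors that of Theorem~\ref{thm::KDE_CI} with Theorem~\ref{thm::KDE_gaussian} replaced by Theorem~\ref{thm::LS_gaussian} (exact coverage at the true quantile, Gaussian approximation of both the original and bootstrap $L_\infty$ statistics, Gaussian comparison of the sample and population covariance structures, and anti-concentration to pass to quantiles). In fact you supply more detail than the paper does on the one genuinely non-trivial step --- the conditional linearisation of the bootstrap local-polynomial ratio to obtain the bootstrap analogue of Lemma~\ref{lem::LS_empirical} --- which the paper leaves entirely implicit.
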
 
The proof of Theorem \ref{thm::LS_CI} follows a similar way as the proof of Theorem~\ref{thm::KDE_CI},
with Theorem \ref{thm::KDE_gaussian} being replaced by Theorem \ref{thm::LS_gaussian}.
Thus, we omit the proof.

Theorem~\ref{thm::LS_CI} proves that the confidence band from bootstrapping the debiased local linear smoother
is asymptotically valid. 
This is a very powerful result because 
Theorem~\ref{thm::LS_CI} is compatible with the smoothing bandwidth selected 
by the cross-validation of the original local linear smoother.
This implies the validity of the proposed procedure in Section~\ref{sec::LS}.

Finally, we prove that the confidence set of the inverse regression $\cR$ is also asymptotically valid.
\begin{thm}[Confidence set of inverse regression]
Assume (K1,3), (R1--2), and
$\tau \in(0,\infty)$ is fixed, and $h = n^{-\frac{1}{\varpi}}$ for some $\varpi > 0$ such that $\frac{nh^{5}}{\log n}\rightarrow c_0 \geq 0$ for some $c_0$ bounded and $\frac{nh}{\log n} \rightarrow \infty$. 
Then 
$$
P\left(\cR\subset \hat{\cR}^*_{\tau, h}\oplus \hat{s}^{R}_{1-\alpha}\right)
=1-\alpha+ O\left(\left(\frac{\log^7 n}{nh}\right)^{1/8}\right).
$$
Namely, $\hat{\cR}^*_{\tau, h}\oplus \hat{s}^{R}_{1-\alpha}$ is an asymptotically valid confidence set of the inverse regression $\cR$.
\label{thm::IR_CI}
\end{thm}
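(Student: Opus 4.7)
The plan is to mimic the argument of Theorem~\ref{thm::LV_CI} (and the corresponding result in \cite{chen2015density}) but with the debiased local linear smoother in place of the debiased KDE. The key structural observation is that under assumption (R2), the Hausdorff distance between a level set of a smooth perturbation of $r$ and the true level set $\cR$ can be linearized via the gradient bound, reducing the problem to the supremum of the estimation error in a neighborhood of $\cR$.

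Concretely, the first step would be to establish a deterministic geometric lemma: for any function $g$ close enough to $r$ in $L_\infty$, the set $\cR_g = \{x: g(x) = r_0\}$ satisfies
$$
\Haus(\cR_g,\cR) = \frac{1}{g_1}\sup_{x\in \cR\oplus \eta_n}|g(x)-r(x)| \cdot (1+o(1))
$$
for some sequence $\eta_n\to 0$. This follows from a Taylor expansion using (R2) together with the uniform convergence $\|\hat r_{\tau,h}-r\|_\infty\to 0$ guaranteed by Lemma~\ref{lem::LS_empirical}, applied to both $g=\hat r_{\tau,h}$ and $g=\hat r^*_{\tau,h}$. The same linearization applied to the bootstrap gives
$$
\Haus(\hat\cR^*_{\tau,h},\hat\cR_{\tau,h}) = \frac{1}{g_1}\sup_{x\in \cR\oplus \eta_n}|\hat r^*_{\tau,h}(x)-\hat r_{\tau,h}(x)|\cdot (1+o(1)).
$$

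The second step is to invoke Theorem~\ref{thm::LS_gaussian}, but restricted to the sub-class $\cG_{\tau,h}^{\cR} = \{h^{-1/2}\psi_x : x\in \cR\oplus \eta_n\}\subset \cG_{\tau,h}$. The Gaussian approximation still holds on this sub-class with the same rate $O((\log^7 n/(nh))^{1/8})$, because the VC-type complexity of $\cG_{\tau,h}^{\cR}$ is no larger than that of $\cG_{\tau,h}$. A parallel Gaussian multiplier/bootstrap approximation (the same one underlying Theorem~\ref{thm::LS_CI}) gives that the bootstrap distribution of $\sqrt{nh}\sup_{x\in \cR\oplus\eta_n}|\hat r^*_{\tau,h}(x)-\hat r_{\tau,h}(x)|$ couples to the supremum of the same Gaussian process with error $O((\log^7 n/(nh))^{1/8})$. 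Combining these with the linearization in the first step, the bootstrap quantile $\hat s^R_{1-\alpha}/g_1$ of the Hausdorff distance is, up to $o_P$ terms, the $(1-\alpha)$ quantile of $g_1^{-1}\sup_{f\in\cG_{\tau,h}^{\cR}}|\B_n(f)|$, which is also (up to the same error) the quantile of the true Hausdorff error $\Haus(\hat\cR_{\tau,h},\cR)$.

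The final step is to note that $\cR\subset \hat\cR_{\tau,h}\oplus \hat s^R_{1-\alpha}$ whenever $\Haus(\hat\cR_{\tau,h},\cR)\leq \hat s^R_{1-\alpha}$, so the matching of quantiles gives coverage $1-\alpha+O((\log^7 n/(nh))^{1/8})$, exactly as claimed. To apply the anti-concentration inequality of \cite{chernozhukov2014anti} one needs that the Gaussian supremum has a density bounded on the relevant scale; this is guaranteed because the covariance kernel of $\B_n$ restricted to $\cG_{\tau,h}^{\cR}$ inherits non-degeneracy from the strict positivity of $p_X$ and the gradient condition (R2). The main obstacle I expect is the first (geometric) step: proving that the Hausdorff distance is genuinely controlled by the $L_\infty$ error on a shrinking tube around $\cR$, uniformly over both the original and the bootstrap estimator, with a rate fast enough that the linearization error is dominated by the $O((\log^7 n/(nh))^{1/8})$ Gaussian-approximation term. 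The rest of the argument is a faithful translation of the density-level-set proof of \cite{chen2015density} into the regression setting, with Lemma~\ref{lem::LS_empirical} and Theorem~\ref{thm::LS_gaussian} playing the roles previously played by the corresponding KDE results.
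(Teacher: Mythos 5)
Your plan is essentially the paper's own route: the paper gives no detailed proof of Theorem~\ref{thm::IR_CI}, instead reducing it to the argument of Theorem~\ref{thm::LV_CI} (itself deferred to Theorem 4 of \cite{chen2015density}), whose skeleton is exactly what you describe---linearize the Hausdorff error into the supremum of an empirical process near $\cR$, Gaussian-approximate via Theorem~\ref{thm::LS_gaussian}, couple the bootstrap process to the same Gaussian limit, and match quantiles. One correction: your linearization should read $\Haus(\cR_g,\cR)=\sup_{x\in\cR}\bigl(|g(x)-r(x)|/|r'(x)|\bigr)(1+o(1))$ with the \emph{pointwise} gradient in the denominator; replacing $|r'(x)|$ by the uniform lower bound $g_1$ yields only an upper bound (equality fails unless $|r'|$ is constant on $\cR$), though this does not change the structure of the argument since the same weighted process appears in both the original and bootstrap Hausdorff distances.
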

The proof of Theorem~\ref{thm::IR_CI} is basically the same as the proof of Theorem~\ref{thm::LV_CI}.
Essentially, the inverse regression is just the level set of the regression function.
Thus, as long as we have a confidence band of the regression function, we have the confidence set of the inverse regression.

A good news is that Theorem~\ref{thm::IR_CI} is compatible with
the bandwidth from the cross-validation $h_{CV}$.
Therefore, we can simply choose $h=h_{CV}$ and then construct the confidence set
by bootstrapping the inverse regression estimator.

\begin{remark}
Note that one can revise the bound on coverage correction in Theorem~\ref{thm::IR_CI}
into the rate $ O\left(\left(\frac{1}{nh}\right)^{1/6}\right)$ by using the following facts. 
First,
the original Hausdorff error is approximately the maximum of 
absolute values of a few normal random variables. 
This is because each estimated location of the inverse regression follows 
an asymptotically normal distribution centered at one population location of the inverse regression.
Then because the bootstrap will approximate this distribution, 
by the Gaussian comparison theorem 
(see, e.g., Theorem 2 in \citealt{chernozhukov2014comparison} and Lemma 3.1 in 
\citealt{chernozhukov2013gaussian}), 
the approximation error rate is $ O\left(\left(\frac{1}{nh}\right)^{1/6}\right)$.

\end{remark}

\begin{remark}
Note that the above results are assuming the $h= h_n\rightarrow0$ in a deterministic way. 
If $h$ is chosen from some conventional data-driven methods, our results still hold. 
Here we give a high-level sketch proof for the KDE case with a smoothing bandwidth chosen
by the (least square) cross-validation approach \citep{sheather2004density}, one can generalize it
to the local polynomial regression problem. 
For the cross-validation method \citep{sheather2004density}, 
denote $u_0$ and $l_0$ as two fixed positive constants,  such that $h_0 \in [l_0, u_0]n^{-\frac{1}{d+4}}$, where $h_0$ is the optimal bandwidth with respect to MISE. 
By theorem \ref{thm::KDE_CI},
\begin{align*}
\sup_{t}\Bigg|P\Bigg(\sqrt{nh^d} \|\hat{p}_{\tau,h}-p\|_\infty <t\Bigg)- & P\Bigg(\sqrt{nh^d} \|\hat{p}^*_{\tau,h}-\hat{p}_{\tau,h}\|_\infty <t\Bigg|\mathcal{X}_n\Bigg)\Bigg|\\
& \leq  O_P\left( \left(\frac{\log^7 n}{nh^d}\right)^{1/8}\right),
\end{align*}
which
leads to a uniform upper bound
\begin{align*}
\sup_{h \in [l_0, u_0] n^{-\frac{1}{d+4}}} \sup_{t}\Bigg|P\Bigg(\sqrt{nh^d} \|\hat{p}_{\tau,h}-p\|_\infty <t\Bigg)- & P\Bigg(\sqrt{nh^d} \|\hat{p}^*_{\tau,h}-\hat{p}_{\tau,h}\|_\infty <t\Bigg|\mathcal{X}_n\Bigg) \Bigg| \\
& \leq  O_P\left( \left(\frac{\log^7 n}{nh_0^d}\right)^{1/8}\right).
\end{align*}
Let $\hat{h}_{CV}$ be the bandwidth chosen by the cross-validation approach.
\cite{duong2005cross} and \cite{sain1994cross} have shown that
$\frac{\hat{h}^{CV} - h_0}{h_0} = O_P(n^{-\frac{\min(d,4)}{2d + 8}})$, so 
$$
P(\hat{h}_{CV} \in [l_0, u_0] n^{-\frac{1}{d+4}}) = 1 - O(n^{-\frac{\min(d,4)}{2d + 8}}).
$$
Combining these two observations together, we obtain
\begin{align*}
& \sup_{t}\Bigg|P\Bigg(\sqrt{n\hat{h}_{CV}^d} \|\hat{p}_{\tau,\hat{h}_{CV}}-p\|_\infty <t\Bigg)-P\Bigg(\sqrt{n\hat{h}_{CV}^d} \|\hat{p}^*_{\tau,\hat{h}_{CV}}-\hat{p}_{\tau,\hat{h}_{CV}}\|_\infty <t\Bigg|\mathcal{X}_n\Bigg)\Bigg| \\
& \leq \sup_{h \in [l_0, u_0] n^{-\frac{1}{d+4}}} \sup_{t}\Bigg|P\Bigg(\sqrt{nh^d} \|\hat{p}_{\tau,\hat{h}_{CV}}-p\|_\infty <t\Bigg)- P\Bigg(\sqrt{nh^d} \|\hat{p}^*_{\tau,\hat{h}_{CV}}-\hat{p}_{\tau,\hat{h}_{CV}}\|_\infty <t\Bigg|\mathcal{X}_n\Bigg) \Bigg| 
\end{align*}
with a probability of $1 - O(n^{-\frac{\min(d,4)}{2d + 8}})$.  This means that 
\begin{align*}
\sup_{t}\Bigg|P\Bigg(\sqrt{n\hat{h}_{CV}^d} \|\hat{p}_{\tau,\hat{h}_{CV}}-p\|_\infty <t\Bigg)- & P\Bigg(\sqrt{n\hat{h}_{CV}^d} \|\hat{p}^*_{\tau,\hat{h}_{CV}}-\hat{p}_{\tau,\hat{h}_{CV}}\|_\infty <t\Bigg|\mathcal{X}_n\Bigg)\Bigg| \\
& \leq O_P\left( \left(\frac{\log^7 n}{nh_0^d}\right)^{1/8}\right).
\end{align*}

Thus, the confidence band proposed in \ref{sec::CI::KDE} is indeed valid. 
For the case of local linear regression, \cite{li2004cross} has already established a similar rate
when the smoothing bandwidth is chosen by the cross-validation approach.
As a result, the same analysis also applies to the local linear regression. 
\end{remark}

\begin{remark}
{\cite{calonico2018coverage} studied 
the problem of optimal coverage error for a confidence interval
and applied their result to a pointwise confidence interval from the debiased local polynomial regression estimator.
In our case, the coverage error is the quantity 
$$
\sup_{P_{XY}\in \mathcal{P}_{XY}}|P\left( r(x) \in \hat{C}^R_{1-\alpha}(x)\,\, \forall x\in\D \right)
- 1-\alpha|,
$$
where $P_{XY}$ is a joint distribution function of of $X$ and $Y$
and $\mathcal{P}_{XY}$ is a class of joint distribution functions.
Theorem~\ref{thm::LS_CI} shows that this quantity will be of the rate $O\left(\left(\frac{\log^7 n}{nh}\right)^{1/8}\right)$
for the class of functions $\mathcal{P}_{XY}$ satisfying our conditions. 
However, this rate is probably suboptimal
when comparing to the rate described in Lemma 3.1 of \cite{calonico2018coverage}.
It is of great interest to study
the optimal rate of a simultaneous confidence band 
and design a procedure that can achieve this rate.
}
\end{remark}

\section{Data Analysis}	\label{sec::simulation}

\subsection{Simulation: Density Estimation}		\label{sec::sim::KDE}

In this section, we demonstrate the coverage of proposed confidence bands/sets
of the density function and level set.

\begin{figure}[h]
\centering
\includegraphics[height=1.5in]{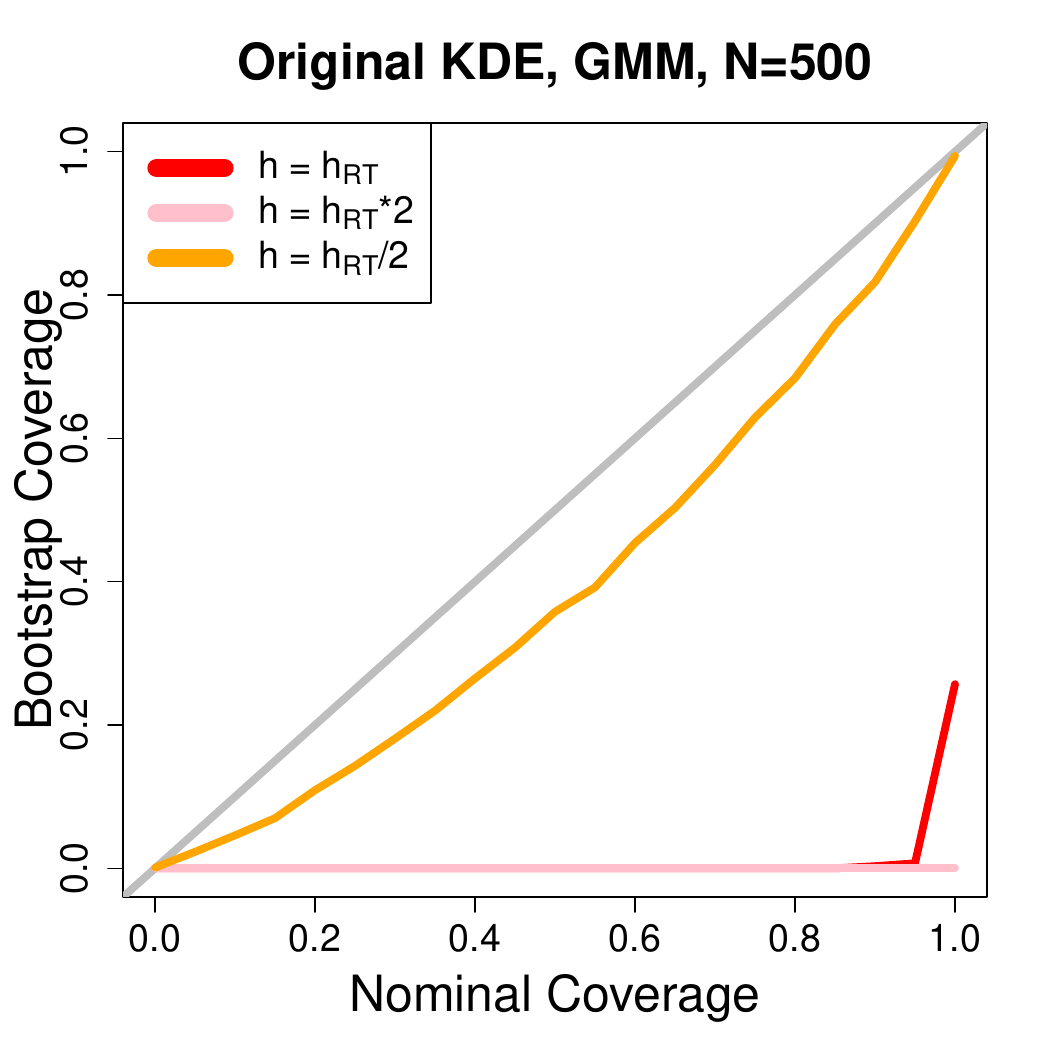}
\includegraphics[height=1.5in]{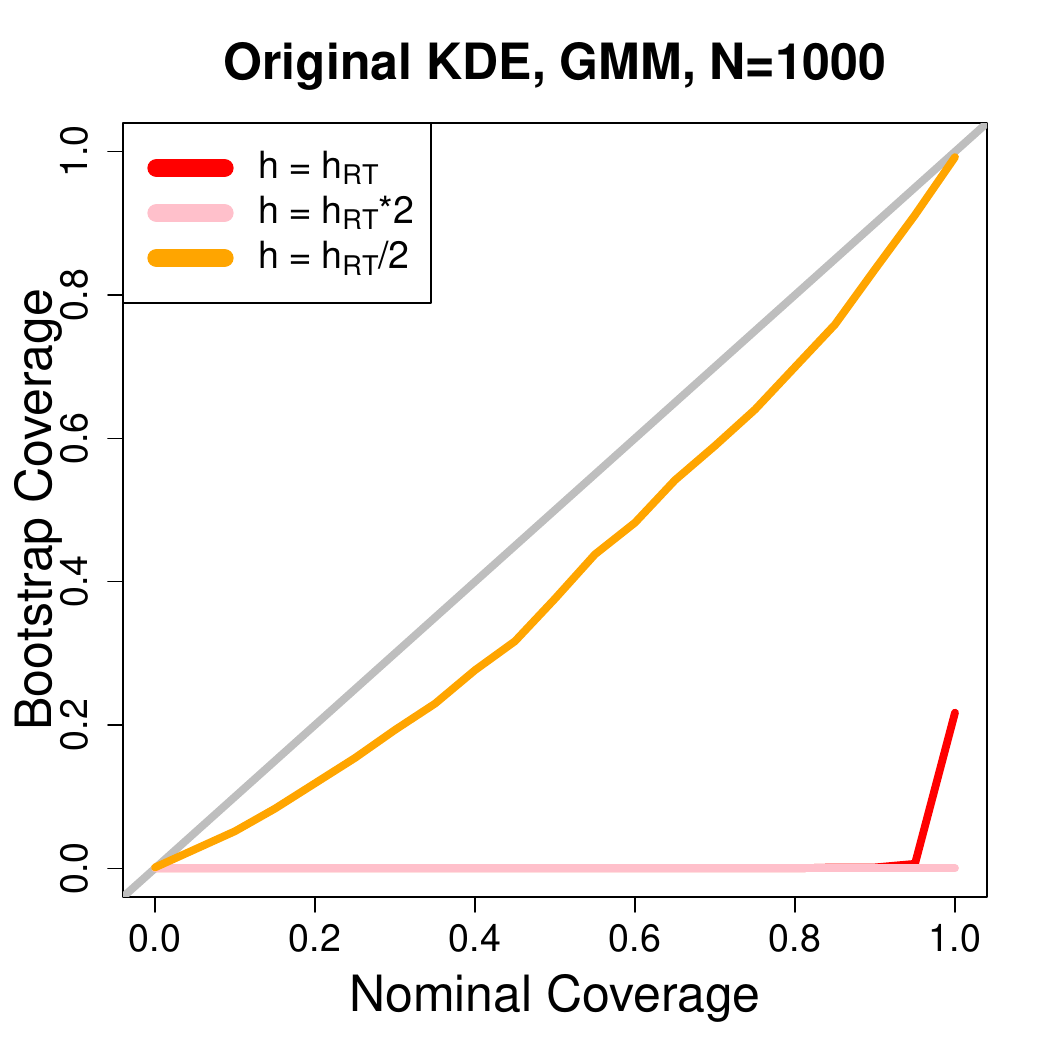}
\includegraphics[height=1.5in]{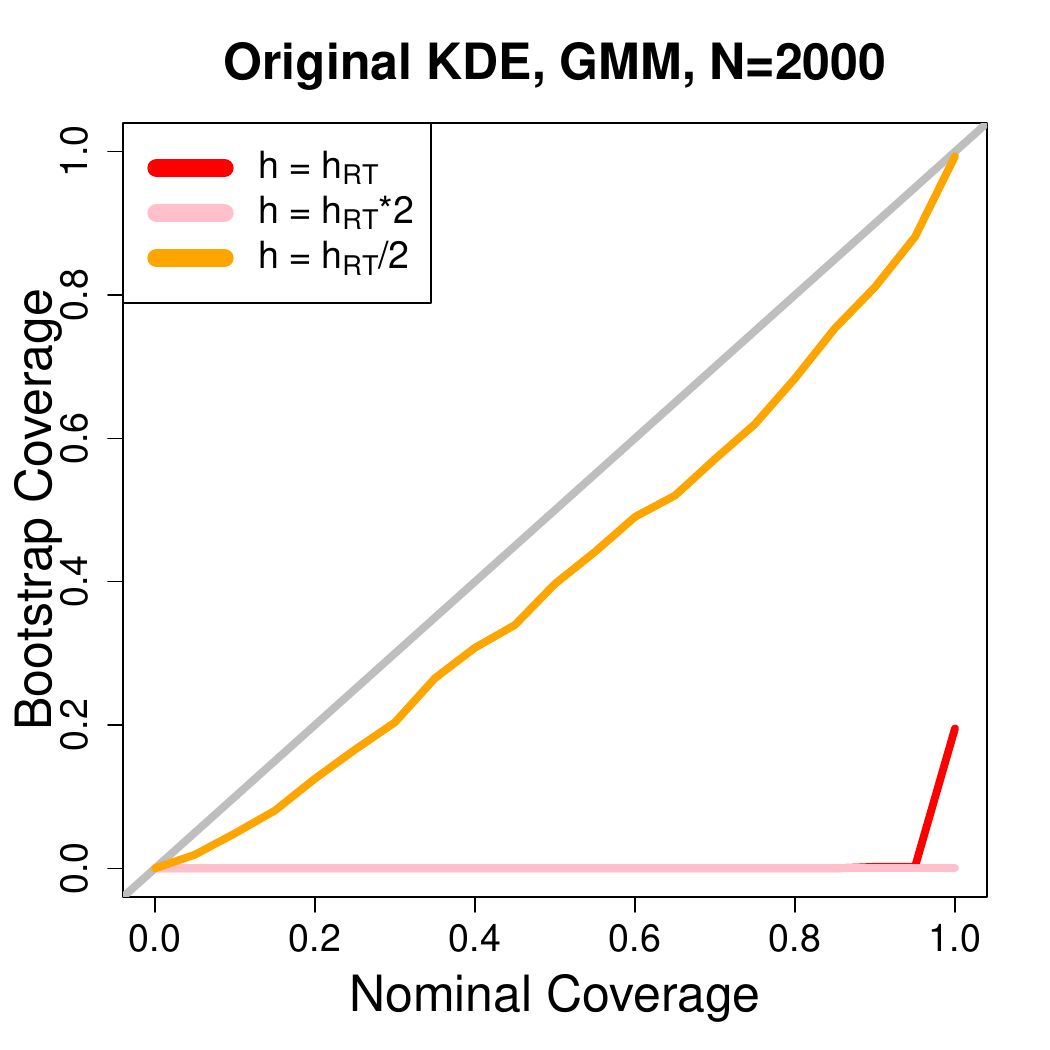}\\
\includegraphics[height=1.5in]{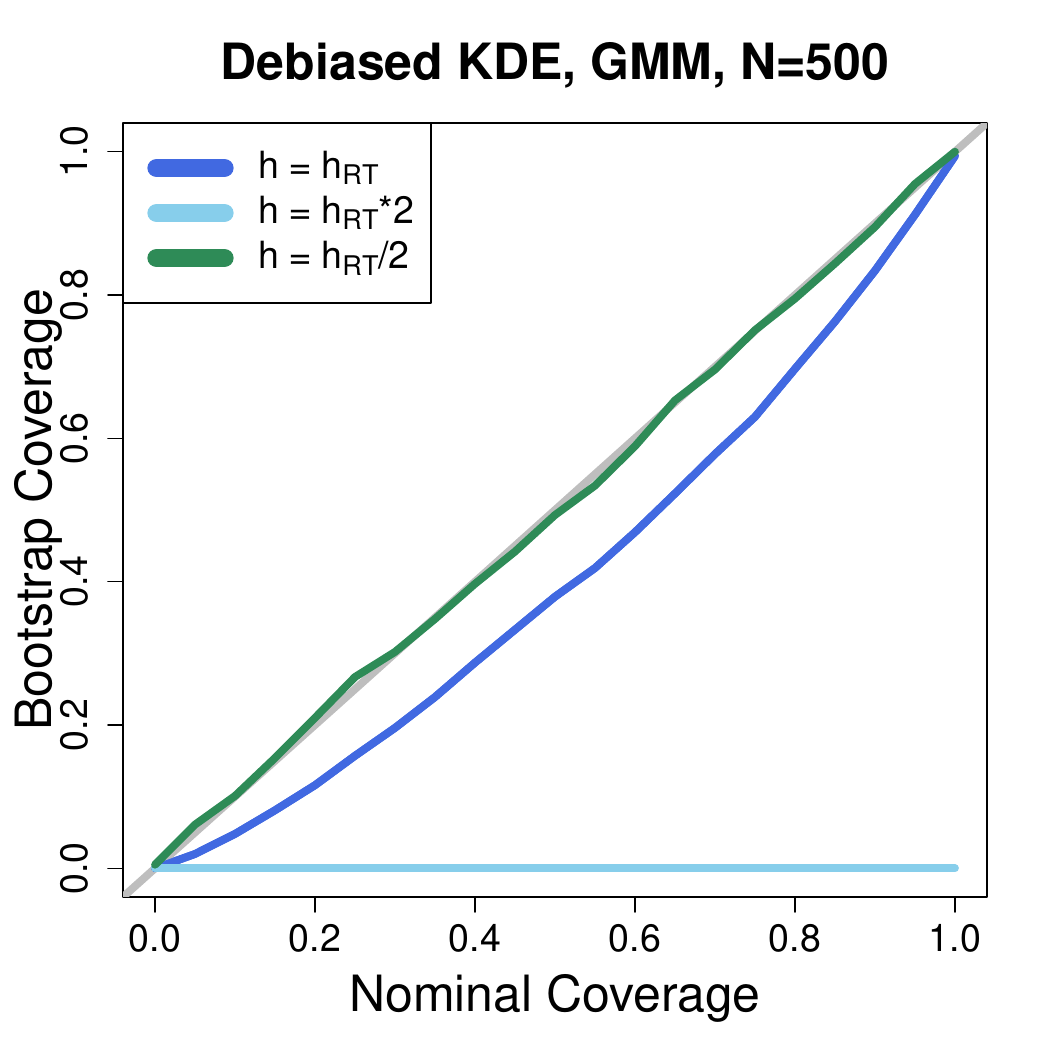}
\includegraphics[height=1.5in]{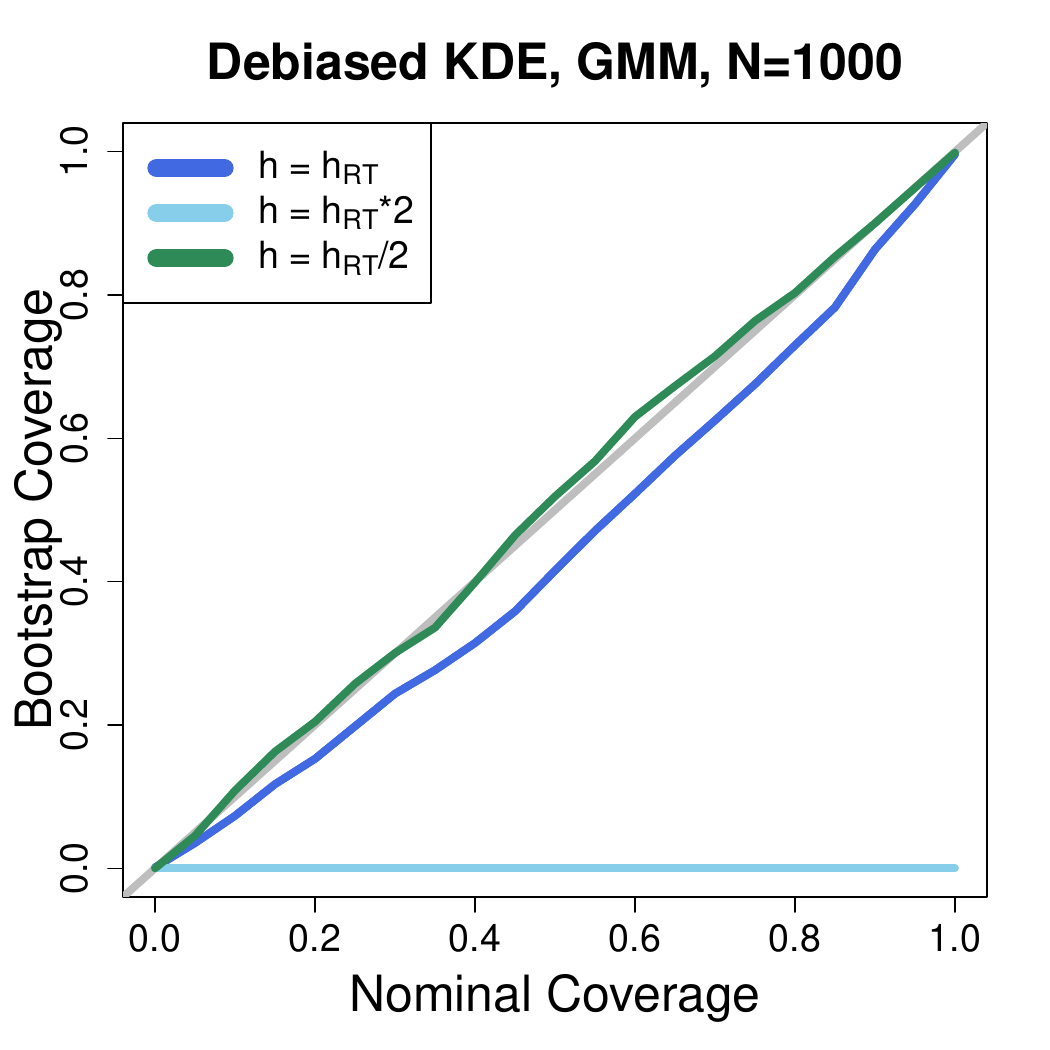}
\includegraphics[height=1.5in]{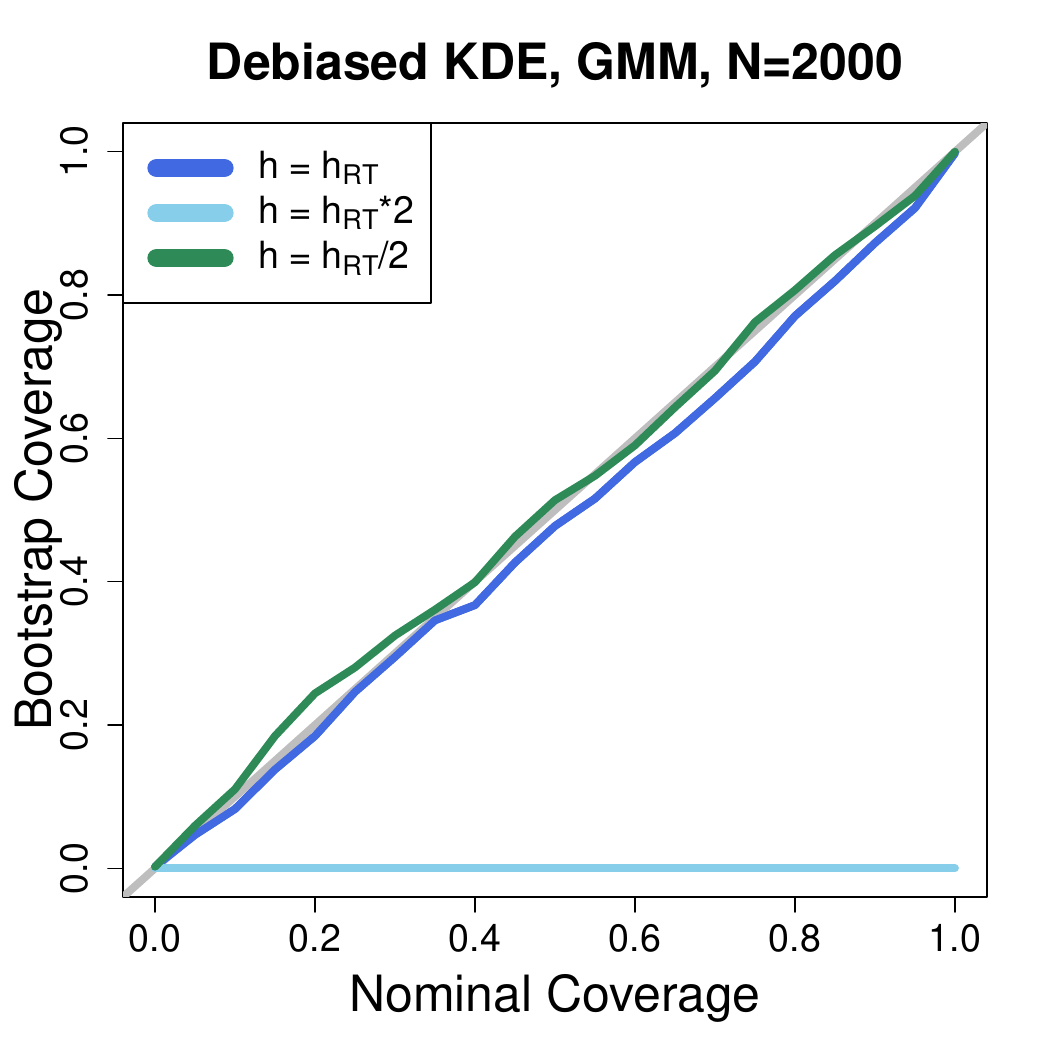}
\caption{Confidence bands of density estimation. 
The top row displays bootstrap coverage versus nominal coverage when we bootstrap the original KDE.
The bottom row shows coverage comparison via bootstrapping the debiased KDE.
It is clear that when we bootstrap the original KDE, the confidence band has undercoverage in every case.
On the other hand, when we bootstrap the debiased KDE, the confidence band achieves nominal coverage
when we undersmooth the data (green curves) or when the sample size is large enough (blue curve).}
\label{fig::de}
\end{figure}

{\bf Density functions.}
To demonstrate the validity of confidence bands for density estimation, we
consider the following Gaussian mixture model. 
We generate $n$ IID data points $X_1,\cdots, X_n$ from a Gaussian mixture such that,
with a probability of $0.6$, $X_i$ is from $N(0,1)$, a standard normal,
and with a probability of $0.4$, $X_i$ is from $N(4,1)$, a standard normal centered at $4$.
The population density of $X_i$ is shown in the black curve in the top left panel of Figure~\ref{fig::ex01}.
We consider three different sample sizes: $n=500, 1000,$ and $2000$.
We bootstrap both the original KDE and the debiased KDE for $1000$ times with three different bandwidths:
$h_{RT}$, $h_{RT}\times 2$, and $h_{RT}/2$, where $h_{RT}$ is the bandwidth from the rule of thumb \citep{Silverman1986}. 
We use these three different bandwidths to show the robustness of the bootstrapped confidence bands against
bandwidth selection.
The result is given in Figure~\ref{fig::de}.
In the top row (the case of bootstrapping the original KDE), 
except for the undersmoothing case (orange line),
confidence band coverage is far below the nominal level. 
And even in the undersmoothing case, the coverage does not achieve the nominal level.
In the bottom row, we display the result of bootstrapping the debiased KDE. 
We see that undersmoothing (green curve)
always yields a confidence band with nominal coverage.
The rule of thumb (blue curve) yields an asymptotically valid confidence band--the bootstrap coverage
achieves nominal coverage when the sample size is large enough (in this case, we need a sample size about $2000$). 
This affirms Theorem~\ref{thm::KDE_CI}.
For the case of oversmoothing, it still fails to generate a valid confidence band.

To further investigate the confidence bands from bootstrapping the debiased estimator,
we consider their width in Figure~\ref{fig::kde_band_width}. 
In each panel, we compare the width of confidence bands generated by 
bootstrapping the debiased estimator (blue) and bootstrapping the original estimator
with undersmoothing bandwidth (red; undersmoothing refers to half of the selected bandwidth by a bandwidth selector). 
In the top row, we consider the case where the smoothing bandwidth is selected by the rule of thumb.
In the bottom row, we choose the smoothing bandwidth by the cross-validation method. 
The result is based on the median width of confidence band from $1000$ simulations.
In every panel, we see that bootstrapping the debiased estimator
leads to a confidence band with a narrower width.
This suggests that bootstrapping the debiased estimator
not only guarantees the coverage but also yield a confidence band that is narrower. 
{A more comprehensive simulation study is provided in Appendix \ref{sec::DE::sim::app}.}



\begin{figure}[h!]
\centering
\includegraphics[height=1.5in]{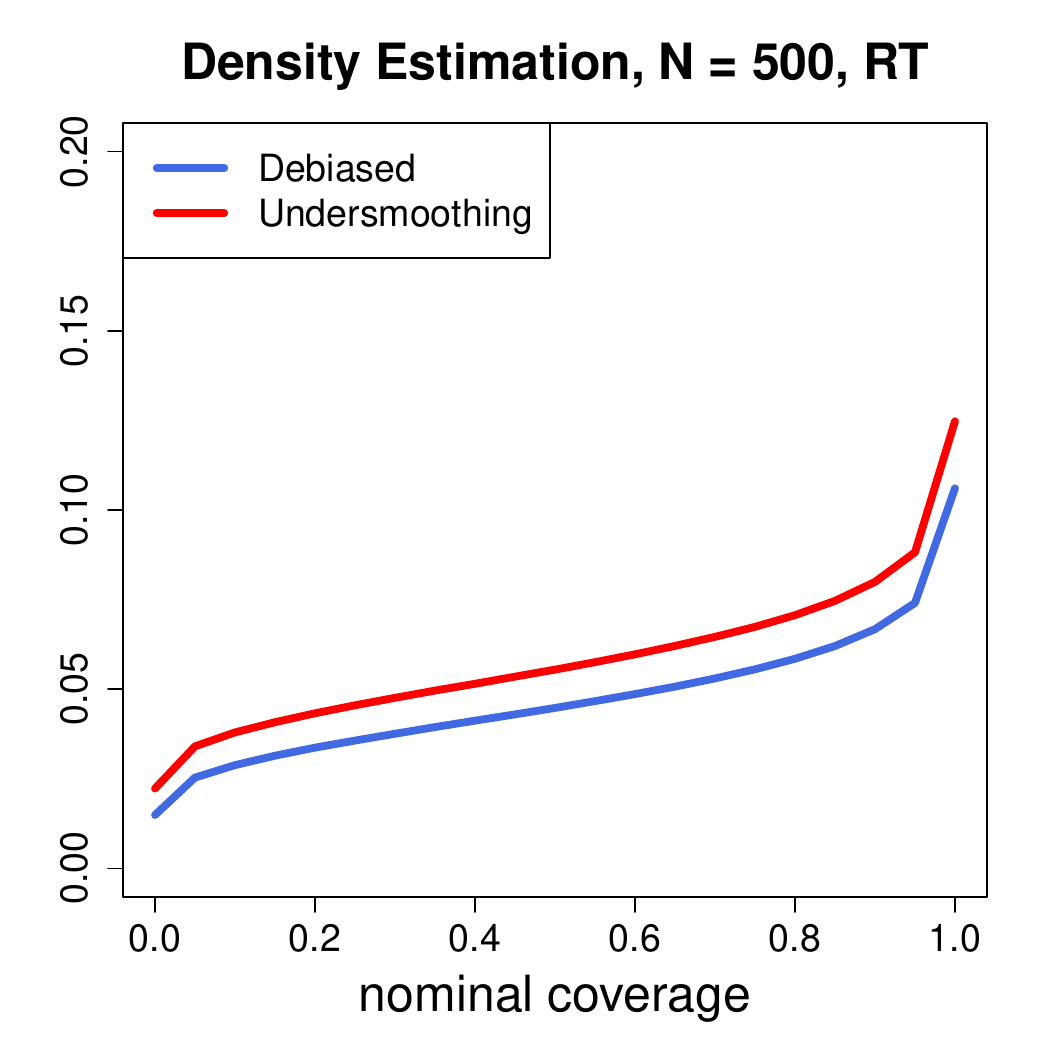}
\includegraphics[height=1.5in]{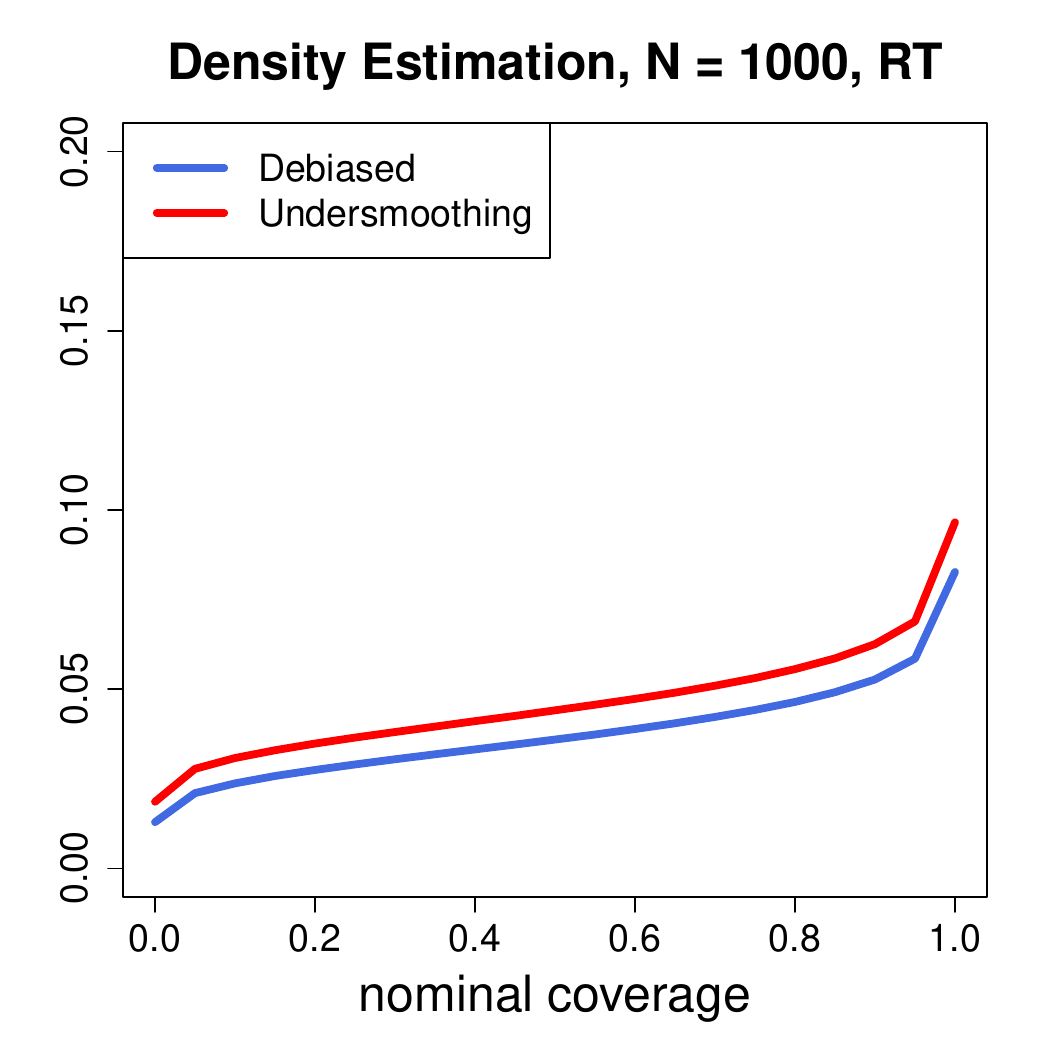}
\includegraphics[height=1.5in]{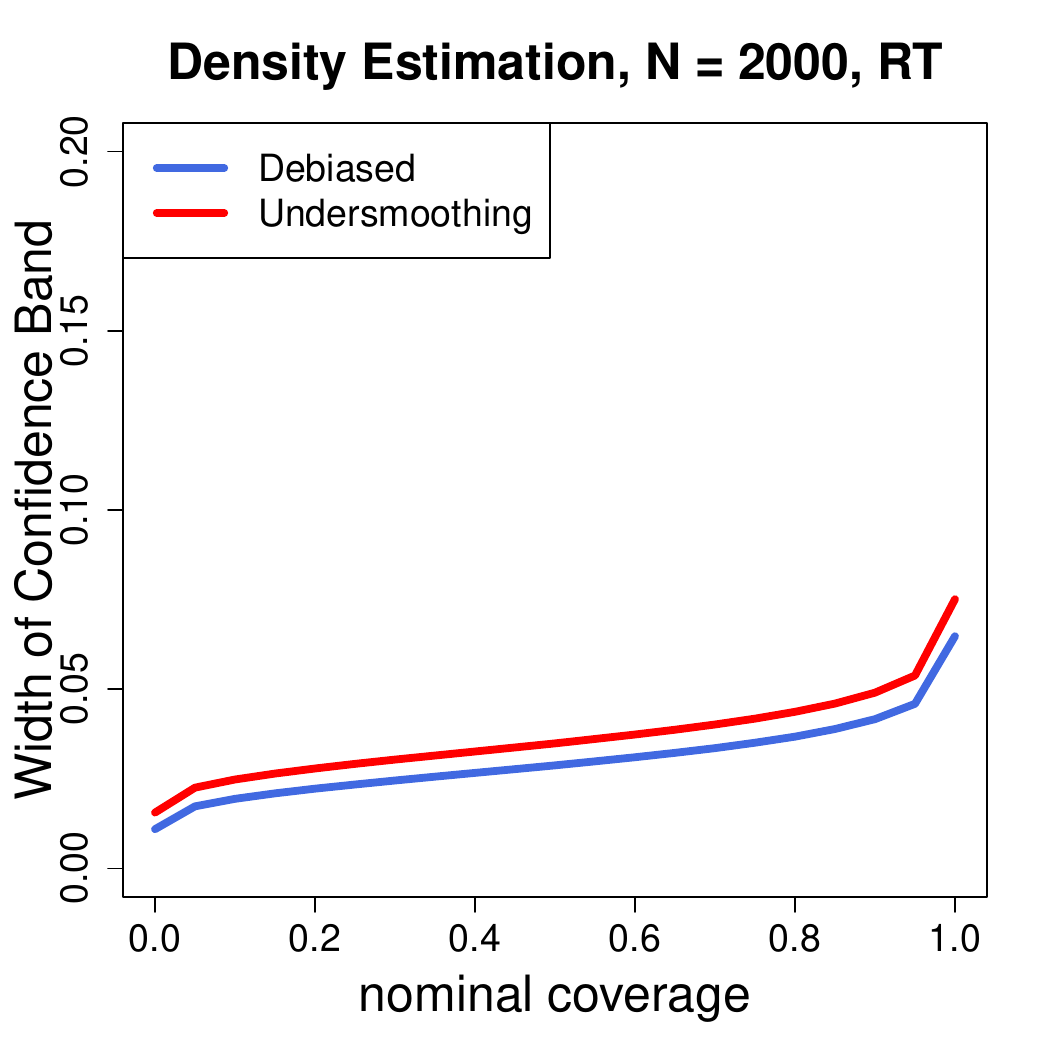}\\
\includegraphics[height=1.5in]{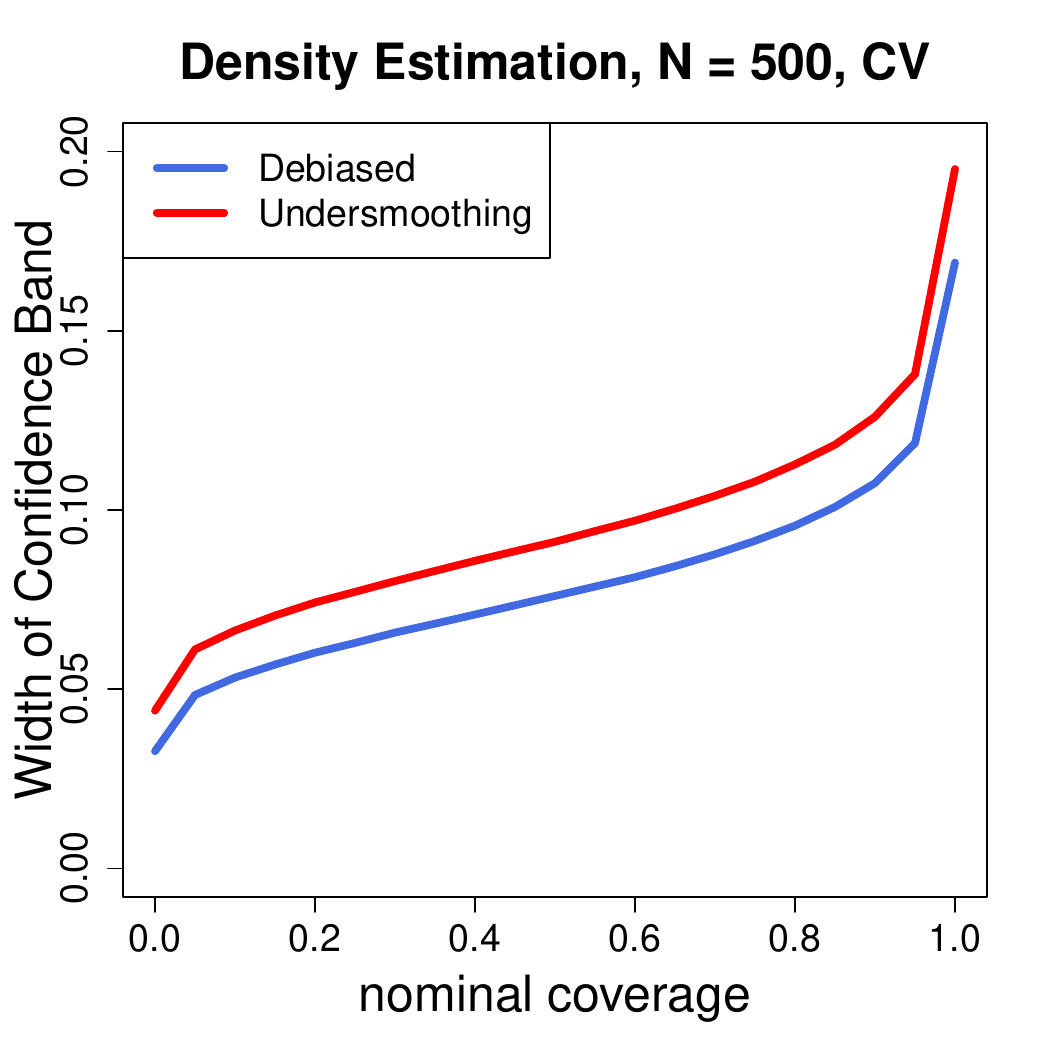} 
\includegraphics[height=1.5in]{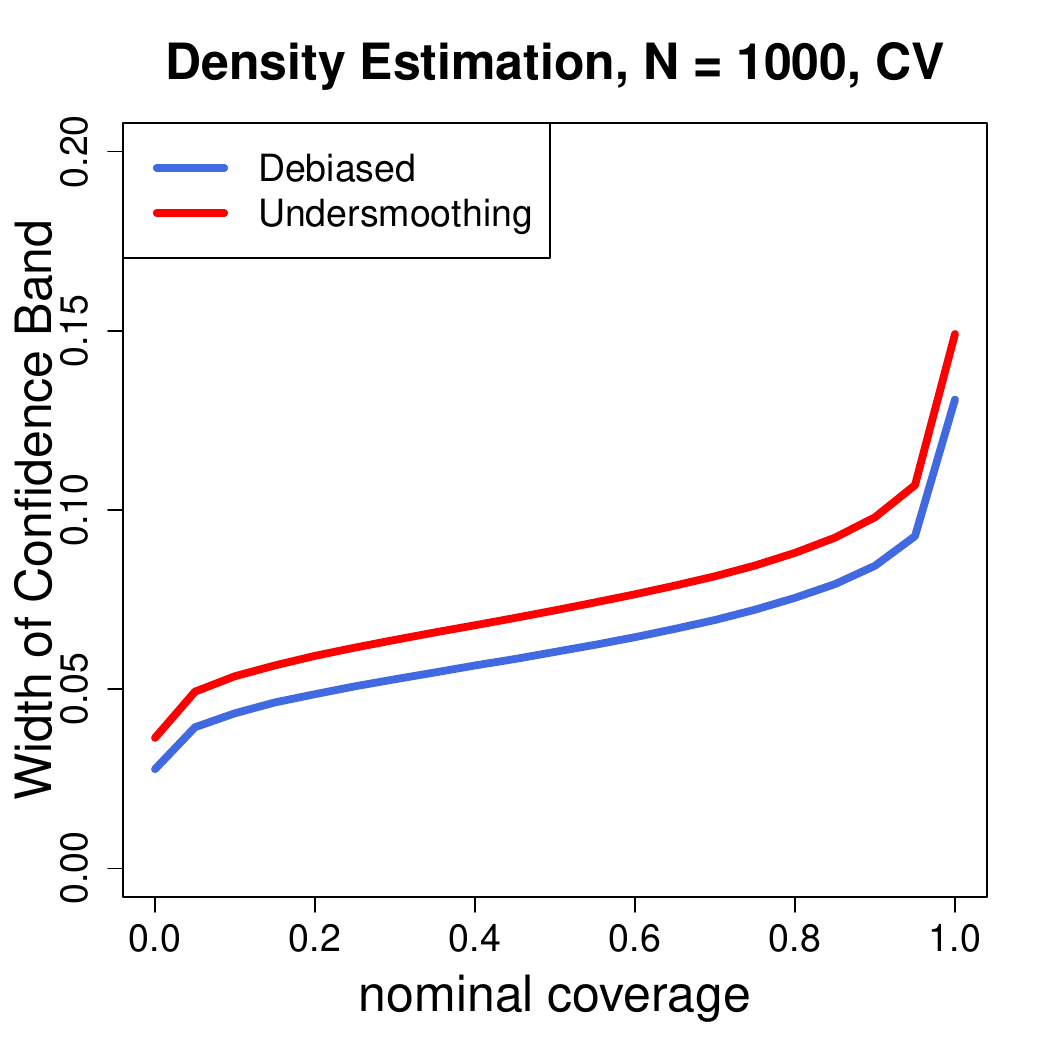} 
\includegraphics[height=1.5in]{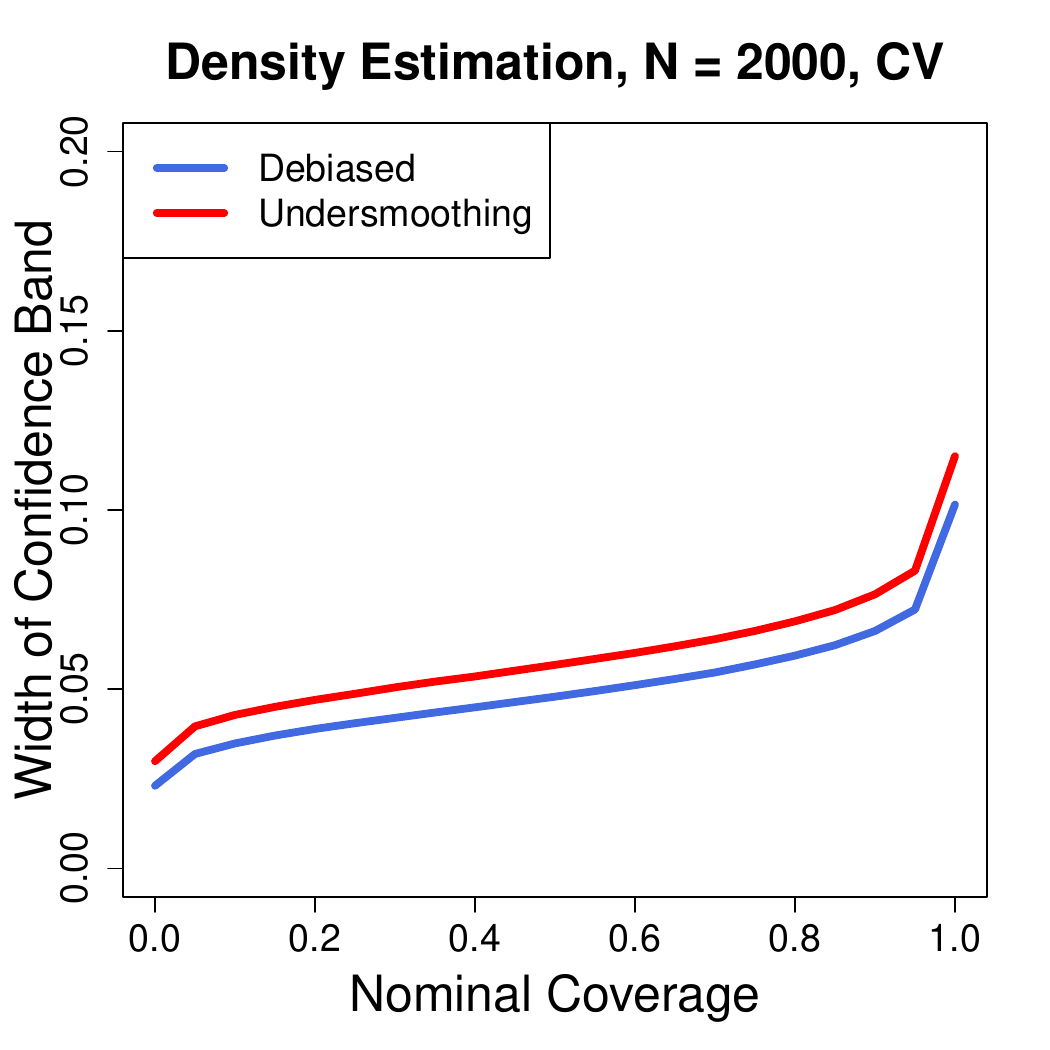} 
\caption{Comparison of width of confidence bands. 
The top row corresponds to the case where bandwidth $h$ is chosen by the Silverman's rule of thumb \citep{Silverman1986}. 
The bottom row corresponds to the case where bandwidth $h$ is chosen by 
the least squared cross validation method \citep{sheather2004density}.  
We compare the width of confidence bands using a debiased estimator and an undersmoothing bandwidth that has a smoothing bandwidth
being half of the chosen bandwidth.
The width of confidence band is computed using the median value over 1000 simulations.  
In every case, the width of confidence bands from the debiased method is always narrower than the ones from the undersmoothing method.}
\label{fig::kde_band_width}
\end{figure}

\begin{figure}[h]
\centering
\includegraphics[height=1.5in]{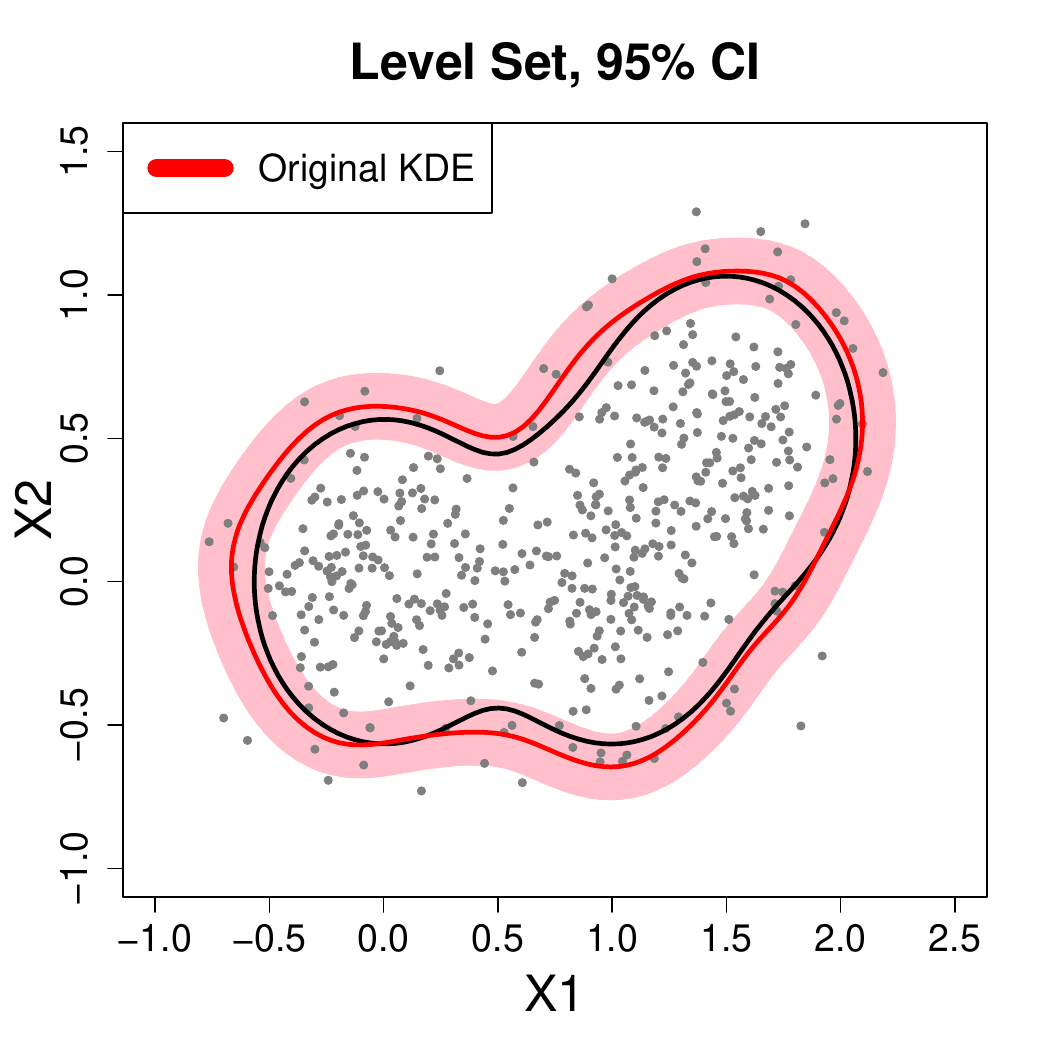}
\includegraphics[height=1.5in]{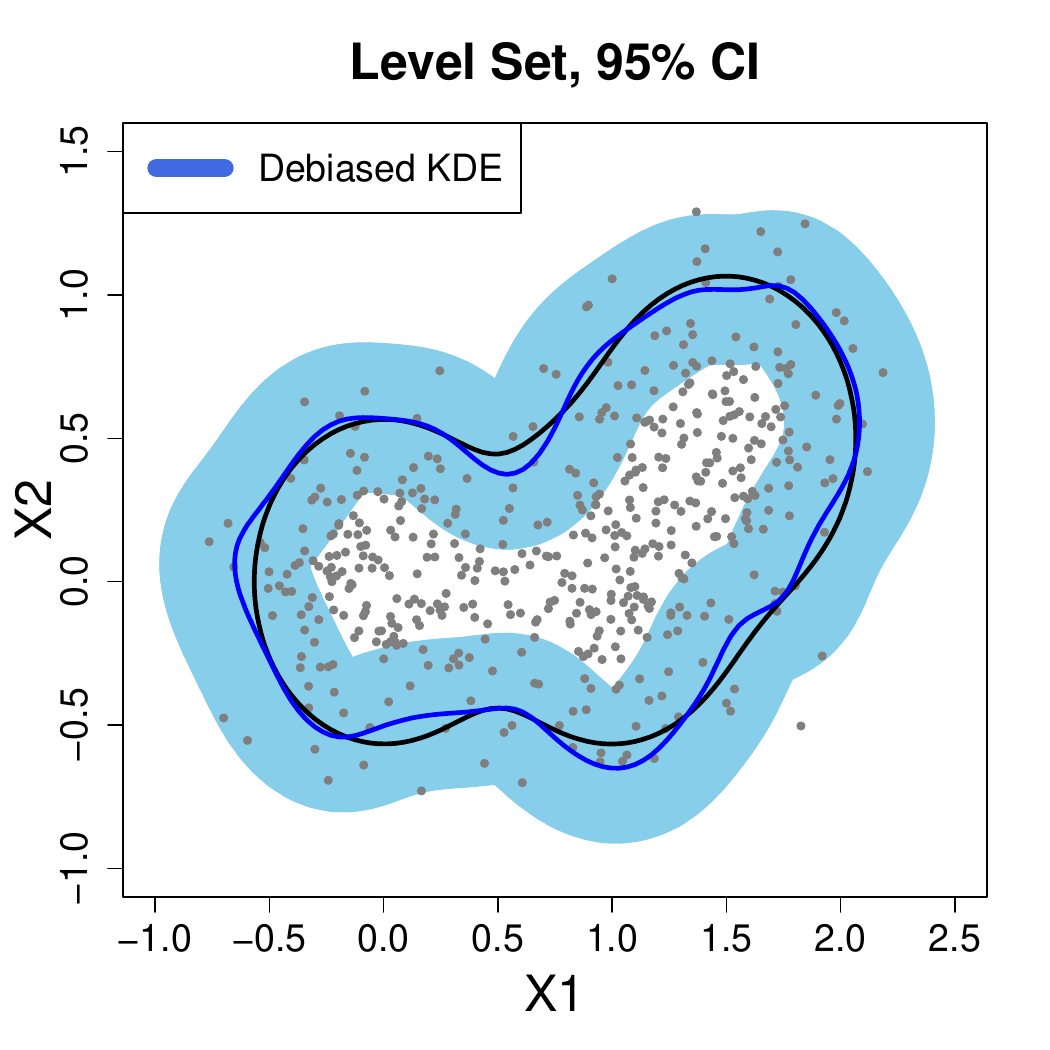}
\includegraphics[height=1.5in]{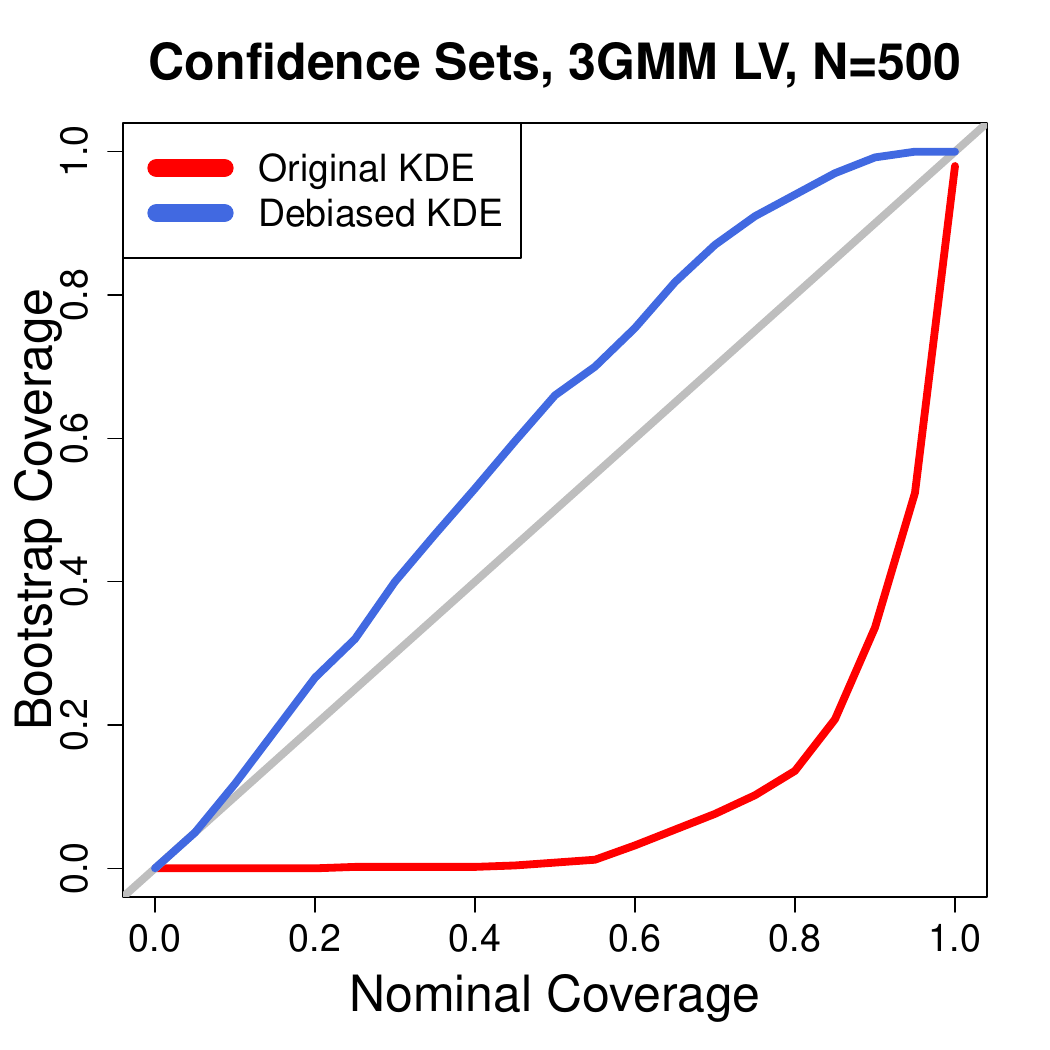}\\
\includegraphics[height=1.5in]{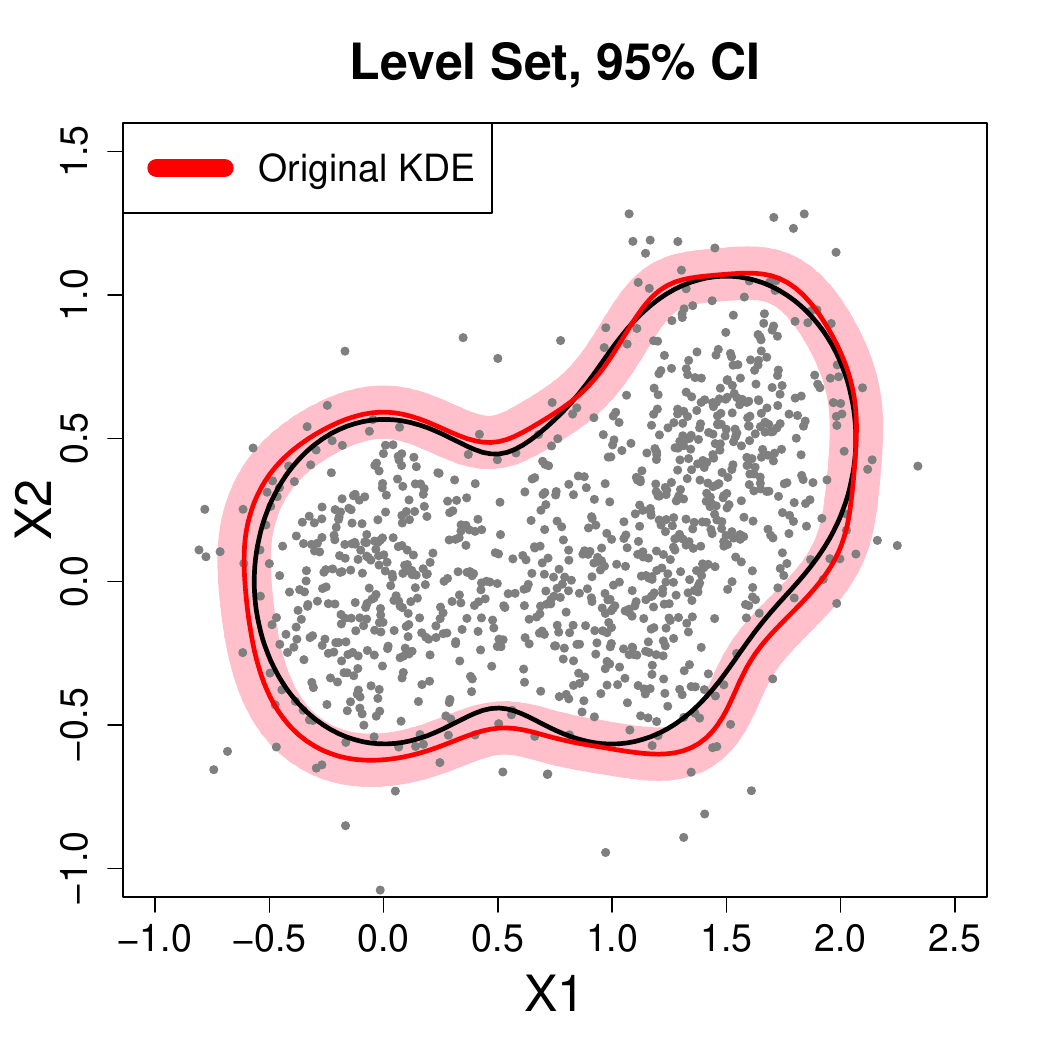}
\includegraphics[height=1.5in]{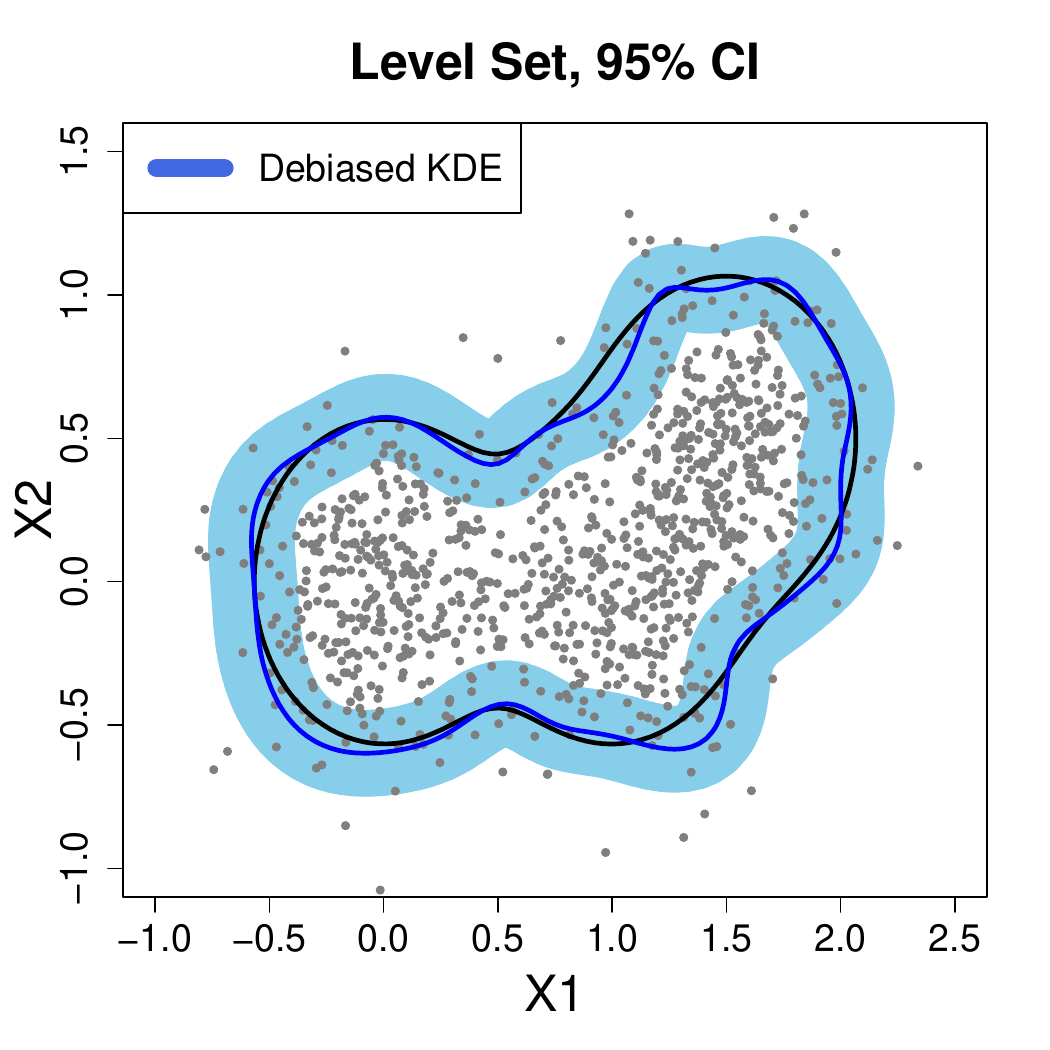}
\includegraphics[height=1.5in]{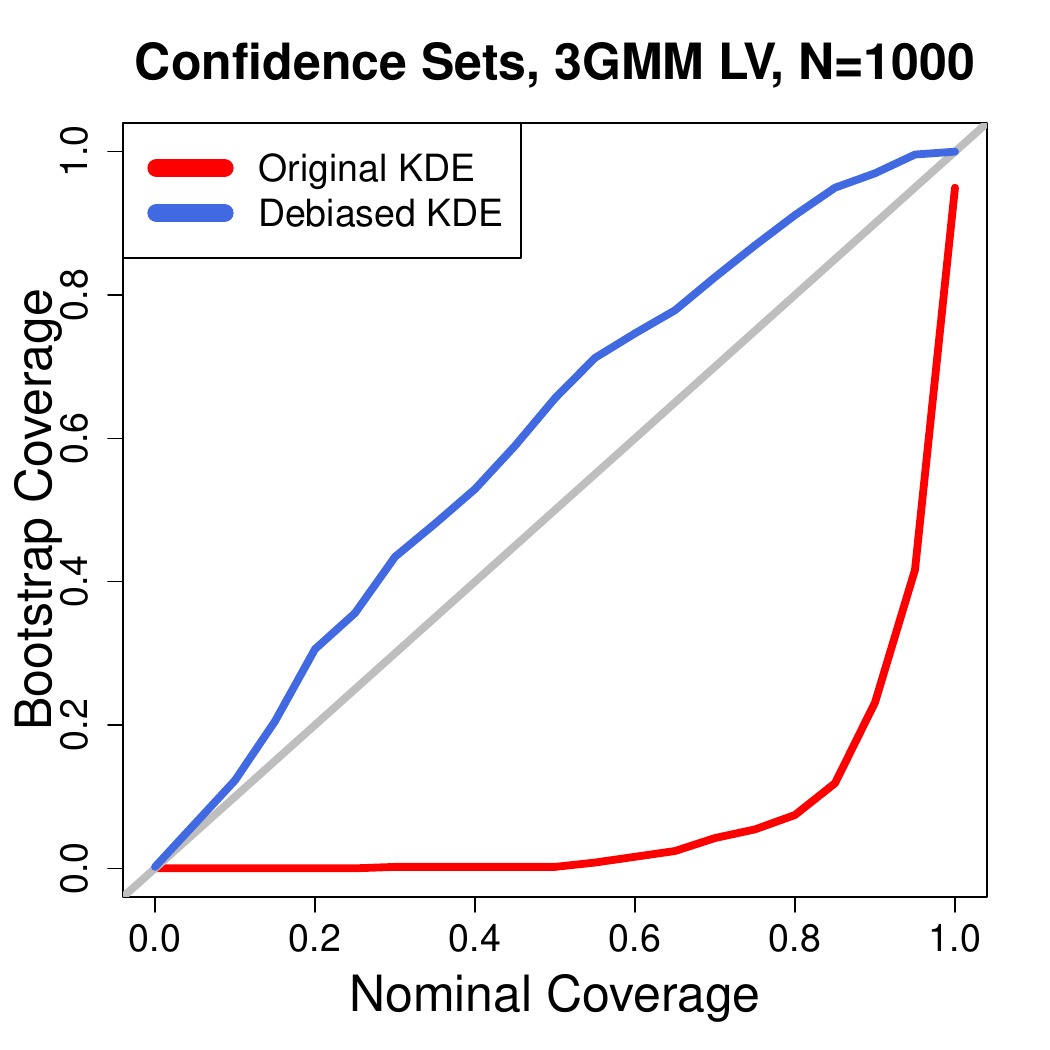}
\caption{Confidence sets of level sets.
In the first column, we display one instance of data points along with the true level contour (black curve),
the estimated level contour using the usual KDE (red curve), and the associated confidence set (red area).
The second column is similar to the first column, but we now use the level set estimator from the debiased KDE (blue curve)
and the blue band is the associated confidence set. 
The third column shows the coverage of the bootstrap confidence set and the nominal level.
The top row is the result of $n=500$ and the bottom row is the result of $n=1000$. 
Based on the third column, we see that bootstrapping the original KDE does not give
us a valid confidence set (we are under coverage) but bootstrapping the debiased KDE
does yield an asymptotically valid confidence set. }
\label{fig::LV01}
\end{figure}

{\bf Level sets.}
Next, we consider constructing the bootstrapped confidence sets of level sets. 
We generate the data from a Gaussian mixture model with three components: 
$$
N((0,0)^T, 0.3^2\cdot \mathbf{I}_2),\quad
N((1,0)^T, 0.3^2\cdot \mathbf{I}_2),\quad
N((1.5,0.5)^T, 0.3^2\cdot \mathbf{I}_2),
$$
where $\mathbf{I}_2$ is the $2\times 2$ identity matrix.
We have equal probability ($1/3$) to generate a new observation from each of the three Gaussians.
We use the level $\lambda=0.25$. 
This model has been used in \cite{chen2017density}.
The black contours in the left two columns of Figure~\ref{fig::LV01}
provide examples of the corresponding level set $D$.

We consider two sample sizes: $n=500$ and $1000$.
We choose the smoothing bandwidth by the rule of thumb \citep{Chacon2011,Silverman1986}
and apply the bootstrap 1000 times to construct the confidence set.
We repeat the entire procedure 1000 times to evaluate coverage, and the coverage plot is given in 
the right column of Figure~\ref{fig::LV01}. 
In both cases, the red curves are below the gray line (45 degree line).
This indicates that bootstrapping the usual level set does not give us a valid confidence set;
the bootstrap coverage is below nominal coverage.
On the other hand, the blue curves in both panels are close to the gray line,
showing that bootstrapping the debiased KDE does yield a valid confidence set.

\subsection{Simulation: Regression}	\label{sec::sim::reg}

Now we show that bootstrapping the debiased local linear smoothers yields
a valid confidence band/set of the regression function and inverse regression.

\begin{figure}[h]
\centering
\includegraphics[height=1.5in]{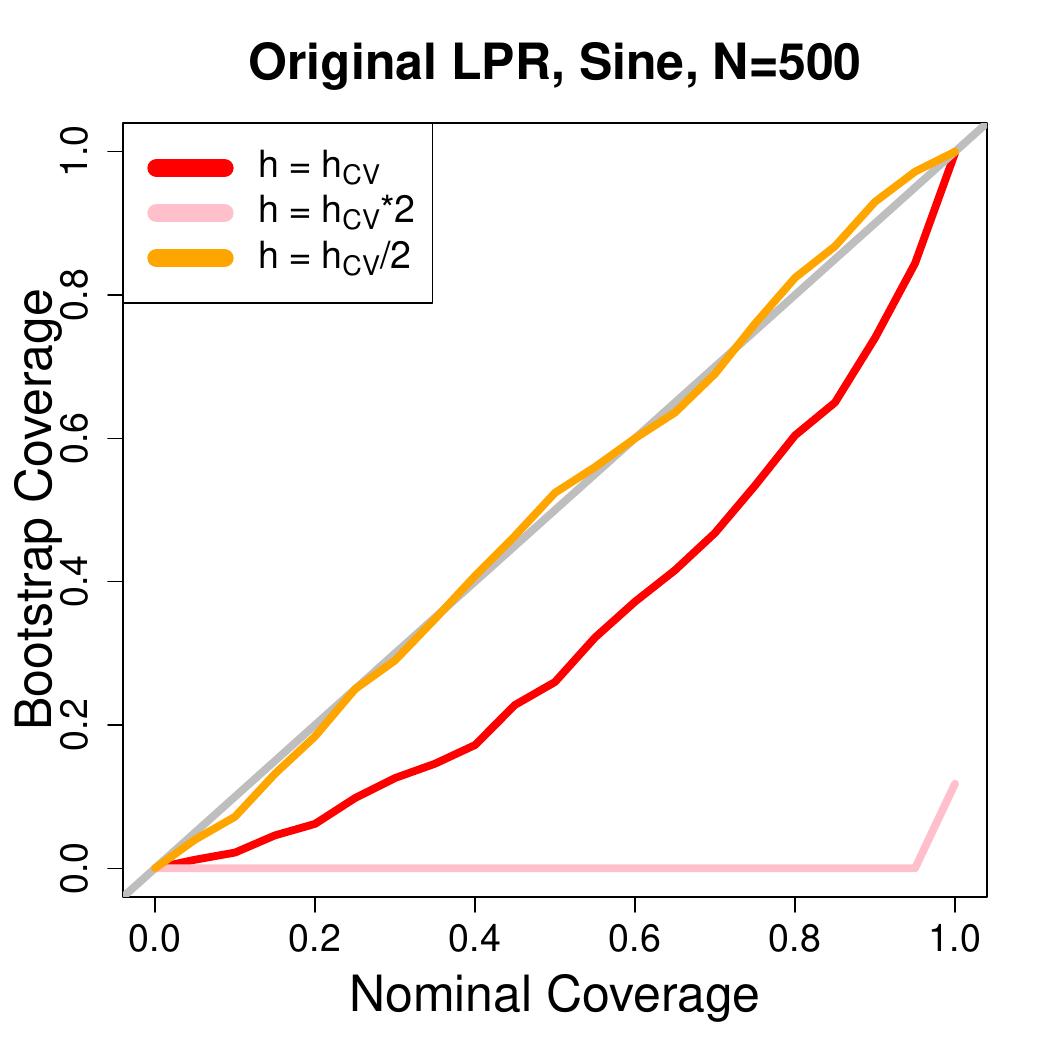}
\includegraphics[height=1.5in]{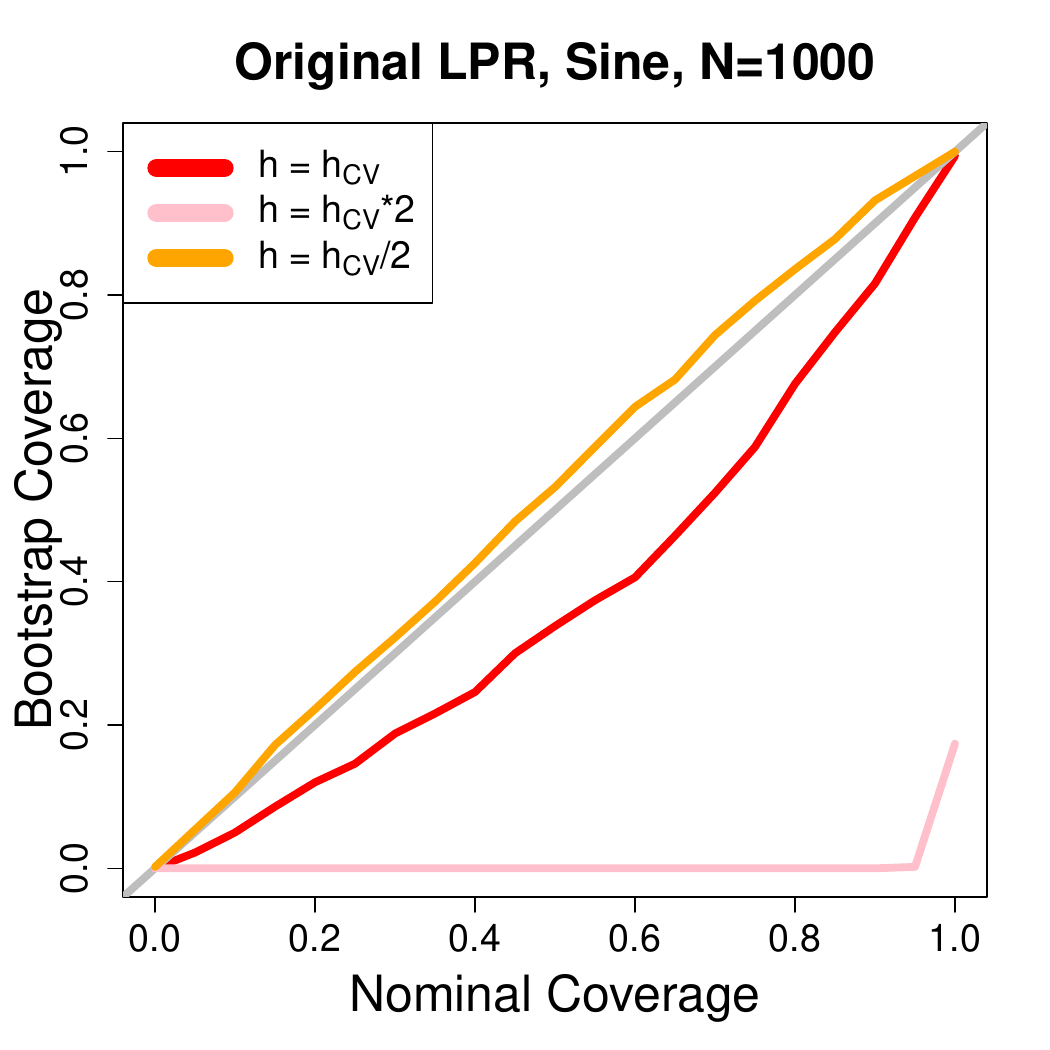}
\includegraphics[height=1.5in]{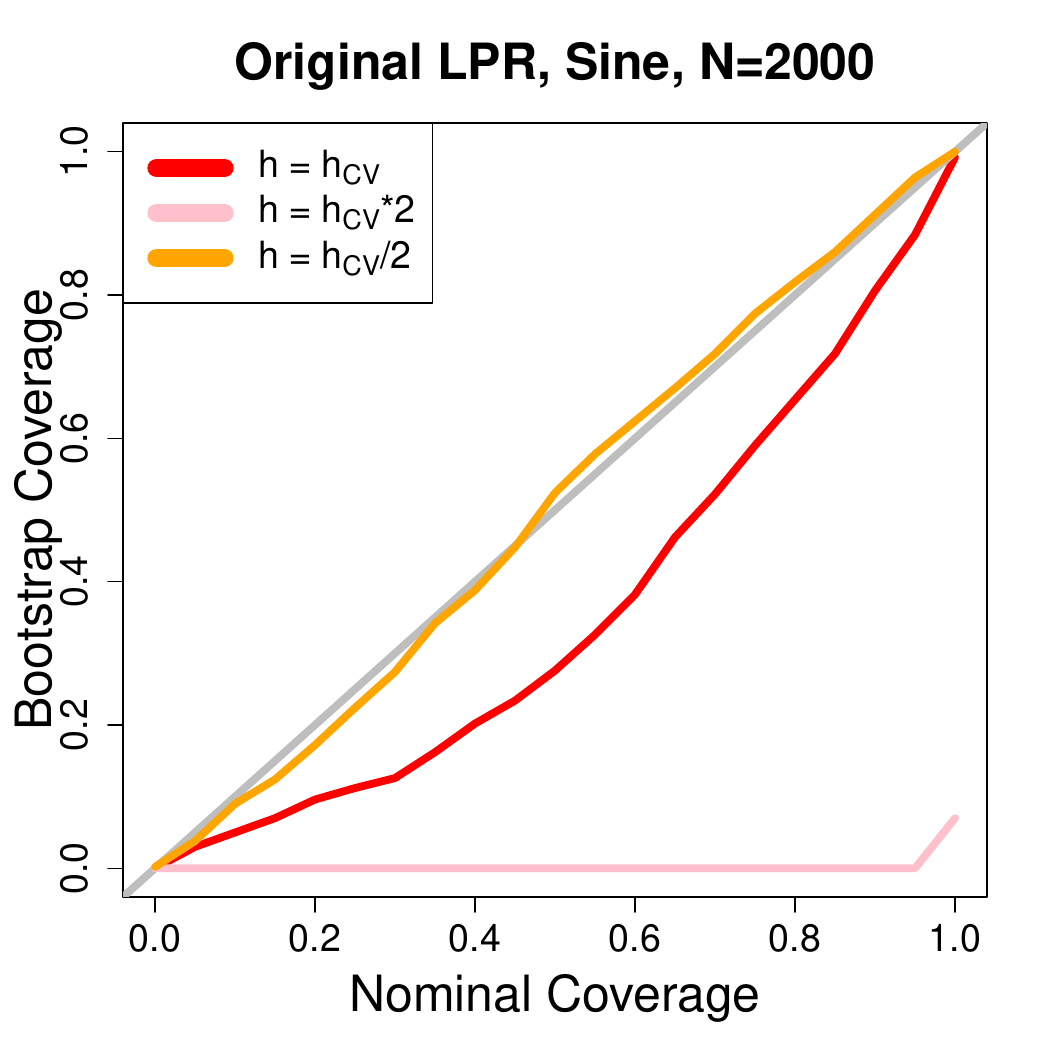}\\
\includegraphics[height=1.5in]{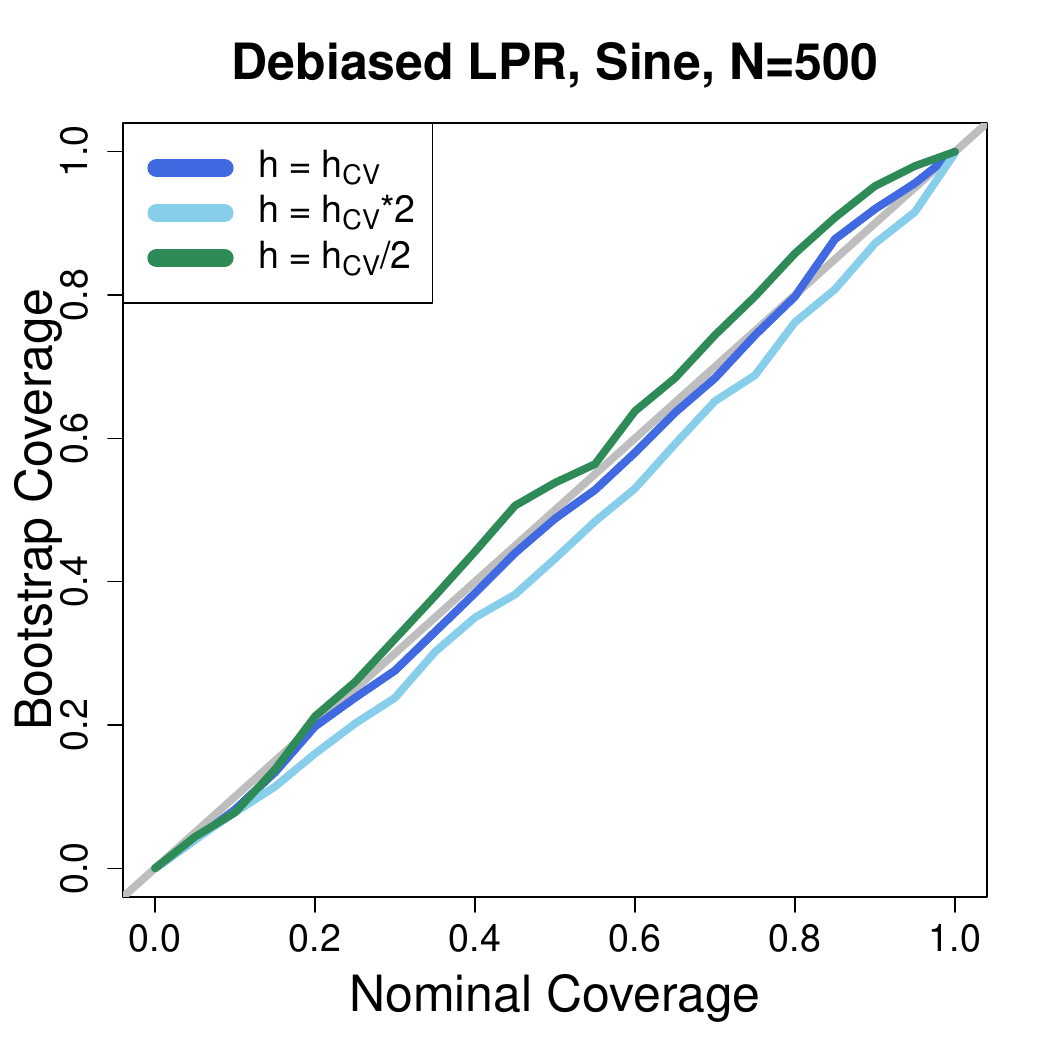}
\includegraphics[height=1.5in]{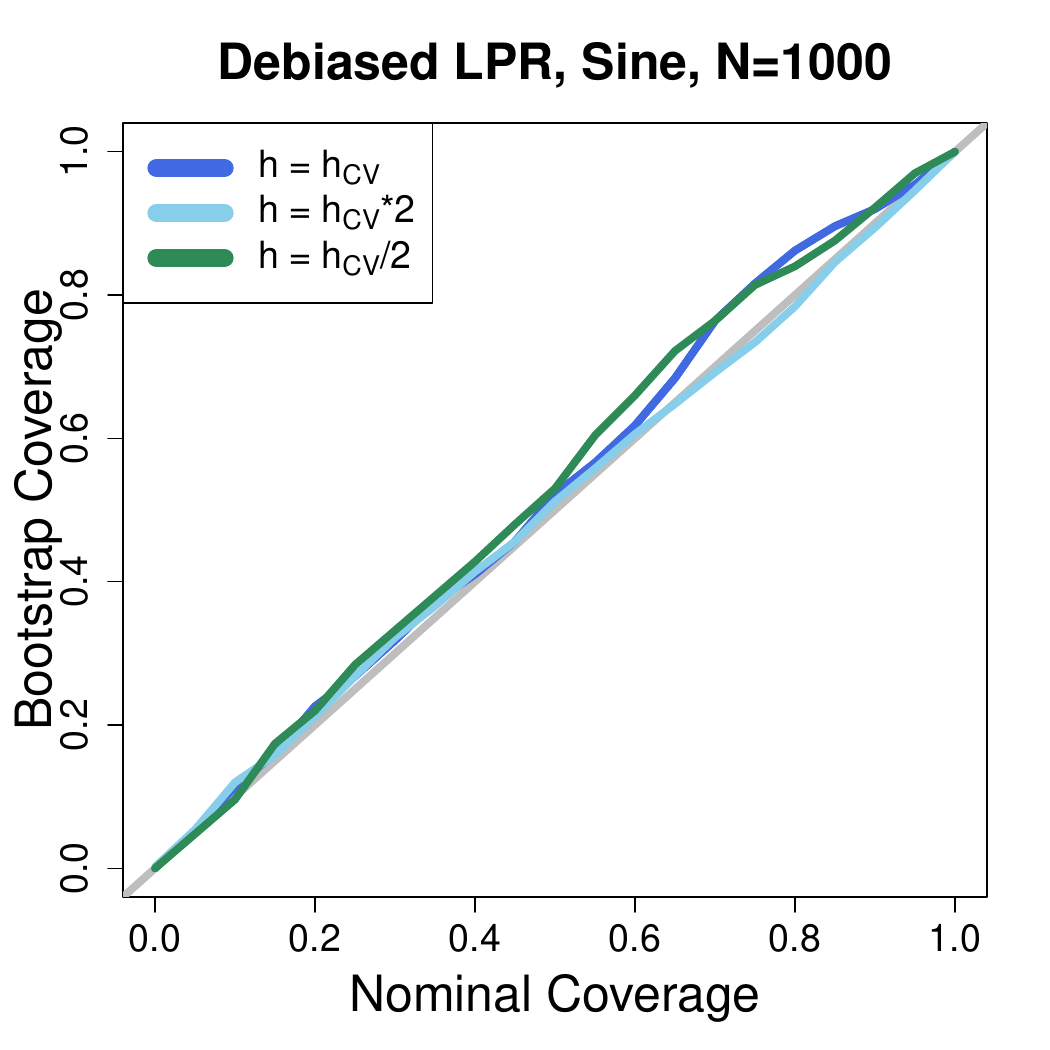}
\includegraphics[height=1.5in]{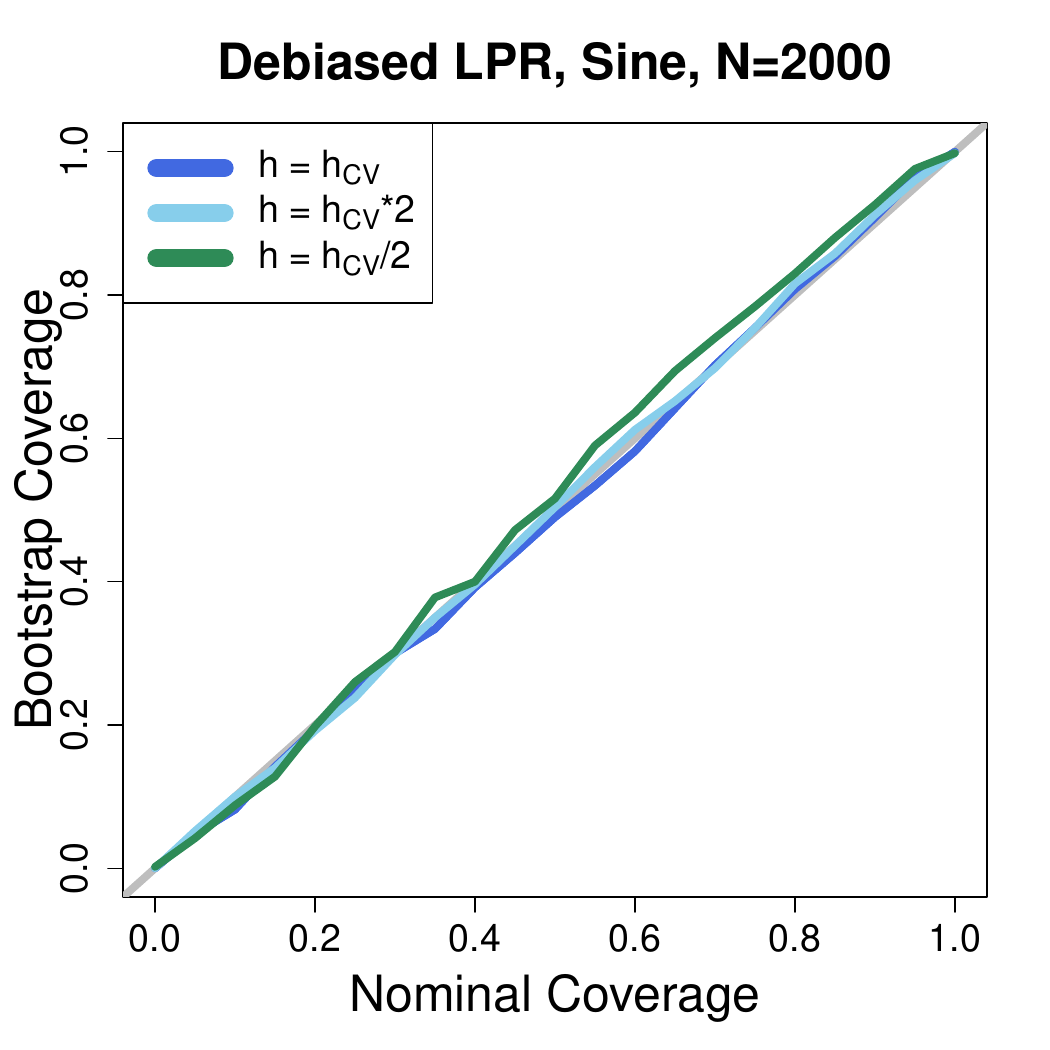}
\caption{Confidence band of regression. 
We use the same `sine' dataset as in Figure~\ref{fig::ex01} and consider three sample sizes: 
$n=500, 1000$, and $2000$.
And we consider $3$ different smoothing bandwidths: $h_{CV}$, $h_{CV}\times 2$, and $h_{CV}/2$,
where $h_{CV}$ is the bandwidth from $5$-fold cross-validation on the original local linear smoother.
The top row is the bootstrap coverage of the local linear smoother without debiasing.
The bottom row shows the bootstrap coverage of the debiased local linear smoother. 
We see a clear pattern that the debiased local linear smoother attains nominal coverage for
all three bandwidths. On the other hand, only in the undersmoothing case ($h_{CV}/2$) does the original local linear smoother have
nominal coverage.}
\label{fig::reg}
\end{figure}

{\bf Regression functions.}
To show the validity of confidence bands, we generate pairs of random variables $(X,Y)$
from
\begin{align*}
X&\sim {\sf Unif}[0,1],\\
Y&= \sin(\pi \cdot X) + \epsilon,\\
\epsilon&\sim N(0,0.1^2),
\end{align*}
where $X$ and $\epsilon$ are independent.
This is 
the same as the model used in the bottom row of Figure~\ref{fig::ex01}.
In the bottom left panel of Figure~\ref{fig::ex01}, we display
the true regression function (black curve),
the original local linear smoother (red curve), and the debiased local linear smoother (blue curve).
We consider three sample sizes: $n=500, 1000,$ and $2000$.
The smoothing bandwidth $h_{CV}$ is chosen using a $5$-fold cross-validation of the original local linear smoother. 
In addition to $h_{CV}$, we also consider $h_{CV}\times 2$ and $h_{CV}/2$
to show the robustness of the confidence bands against bandwidth selection. 
We then bootstrap both the original local linear smoother and the debiased local linear smoother
to construct confidence bands. 
Note that we restrict ourselves to the regions $[0.1,0.9]\subset [0,1]$, which is a subset of the support 
to avoid the problem of insufficient data points in the boundary.
The result is shown in Figure~\ref{fig::reg}. 
In the top panel, 
we present the coverage of bootstrapping the original local linear smoother. 
Only in the case of $h_{CV}/2$ (undersmoothing) do the confidence bands attain nominal coverage.
This makes sense because when we are undersmoothing the data, the bias vanishes faster than the stochastic 
variation so the bootstrap confidence bands are valid.
In the bottom panel, we present the coverage of bootstrapping the debiased local linear smoother.
It is clear that all curves are around the gray line, which means that 
the confidence bands attain nominal coverage in all the three smoothing bandwidths. 
Thus, this again shows the robustness of the confidence band from the debiased estimator against different bandwidths.  

To further investigate the property of confidence bands,
we apply the same analysis as in the KDE that we compare the width of confidence bands from
the debiased estimator (blue) and an undersmoothing estimator (red) in Figure~\ref{fig::lpr_band_width}.
In the top row, we choose the smoothing bandwidth by the rule of thumb
and in the bottom row, we choose the smoothing bandwidth by the $5$-fold cross-validation. 
The width is computed using the median width over $1000$ simulations. 
When the bandwidth is chosen by the rule of thumb, 
the two confidence bands have a very similar width.
However, when we use the $5$-fold cross-validation, the debiased estimator
has a confidence band with a narrower width. 

{We also compared our approaches to several other methods on constructing uniform confidence bands, including undersmoothing (US), off-the-shelf R package \texttt{locfit} \citep{loader2013locfit}, traditional bias correction (BC),  robust bias correction  in \cite{calonico2018effect} (Robust BC), and the simple bootstrap method of \citep{hall2013simple}(HH). 
The data are generated by the following model: 
\begin{align*}
X&\sim {\sf Unif}[-1,1],\\
Y&= sin(3\pi x/2)/ (1 + 18 x^2 [\text{sign}(x) + 1]) + \epsilon,\\
\epsilon&\sim N(0,0.1^2),
\end{align*}
This function was previously used by \cite{berry2002bayesian, hall2013simple, calonico2018effect} to construct pointwise confidence intervals.  We run simulations for 1000 times with each method.  For all but robust bias correction method, the smoothing bandwidth $h_0$ was chosen by the cross validation using \texttt{regCVBwSelC} method or rule of thumb using \texttt{thumbBw} with gaussian kernel both from the \texttt{locpol}\citep{cabrera2018locpol} package \footnote{In our experiments, we adjust the bandwidth for \texttt{locfit} by multiplying it by 2.5 since it looks like that \texttt{locfit} has some "automatic undersmoothing" effect when fitting a local linear smoother, we visually check the smoothness of resulting estimator and found that multiplying it by 2.5 gives similar result to other local linear packages}. For robust bias correction method, we use its own bandwidth selection algorithm.  Again we restrict the uniform confidence band to the regions $[-0.9, 0.9] \in [-1, 1]$. }

{Specifically, the undersmoothing method uses bandwidth $h_0 / 2$ to perform bootstrap with original local linear smother. For traditional bias correction method,  we use a second bandwidth for estimating the second order derivative with cross validation or rule of thumb for the third-order local polynomial regression\footnote{again using either \texttt{regCVBwSelC} or \texttt{thumbBw} from the \texttt{locpol} package}.  For both undersmoothing and traditional bias correction methods, we apply a similar bootstrap strategy as in Figure \ref{fig::alg::reg} and bootstrap 1000 times as in our debiased approach.  Further, we consider three cases with $n = 500, 1000, 2000$.  Notice that only undersmoothing, traditional bias correction and locfit \citep{sun1994simultaneous} are tailored for uniform confidence band, HH method \citep{hall2013simple} and robust bias correction\citep{calonico2018effect} are only for pointwise confidence intervals.  We do not report the results for HH method since it is especially bad for uniform coverage as there would be ``exptected 100$\xi$\% of points that are not covered"\citep{hall2013simple}. }
\begin{table}[H]
\caption{Empirical Coverage of 95\% simultaneous confidence band}
\begin{center}
\begin{tabular}{c c c  c c  c  c }
\hline
 & & \multicolumn{5}{c}{Empirical coverage} \\
 \hline
 n & BW Selection & US & locfit & BC &  Robust BC & Debiased \\
 \hline
 500 & CV & 0.993  & 0.848 & 0.959 & 0.074 & 0.976 \\
  & ROT & 0.968 & 0.313 & 0.946  &  - &  0.963 \\
1000 & CV & 0.99 & 0.872 & 0.965 & 0.052 & 0.976 \\
  & ROT & 0.971 & 0.28 & 0.935 & -&  0.961 \\
2000 & CV & 0.982 & 0.862 & 0.968 & 0.041 & 0.963 \\
 & ROT & 0.963 & 0.233 & 0.927 & -& 0.965 
\end{tabular}
\end{center}
\label{table::lpr_comparison_coverage}
\end{table}

\begin{table}[H]
\caption{Average width of 95\% simultaneous confidence band}
\begin{center}
\begin{tabular}{c c c  c c  c  c }
\hline
 & &  \multicolumn{5}{c}{Average Confidence Band width} \\
 \hline
 n & Bw Selection &  US & locfit & BC & Robust BC & Debiased \\
 \hline
 500 & CV & 0.122 & 0.061& 0.070 & 0.039 & 0.090 \\
  & ROT & 0.085  &  0.049 & 0.060 & - & 0.072 \\
1000 & CV & 0.081 & 0.047 & 0.051 & 0.030 & 0.066 \\
  & ROT & 0.060 & 0.037 & 0.043 & -& 0.052\\
2000 & CV & 0.057 & 0.035 & 0.038 & 0.023 & 0.049 \\
 & ROT & 0.044 & 0.028 & 0.031 & -& 0.038
\end{tabular}
\end{center}
\label{table::lpr_comparison_width}
\end{table}

{Table \ref{table::lpr_comparison_coverage} and \ref{table::lpr_comparison_width} display the empirical coverage and average confidence band width over 1000 replications. It appears that our debiased approach and undersmoothing approach always achieve the nominal coverage.  Traditional bias correction also works pretty well with cross validated bandwidth and undercovers only a bit with rule of thumb bandwidth. Our debiased approach has a narrower confidence band compared to the undersmoothing approach, but is wider than traditional bias correction. It is interesting that the traditional bias correction is working very well combined with bootstrap strategy. The consistent estimation of the bias seems to be helping with the confidence band in this case. 
Note that the only difference between the traditional bias correction approach and our approach is that our approach uses the same smoothing
bandwidth for both regression function estimation and bias correction whereas the bias correction approach uses two different
smoothing bandwidth.
Locfit and the pointwise robust bias correction interval always undercovers.  More simulations are provided in Appendix \ref{sec::REG::sim::app}. }


\begin{figure}
\centering
\includegraphics[height=1.5in]{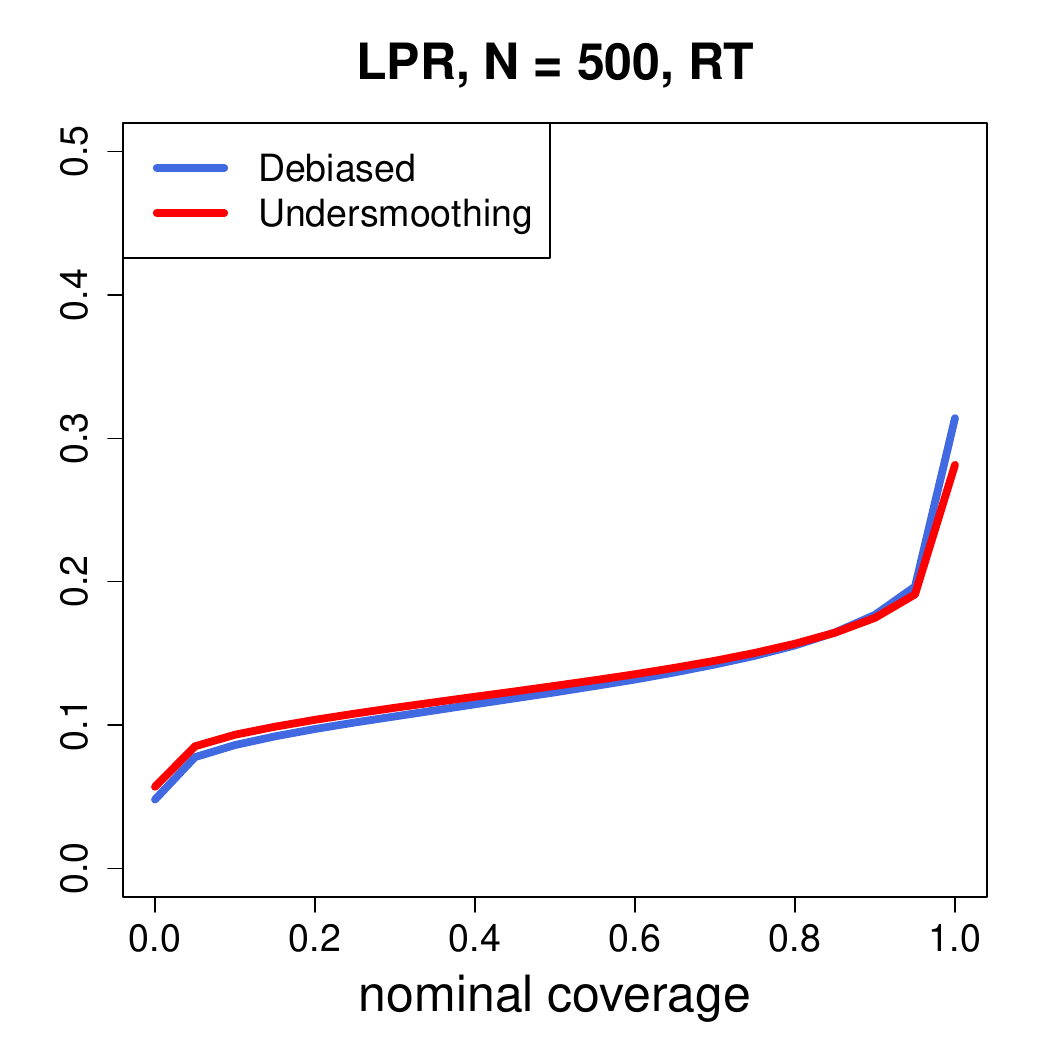}
\includegraphics[height=1.5in]{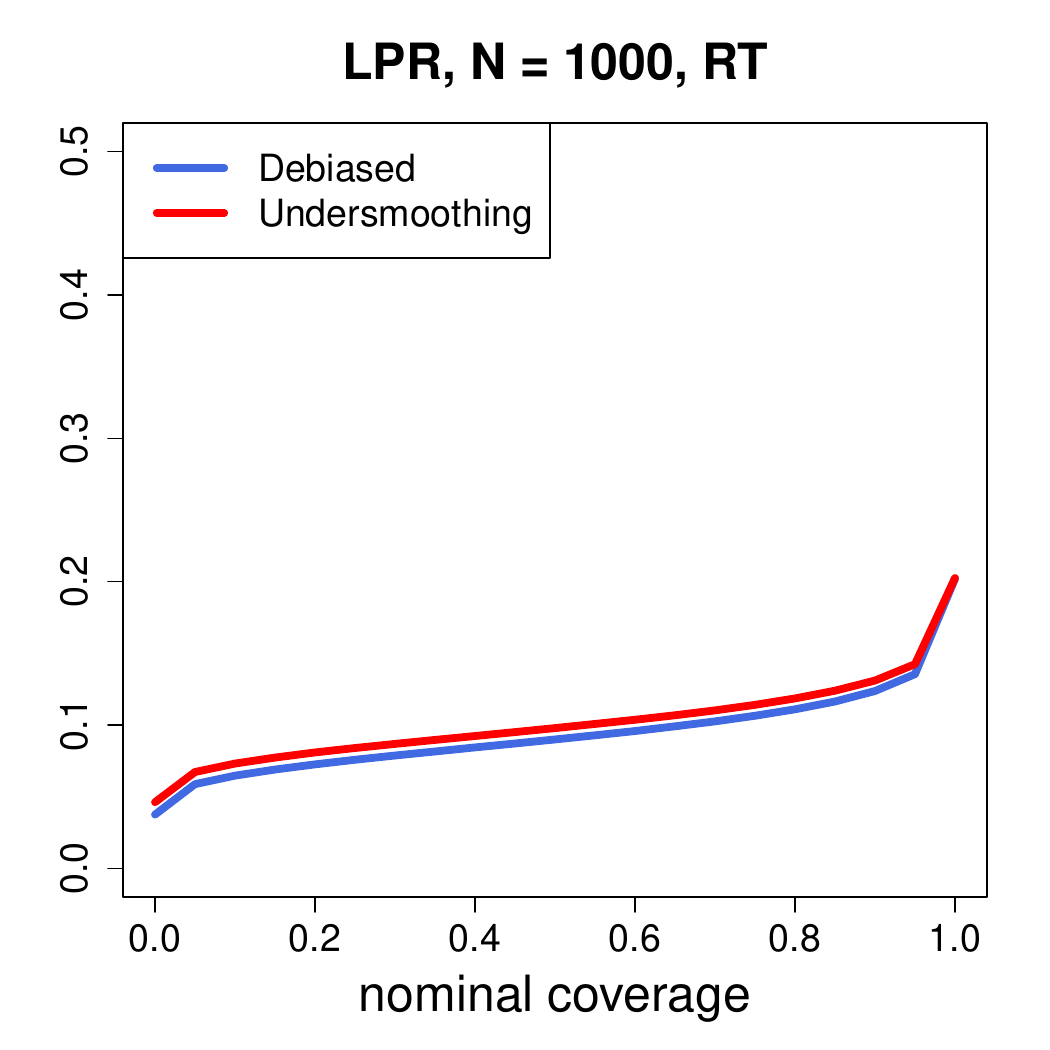}
\includegraphics[height=1.5in]{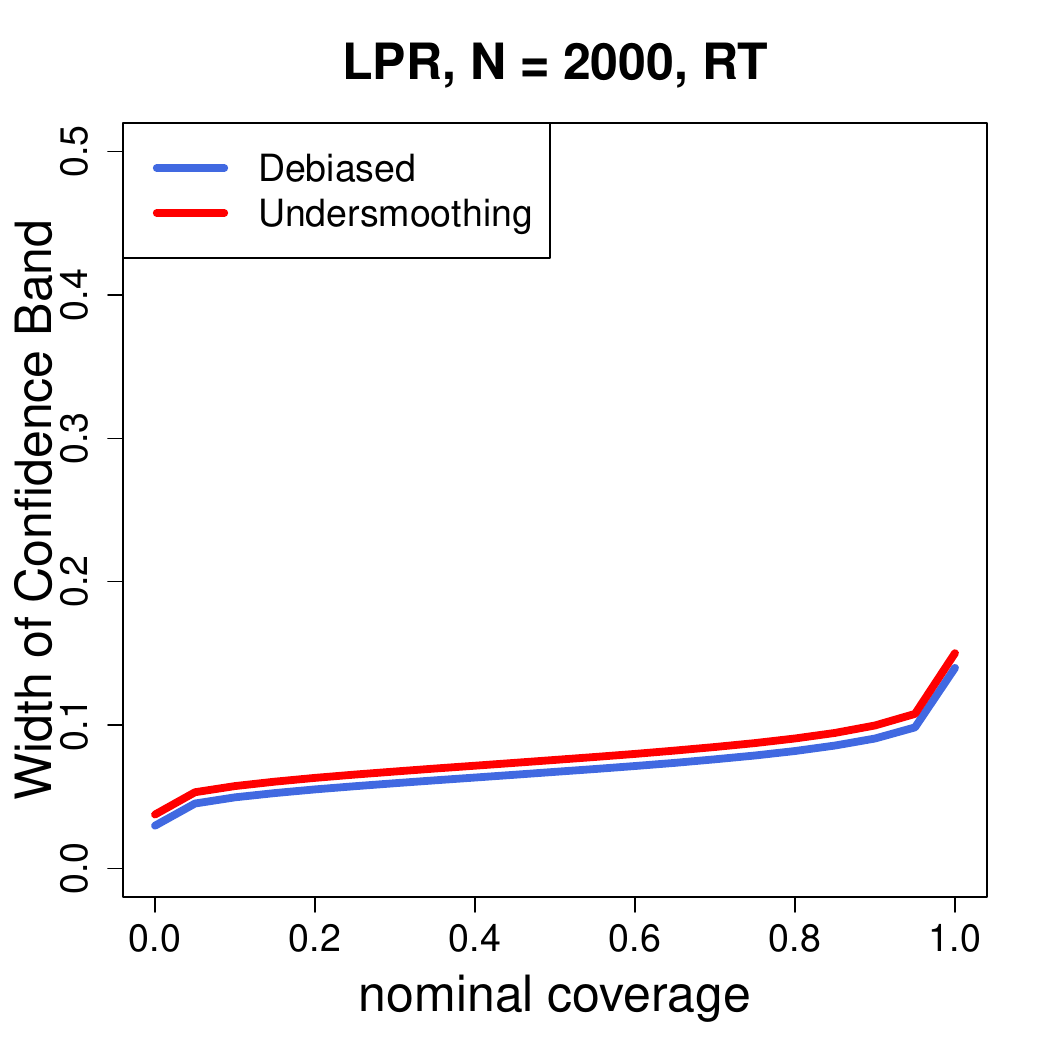}\
\includegraphics[height=1.5in]{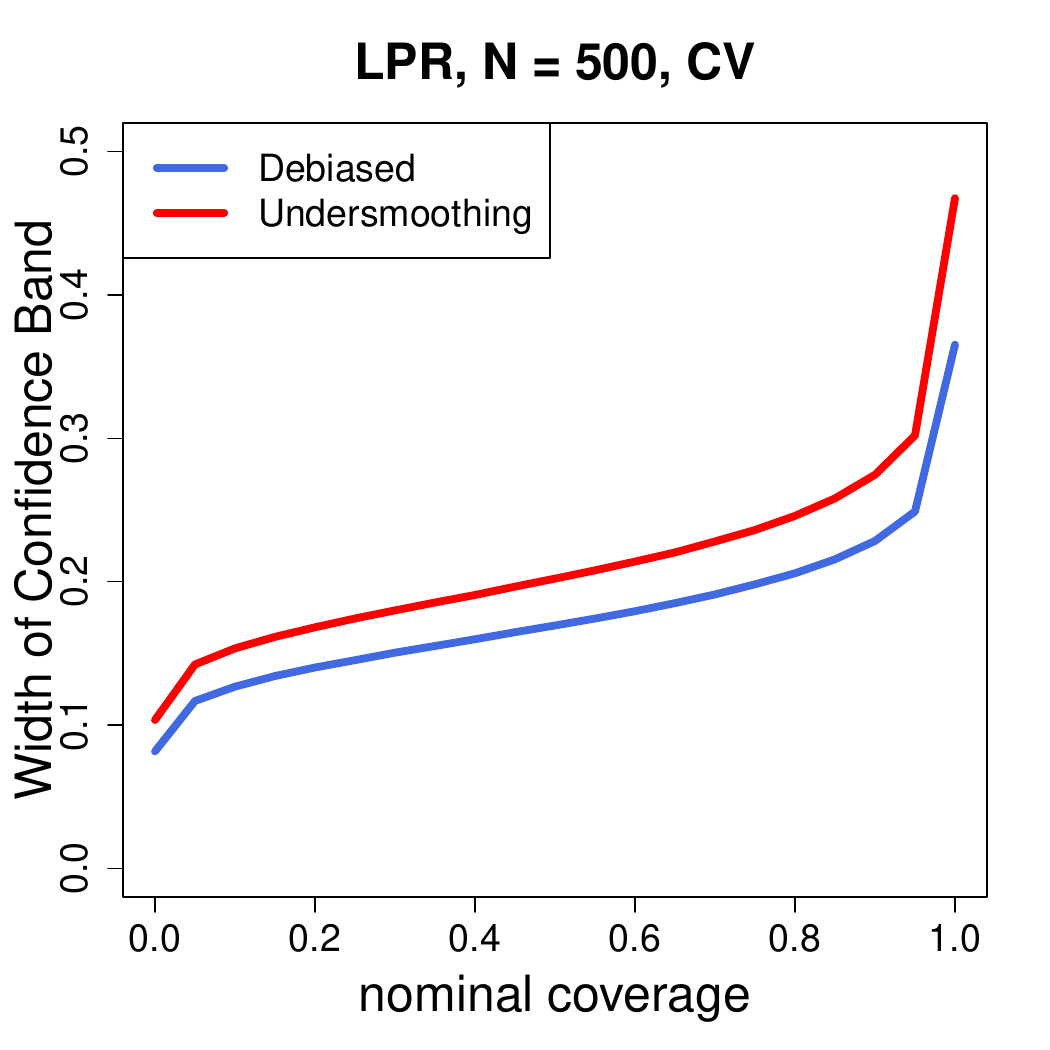}
\includegraphics[height=1.5in]{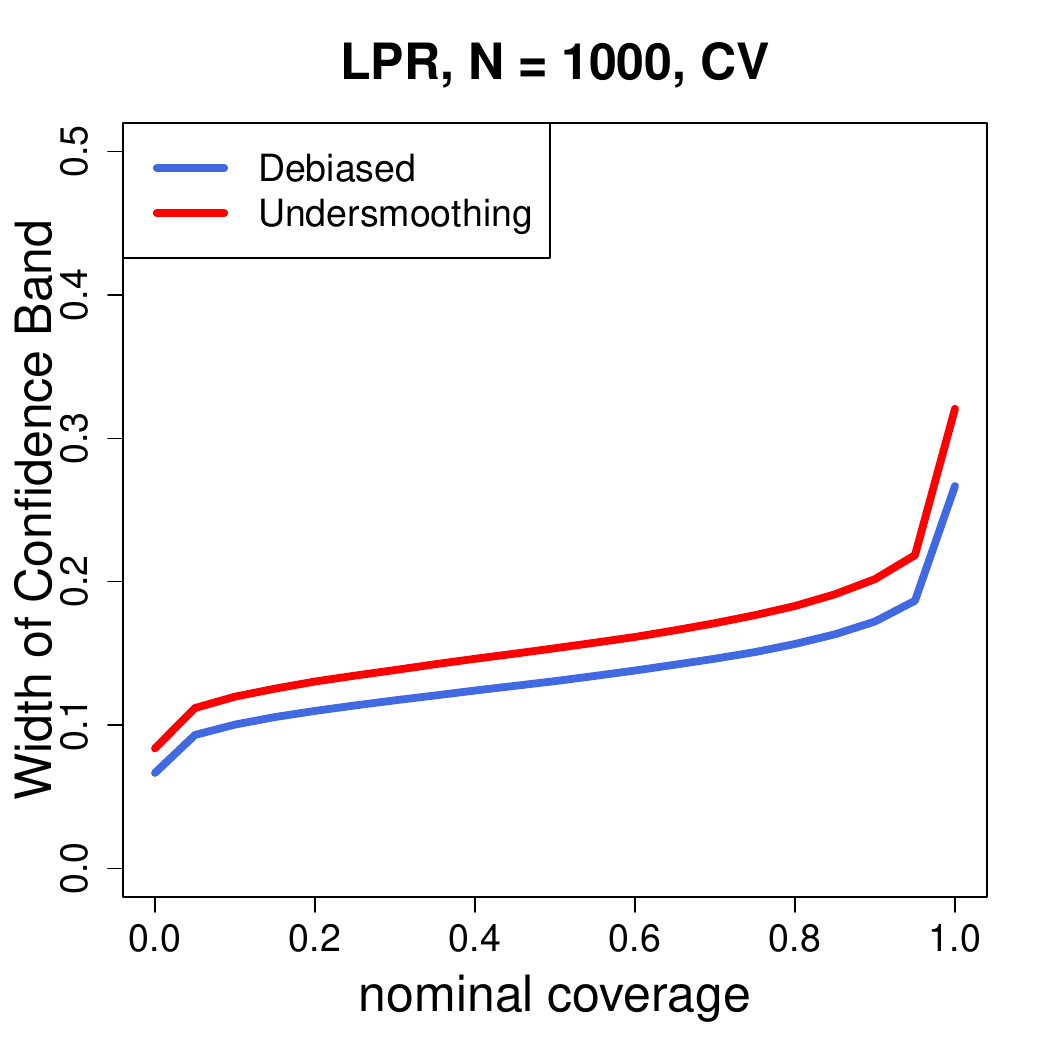}
\includegraphics[height=1.5in]{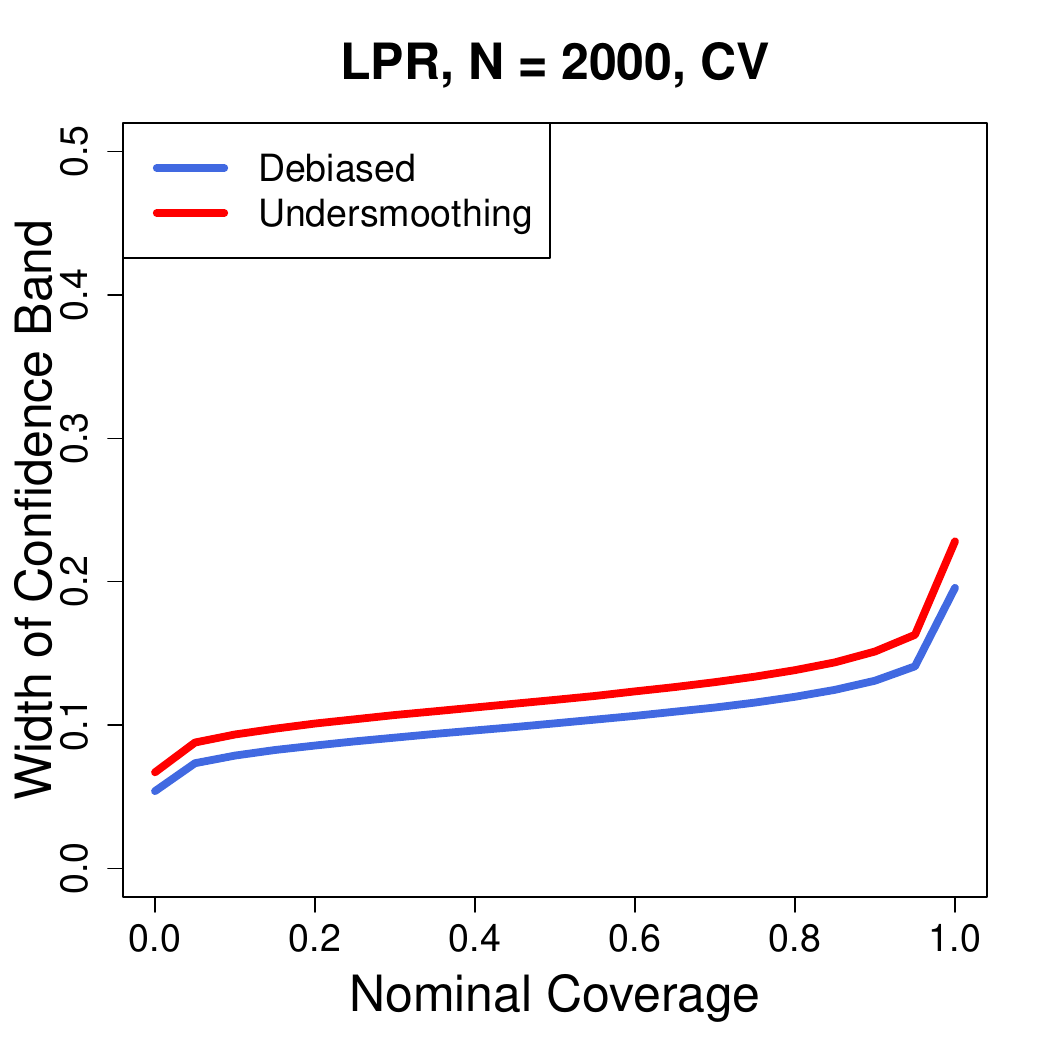}
\caption{Comparison of width of confidence bands. 
The top row corresponds to the case where bandwidth $h$ is chosen by the rule of thumb.
The bottom row corresponds to the case where bandwidth $h$ is chosen by the $5$-fold cross validation. 
We are comparing the width of confidence bands from the debiased estimator and from an undersmoothing estimator (in our case, $h/2$, the same idea as FIgure~\ref{fig::kde_band_width}).  
The width is computed using the median width of 1000 simulations.  
When we use the rule of thumb, the confidence band for both methods are very similar
but in the case of cross validation, the confidence band for debiased estimator is narrower than the undersmoothing method. 
}
\label{fig::lpr_band_width}
\end{figure}

\begin{figure}[h]
\centering
\includegraphics[height=1.5in]{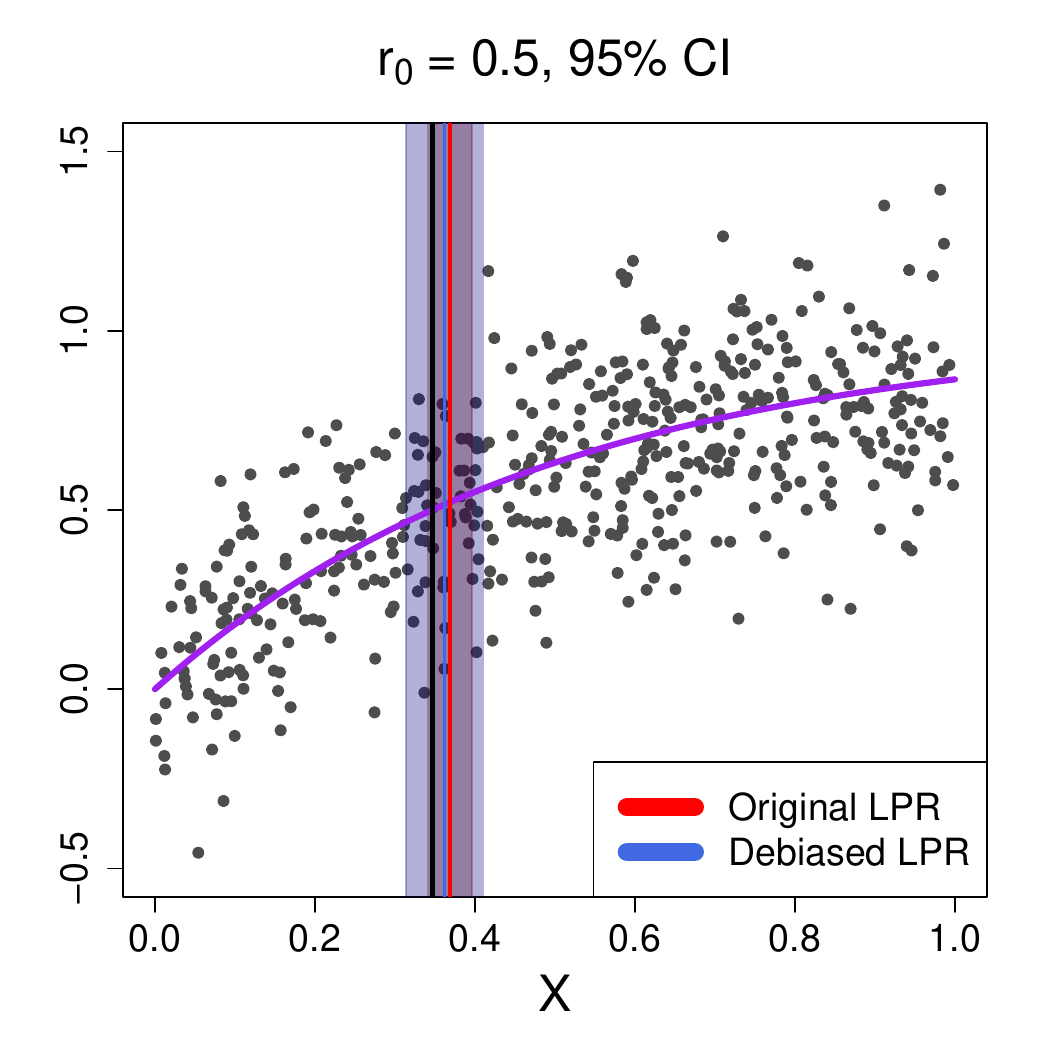}
\includegraphics[height=1.5in]{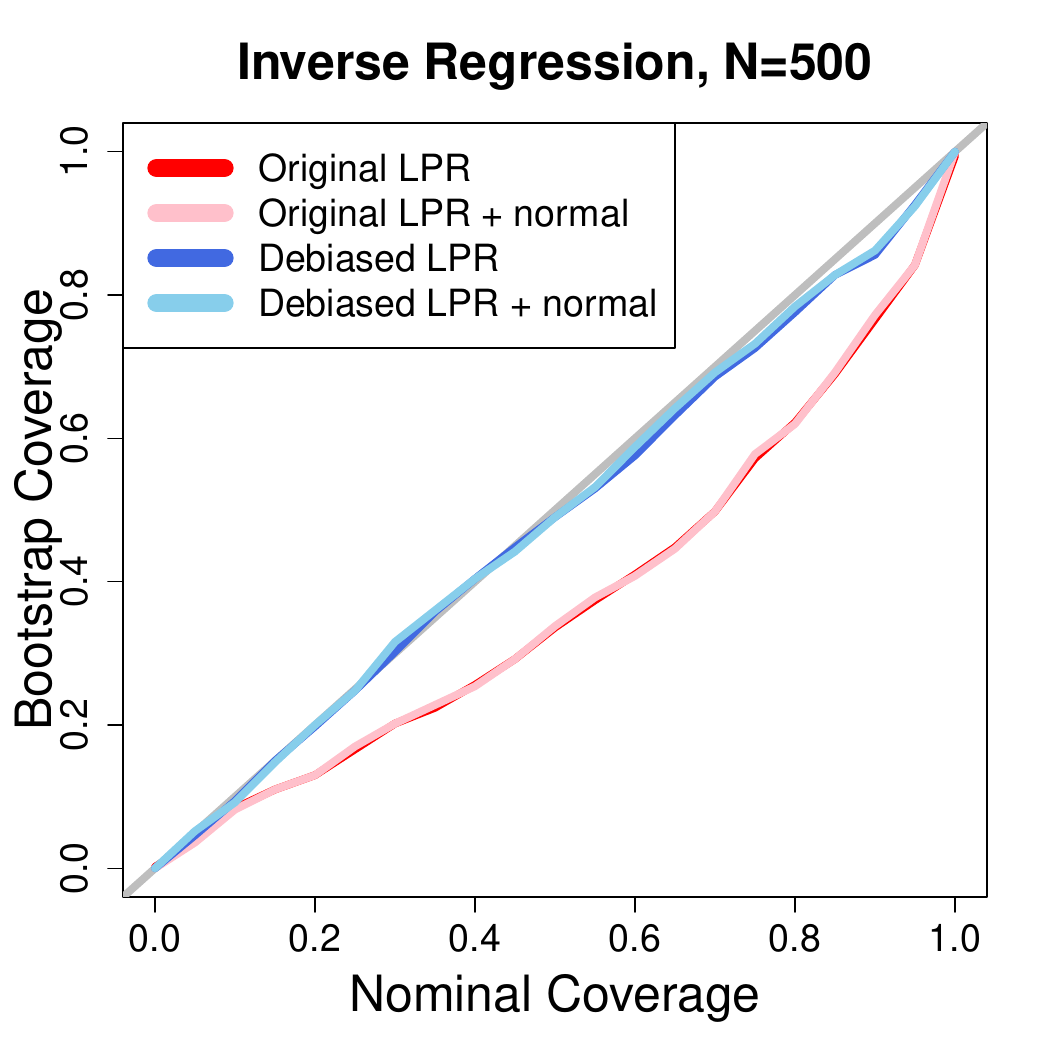}\\
\includegraphics[height=1.5in]{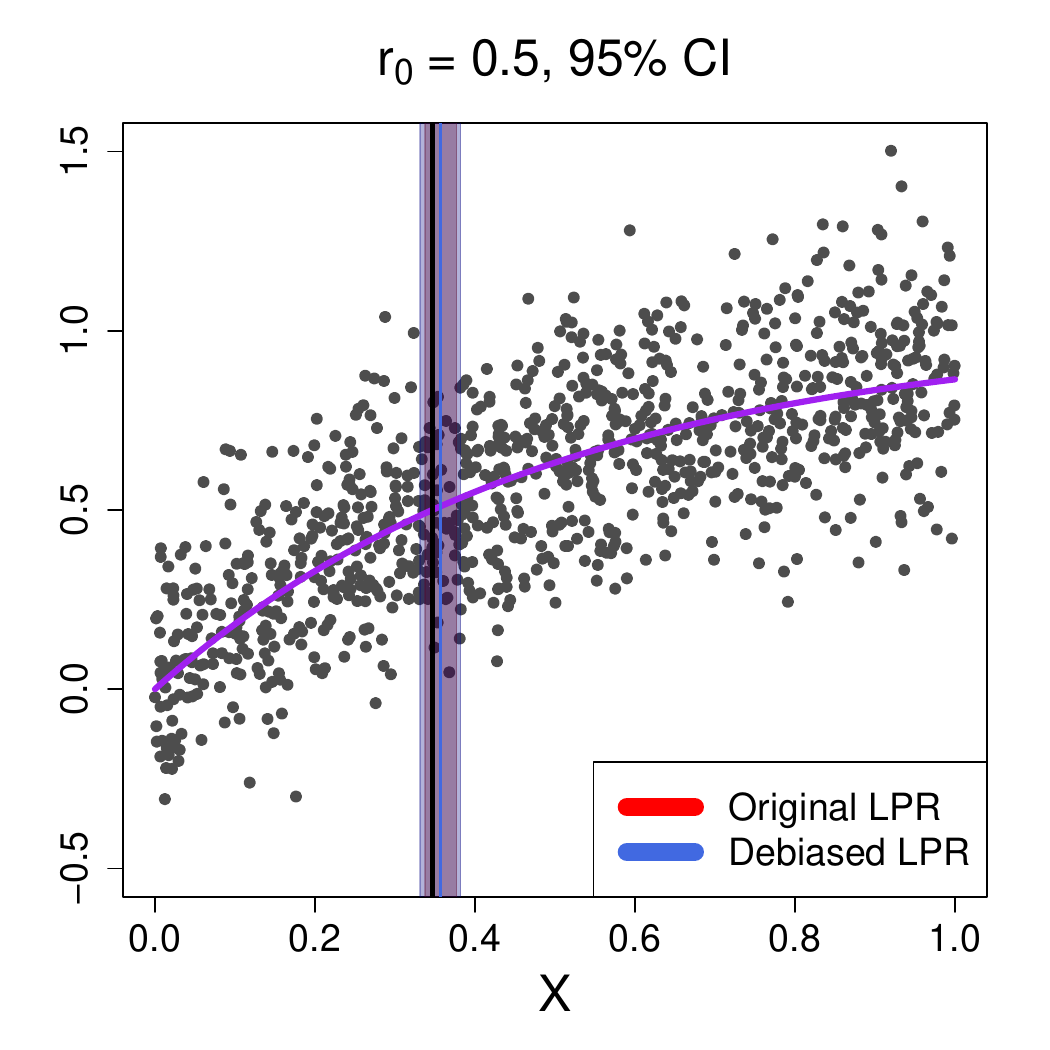}
\includegraphics[height=1.5in]{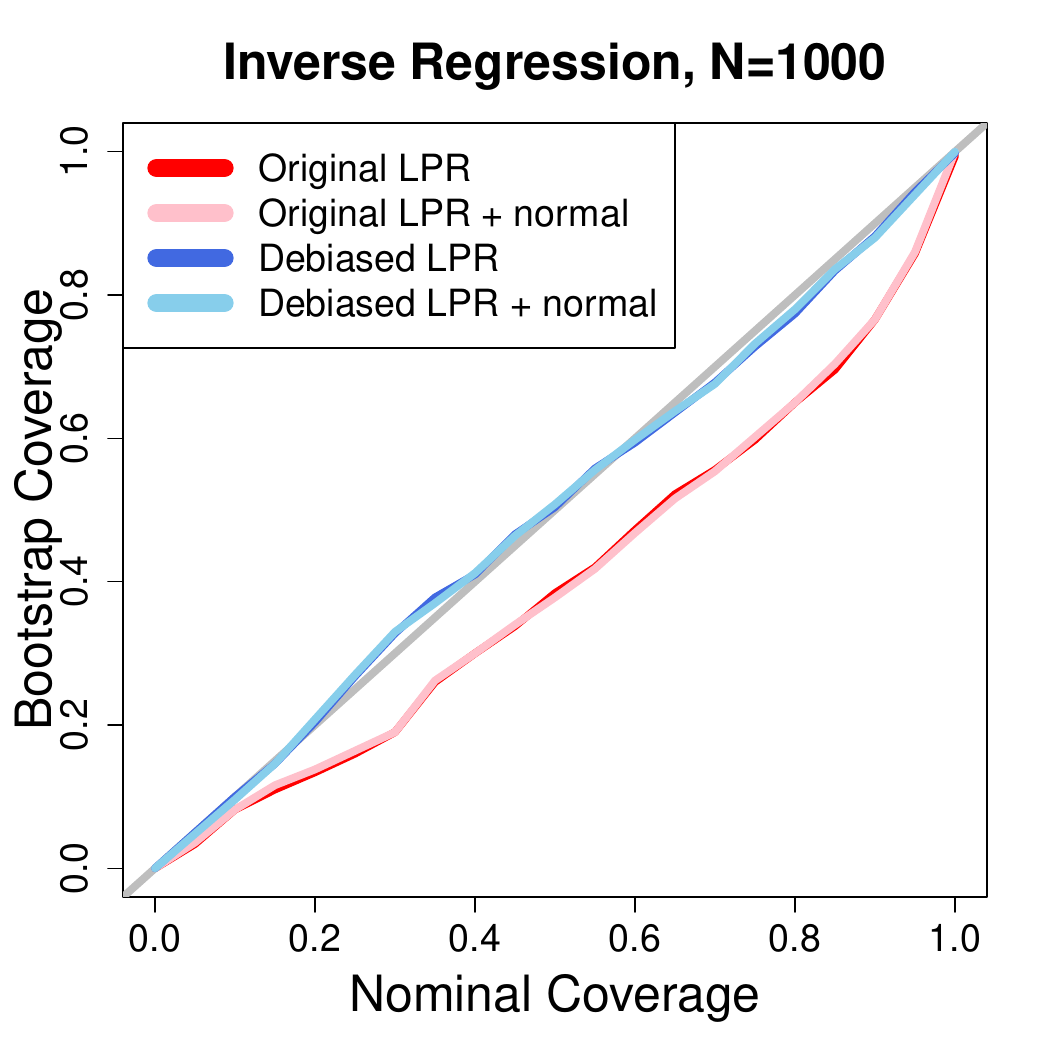}
\caption{Confidence sets of the inverse regression.
In the left column, we display one instance of the bootstrap confidence set 
using the local linear smoother (red region) and debiased local linear smoother (blue region).
The purple curve shows the actual regression line and the black vertical line shows the 
location of the actual inverse regression ($r_0=0.5$).
In the right column, we provide bootstrap coverage for both local linear smoother (red) and the debiased local linear smoother (blue). 
We also consider the confidence set using normality and bootstrap (in a lighter color).
The top row is the case of $n=500$ and the bottom row is the case of $n=1000$. 
}
\label{fig::inv01}
\end{figure}

{\bf Inverse regression.}
The last simulation involves inverse regression.
In particular, we consider the case where $\cR $ contains a unique point,
so we can construct a confidence set using both the bootstrap-only approach and normality with the bootstrap variance estimate. 
The data are generated by the following model:
\begin{align*}
X&\sim {\sf Unif}[0,1],\\
Y&= 1-e^{X} + \epsilon,\\
\epsilon&\sim N(0,0.2^2),
\end{align*}
where $X,\epsilon$ are independent.
Namely, the regression function $r(x) = E(Y|X=x) = 1-e^x$. 
We choose the level $r_0=0.5$, which corresponds to the location
$\cR = \{-\log (2)\}$.
We consider two sample sizes: $n=500$, and $1000$.
We choose the smoothing bandwidth using a $5$-fold cross-validation of the original local linear smoother.
The left column of Figure~\ref{fig::inv01} shows one example of the two sample sizes
where the black vertical line denotes the location of $\cR$, the red line and red band present
the estimator from the original local linear smoother and its confidence set, and
the blue line and blue band display the estimator and confidence set from the debiased local linear smoother. 
We construct the confidence sets by (i) completely bootstrapping (Section~\ref{sec::inv}),
and (ii) the normality with the bootstrap variance estimate.
The right column of Figure~\ref{fig::inv01} presents
the coverage of all four methods.
The reddish curves are the results of bootstrapping the original local linear smoother,
which do not attain nominal coverage.
The bluish curves are the results from bootstrapping the debiased local linear smoother,
which all attain nominal coverage.
Moreover, it seems that using normality does not change the coverage--
the light-color curves (using normality) are all close to the dark-color curves (without normality).

\begin{figure}
\centering
\includegraphics[height=1.5in]{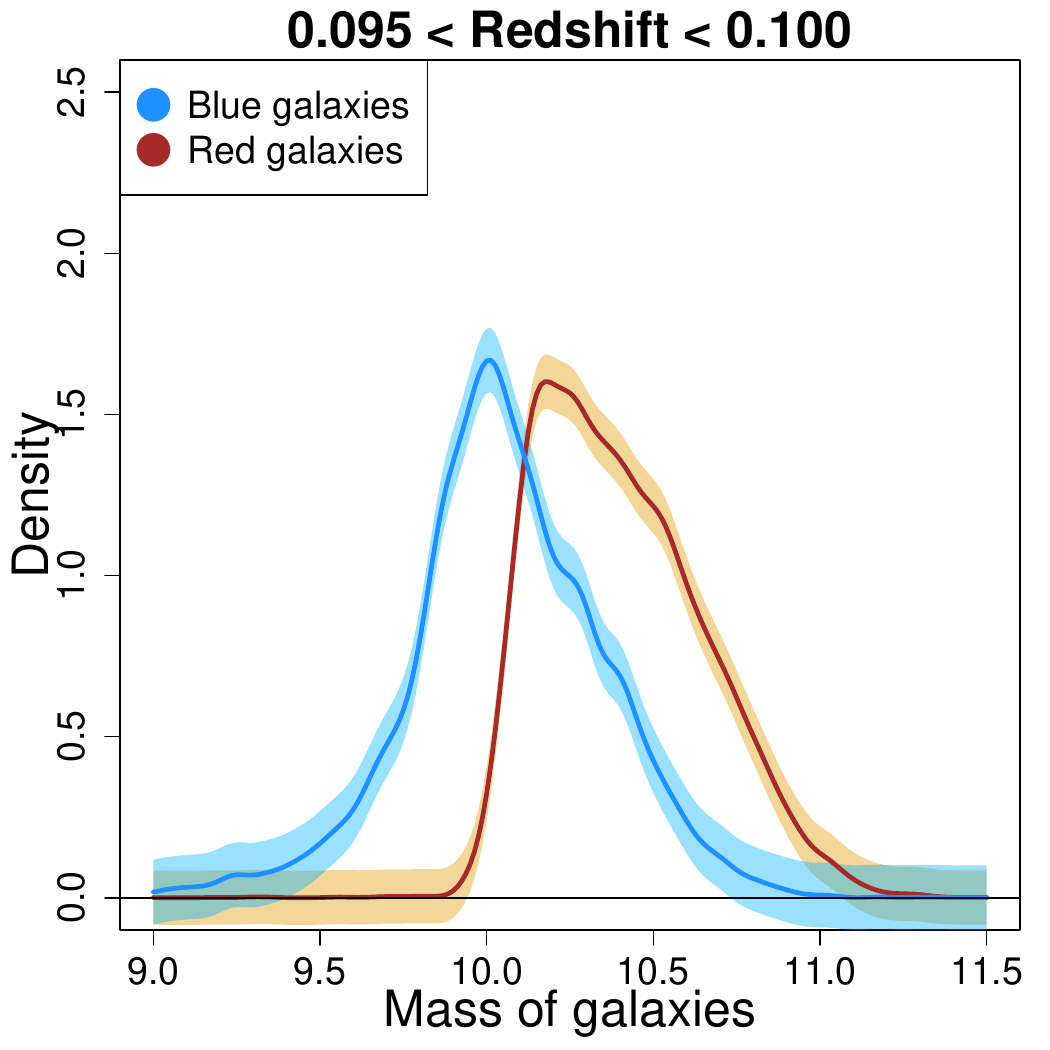}
\includegraphics[height=1.5in]{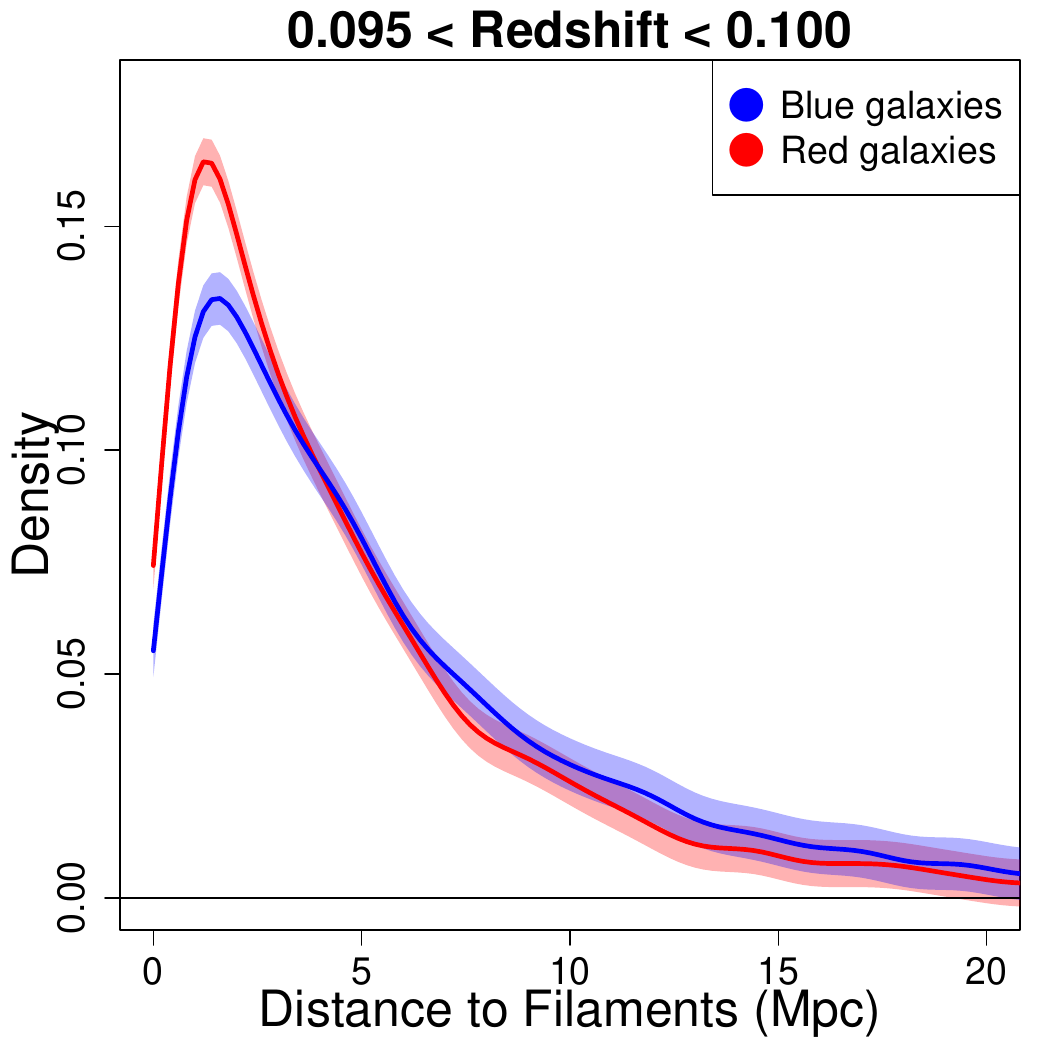}
\includegraphics[height=1.5in]{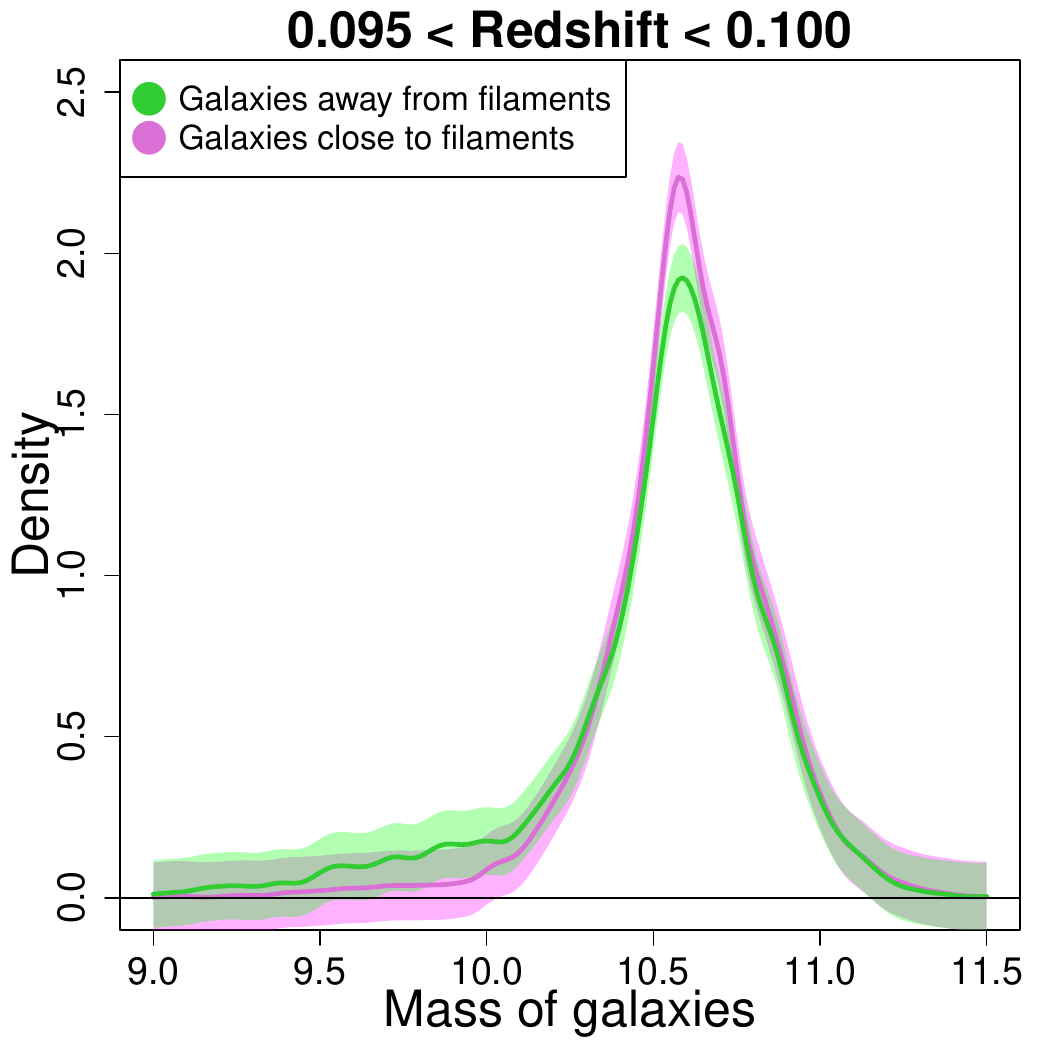}
\caption{Analyzing galaxies using the debiased KDE and the $95\%$ confidence band. 
We obtain galaxies from the Sloan Digital Sky Survey and focus on 
galaxies within a small region within our Universe ($0.095<$ {\sf redshift} $0.100$).
{\bf Left:} 
We separate galaxies by their colors and compare the densities of stellar mass distributions. 
We see a clear separation between the blue and the red galaxies.
{\bf Middle:}
Again we separate galaxies based on their color and compare their distance to the nearest filaments (curves characterizing
high density areas; \citealt{chen2015cosmic}). 
Red galaxies tend to concentrate more to regions around filaments than blue galaxies. 
{\bf Right:}
We separate galaxies baed on the median distance to the nearest filaments
and compare the stellar mass distribution in both groups. 
Although the difference is much smaller than other two panels,
we still observe a significant difference at the peak. 
Galaxies close to filaments tend to have a density that is highly concentrated around its peak
compared to those away from filaments.
}
\label{fig::astro}
\end{figure}

\subsection{Application in Astronomy}	\label{sec::astro}

To demonstrate the applicability, we apply our approach to the galaxy sample from
Sloan Digital Sky Survey (SDSS; \citealt{york2000sloan, eisenstein2011sdss}),
a well-known public Astronomy dataset that contains 1.2 million galaxies.
In particular, we obtain galaxies from the \emph{NYU VAGC}\footnote{\url{http://sdss.physics.nyu.edu/vagc/}} data
\citep{blanton2005new,padmanabhan2008improved,adelman2008sixth}, 
a galaxy catalog based on the SDSS sample.
We focus on five features of each galaxy:
{\sf RA} (right ascension--the longitude of the sky), {\sf dec} (declination--the latitude of the sky),
{\sf redshift} (distance to earth), {\sf mass} (stellar mass),
and {\sf color}\footnote{the color is based on the $(g-r)$ band magnitude; a galaxy is classified as a red galaxy 
if its $(g-r)$ band magnitude is greater than 0.8
otherwise it is classified as a blue galaxy \citep{chen2016detecting}. } (blue or red).

We select galaxies within a thin redshift slice $0.095<$ {\sf redshift} $<0.100$, 
which leads to a sample with size $n= 23,724$. 
We first examine the relationship between a galaxy's color versus its stellar mass.
Within this redshift region, there are $n_{red} = 13,910$ red galaxies
and $n_{blue}= 9,814$ blue galaxies.
We estimate the densities of the mass 
of both red and blue galaxies using the debiased KDE with the same smoothing bandwidth $h=0.046$ (chosen 
by the normal reference rule)
and apply the bootstrap $1000$ times to construct confidence bands.
The left panel of Figure~\ref{fig::astro} shows the two density estimators along with their $95\%$ confidence bands.
There is a clear separation between the stellar mass distribution of these two types of galaxies, which is affirmative to the 
literature in Astronomy \citep{tortora2010colour}.

Next we compare galaxies to cosmic filaments, curve-like structures characterizing high density regions 
of matter. 
We obtain filaments from the \emph{Cosmic Web Reconstruction} catalog\footnote{\url{https://sites.google.com/site/yenchicr/}},
a publicly available filament catalog \citep{chen2015cosmic, chen2016cosmic}.
Each filament is represented by a collection of spatial locations (in RA, dec, and redshift space).
Because we are using galaxies within a thin redshift slice, we select filaments within the same redshift region.
We then use the 2D spatial location ({\sf RA} and {\sf dec}) of galaxies to calculate their distance to the nearest filament
(we use the conventional unit of the distance in Astronomy: Mpc--megaparsec).
The distance to filament is a new variable of each galaxy.
Similar to the mass,
we estimate the densities of distance to the nearest filament
of both red and blue galaxies using the debiased KDE with $h=0.912$ (chosen 
by the normal reference rule)
and apply the bootstrap $1000$ times to construct confidence bands.
The center panel of Figure~\ref{fig::astro} displays the two density estimators and their $95\%$ confidence bands.
We see that most of the blue and red galaxies are within $10$ Mpc distance to filaments.
However, the density of red galaxies concentrates more at the low distance to filament region than the density of blue galaxies. 
The two confidence bands are separated, indicating that the difference is significant. 
This is also consistent with what is known in the Astronomy literature that
red galaxies tend to populate around high density areas (where most filaments live in)
compared to blue galaxies \citep{hogg2003overdensities, cowan2008environment}. 

Finally, we compare the mass distribution of galaxies at different distances to filaments. 
We separate galaxies into two groups, galaxies away from filaments and galaxies close to filaments,
using the median distance to the nearest filament.
We then estimate the densities of mass distribution
of both groups using the debiased KDE with $h=0.046$
and apply the bootstrap $1000$ times to construct confidence bands.
The right panel of Figure~\ref{fig::astro} displays the two density estimators and the $95\%$ confidence bands.
The two densities are close to each other but their are still significantly different--the mass distribution of galaxies
close to filaments concentrates more at its peak than the mass distribution of galaxies away from filaments. 
Judging from the shape of densities, galaxies close to filaments tend to be more massive
than those away from filaments. 
A similar pattern has been observed in the Astronomy literature as well \citep{grutzbauch2011does} and now we are using a different way of exploring
the difference between these two populations. 

%
%
%
%

\section{Discussion}	\label{sec::discuss}

In this paper, 
we propose to construct confidence bands/sets via bootstrapping the debiased estimators \citep{calonico2018effect}. 
We prove both theoretically and using simulations that our proposed confidence bands/sets
are asymptotically valid.
Moreover, our confidence bands/sets are compatible with many common bandwidth selectors,
such as the rule of thumb and cross-validation.

In what follows, we discuss some topics related to our methods.

\begin{itemize}
\item {\bf Higher-order kernels.} In this paper, we consider second-order kernels for simplicity.
Our methods can be generalized to higher-order kernel functions.
\cite{calonico2018effect} has already described the debiased estimator using higher-order kernel functions,
so to construct a confidence band, all we need to do is bootstrapping the $L_\infty$ error of the debiased estimator and take the quantile. 
Note that if we use a $\omega$-th order kernel function for the original KDE, 
then we can make inference for the functions in 
$$
\Sigma(\omega+\delta_0,L_0), \quad \delta_0>0 
$$
because the debiasing operation will kick the bias into the next order term. 
Thus, if we have some prior knowledge about the smoothness of functions we are interested in,
we can use a higher-order kernel function and bootstrap it to construct the confidence bands.

\item {\bf Detecting local difference of two functions. }
Our approaches can be used to detect local differences of two functions, which has been
used in Section~\ref{sec::astro}.
When the two functions being compared are densities, it is a problem for the local two sample test \citep{duong2009highest,duong2013local}.
When the two functions being compared are regression functions,
the comparison is related to the conditional average treatment effect curve
\citep{lee2009nonparametric, hsu2013consistent, ma2014treatment,abrevaya2015estimating}. 
In the local two sample test, we want to know
if two samples are from the same population or not and find out
the regions where the two densities differ. 
For the case of the conditional average treatment effect curve,
we compare the differences of two regression curves where
one curve is the regression curve from the control group and the other is the regression curve
from the treatment group.
The goal is to find out where we have strong evidence that the two curves differ.
In both cases, we can use the debiased estimators of the densities or regression functions,
and then bootstrap the difference to obtain an asymptotically valid confidence band.
\cite{chiang2017unified} has applied a similar idea to several local Wald estimators in econometrics.

\item {\bf Other geometric features.}
We can use the idea of bootstrapping the debiased estimator to make inferences
of other geometric features such as local modes \citep{romano1988bootstrapping}, ridges \citep{chen2015asymptotic}, and cluster trees \citep{jisu2016statistical}. 
\cite{romano1988bootstrapping} proved that naively bootstrapping the KDE
does not yield a valid confidence set 
unless we undersmooth the data.
However, bootstrapping the debiased KDE 
still works because
the optimal $h$ of the original KDE is 
an undersmoothed $h$ of the debiased KDE.
So our results are actually consistent with \cite{romano1988bootstrapping}.

\end{itemize}




\section*{Acknowledgement}	
YC is supported by NSF grant DMS 1807392. 
We thank Cun-Hui Zhang for useful and constructive suggestions about this paper.
{We thank two referees for very helpful comments on this paper. }

\appendix

\section{Proofs of the kernel density estimator}

\begin{proof}[ of Lemma~\ref{lem::KDE_BV}]

{\bf Bias.}
Recall from equation \eqref{eq::dKDE}
$$
\hat{p}_{\tau,h}(x) = \hat{p}_h(x) - \frac{1}{2}c_K\cdot h^2\cdot \hat{p}^{(2)}_b(x).
$$
Thus, by the standard derivation of the bias under assumption (P),
\begin{align*}
\E\left(\hat{p}_{\tau,h}(x)\right) &= \E\left( \hat{p}_h(x)\right) - \frac{1}{2}c_K\cdot h^2\cdot\E\left(\hat{p}^{(2)}_b(x)\right)\\
&= p (x) + \frac{1}{2}c_K\cdot h^2\cdot p^{(2)}(x) + O(h^{2+\delta_0}) -\frac{1}{2}c_K\cdot h^2\cdot (p^{(2)}(x) + O(b^{\delta_0}))\\
& = p(x) + O(h^{2+\delta_0}+h^2\cdot b^{\delta_0}).
\end{align*}
Because $\tau = h/b$ is fixed, we obtain the desired result for the bias.

{\bf Variance.}
To derive the variance, note that under (K1) and (P)
and $\tau = h/b$ is fixed,
\begin{align*}
{\sf Var}(\hat{p}_{\tau, h}(x)) & = \frac{1}{n h^{2d}} \left\{ \E\left[ M_{\tau}\left(\frac{x - X_i}{h}\right)^2\right] - \E\left[ M_{\tau} \left(\frac{x - X_i}{h} \right)\right]^2 \right\} \\
& = \frac{1}{n h^{2d}} \left\{ h^d \int p(x - th) M_{\tau}^2(t) dt - \left(h^d \int p(x - th) M_{\tau}(t) dt \right)^2 \right\} \\
& = \frac{1}{nh^d} \left( p(x) \int M_{\tau}^2(t) dt + O(h^2) + O(h^d) \right)
\end{align*}
Since we have that $\int t_i M_{\tau}^2(t) dt_i = 0$ for $i = 1, \ldots, d$. 
\end{proof}

\begin{proof}[ of Lemma~\ref{lem::KDE_infty}]
By Lemma~\ref{lem::KDE_BV},
\begin{equation}
\hat{p}_{\tau,h}(x) - p(x) = \hat{p}_{\tau,h}(x) - \E\left(\hat{p}_{\tau,h}(x)\right) + O(h^{2+\delta_0})
\label{eq::pf::kde_infty1}
\end{equation}
and when $\frac{ nh^{d+4}}{\log n}\rightarrow c_0 \geq 0$, the bias is negligible compared to $\hat{p}_{\tau,h}(x) - \E\left(\hat{p}_{\tau,h}(x)\right)$
so we only focus on the stochastic variation part.
To derive the rate of $\hat{p}_{\tau,h}(x) - \E\left(\hat{p}_{\tau,h}(x)\right)$,
note that 
$\hat{p}_{\tau,h}(x)$ is a KDE with the kernel function $M_\tau\left(\frac{x-y}{h}\right)$. 


Because assumption (K2) implies that for fixed $\tau$ and $\overline{h}$, 
$$
\cF_1 = \left\{g_x(y) = M_\tau\left(\frac{x-y}{h}\right): x\in \K, \overline{h}\geq h>0\right\}
$$
is a bounded VC class of functions.
Note that we can always find such a $\overline{h}$ because $h\rightarrow 0$ when $n\rightarrow \infty$.
Therefore, $\cF_1$ satisfies the $K_1$ condition of \cite{Gine2002},
which implies that 
$$
\sup_{x\in \K}\|\hat{p}_{\tau,h}(x)-\E\left(\hat{p}_{\tau,h}(x)\right) \| = O_P\left(\sqrt{\frac{\log n}{nh^d}}\right).
$$

Plugging this into equation \eqref{eq::pf::kde_infty1} and notice that the constant factor in $O(h^{2+\delta_0})$ is bounded by H\"older Class constant $L_0$. We obtain the desired result.
$$
\| \hat{p}_{\tau, h} - p\|_{\infty} = O(h^{2 + \delta_0}) + O_P\left(\sqrt{\frac{\log n}{n h^d}} \right)
$$
\end{proof}

\begin{proof}[ of Theorem~\ref{thm::KDE_gaussian}]
By Lemma~\ref{lem::KDE_infty},
when $\frac{ nh^{d+4}}{\log n}\rightarrow c_0 \geq 0$ bounded, the scaled difference 
\begin{equation}
\sqrt{nh^d} \|\hat{p}_{\tau,h}-p\|_\infty = \sqrt{nh^d} \|\hat{p}_{\tau,h}-\E\left(\hat{p}_{\tau,h}\right)\|_\infty + O(\sqrt{nh^{d+4 + 2\delta_0}}).
\label{eq::KDE_G1}
\end{equation}
By Corollary 2.2 and the derivation of Proposition 3.1 in \cite{chernozhukov2014gaussian},
there is a tight Gaussian process $\B_n$ as described in Theorem~\ref{thm::KDE_gaussian}
and constants $A_1, A_2>0$
such that for any $\gamma>0$,
\begin{equation*}
P\left(\left|\sqrt{nh^d} \|\hat{p}_{\tau,h}-\E\left(\hat{p}_{\tau,h}\right)\|_\infty- \sup_{f\in\cF_{\tau,h}}\|\B_n(f)\|\right|>\frac{A_1 \log^{2/3} n}{\gamma^{1/3}(nh^d)^{1/6}}\right)\leq
A_2\gamma
\end{equation*}
when $n$ is sufficiently large.
Using equation \eqref{eq::KDE_G1} and $\delta > 2/3$, we can revise the above inequality by
\begin{equation}
P\left(\left|\sqrt{nh^d} \|\hat{p}_{\tau,h}-p\|_\infty- \sup_{f\in\cF_{\tau,h}}\|\B_n(f)\|\right|>\frac{A_3 \log^{2/3} n}{\gamma^{1/3}(nh^d)^{1/6}}\right)\leq
A_2\gamma
\label{eq::KDE_G2}
\end{equation}
for some constants $A_3$. 

To convert the bound in equation \eqref{eq::KDE_G2} into a bound on the Kolmogorov distance,
we apply the anti-concentration inequality (Lemma~2.3 in \citealt{chernozhukov2014gaussian}; see also \citealt{chernozhukov2014anti}),
which implies that 
when $n$ is sufficiently large, there exists a constant $A_4>0$ such that 
\begin{equation}
\begin{aligned}
\sup_{t}\Bigg|P\Bigg(\sqrt{nh^d} \|\hat{p}_{\tau,h}-p\|_\infty <t\Bigg)&-P\Bigg(\sup_{f\in\cF_{\tau,h}}\|\B_n(f)\|<t\Bigg)\Bigg|\\
&\leq A_4 \cdot \E\left(\sup_{f\in\cF_{\tau,h}}\|\B_n(f)\|\right) \cdot \frac{A_3 \log^{2/3} n}{\gamma^{1/3}(nh^d)^{1/6}} + A_2\gamma.
\end{aligned}
\end{equation}
By DudleyÕs inequality of Gaussian processes \citep{Vaart1996}, 
$$
\E\left(\sup_{f\in\cF_{\tau,h}}\|\B_n(f)\|\right) = O(\sqrt{\log n}),
$$
so the optimal $\gamma = \left(\frac{\log^7 n}{nh^d}\right)^{1/8}$, which leads to the desired result:
\begin{equation*}
\begin{aligned}
\sup_{t}\Bigg|P\Bigg(\sqrt{nh^d} \|\hat{p}_{\tau,h}-p\|_\infty <t\Bigg)-P\Bigg(\sup_{f\in\cF_{\tau,h}}\|\B_n(f)\|<t\Bigg)\Bigg|
\leq  O\left( \left(\frac{\log^7 n}{nh^d}\right)^{1/8}\right).
\end{aligned}
\end{equation*}
\end{proof}

\begin{proof}[ of Theorem~\ref{thm::KDE_CI}]
The proof of Theorem~\ref{thm::KDE_CI} follows the same derivation as the proof of Theorem 4 of \cite{chen2017density}.
A similar derivation also appears in \cite{chernozhukov2014anti}.
Here we only give a high-level derivation. 

Let $t_{1-\alpha}$ be the $1-\alpha$ quantile of the CDF of $\|\hat{p}_{\tau,h}-p\|_\infty$.
By the property of the $L_\infty$ loss $\|\hat{p}_{\tau,h}-p\|_\infty$, it is easy to see that
$$
P(p(x) \in [\hat{p}_{\tau,h}(x) - t_{1-\alpha},\,\, \hat{p}_{\tau,h}(x) + t_{1-\alpha}]\,\, \forall x\in\K) = 1-\alpha.
$$
Thus, all we need to do is to prove that the bootstrap estimate $\hat{t}_{1-\alpha}$ approximates $t_{1-\alpha}$. 
We will prove this by showing that 
$\sqrt{nh^d} \|\hat{p}_{\tau,h}-p\|_\infty$ and the bootstrap $L_\infty$ metric $\sqrt{nh^d} \|\hat{p}^*_{\tau,h}-\hat{p}_{\tau,h}\|_\infty$
converges in the Kolmogorov distance (i.e., the Berry-Esseen bound).

By Theorem~\ref{thm::KDE_gaussian}, we known that there exists a Gaussian process $\B_n$ defined on $\cF_{\tau,h}$ such that
$$
\sqrt{nh^d} \|\hat{p}_{\tau,h}-p\|_\infty \approx \sup_{f\in\cF_{\tau,h}}|\B_n(f)|. 
$$
Conditioned on $\mathcal{X}_n=\{X_1,\cdots, X_n\}$, 
the bootstrap difference 
$$
\sqrt{nh^d}\|\hat{p}^*_{\tau,h}-\hat{p}_{\tau,h}\|_\infty = \sqrt{nh^d}\|\hat{p}^*_{\tau,h}-\E\left(\hat{p}^\star_{\tau,h}|\mathcal{X}_n\right)\|_\infty
$$
and similar to $\sqrt{nh^d} \|\hat{p}_{\tau,h}-p\|_\infty$, 
$\sqrt{nh^d}\|\hat{p}^*_{\tau,h}-\E\left(\hat{p}^*_{\tau,h}|\mathcal{X}_n\right)\|_\infty$
can be approximated by the maximum of an empirical bootstrap process \citep{chernozhukov2016empirical},
which, by the same derivation as the proof of Theorem~\ref{thm::KDE_gaussian}, leads to
the following conclusion
\begin{align*}
\sup_{t}\Bigg|P\Bigg(\sqrt{nh^d} \|\hat{p}^*_{\tau,h}-\hat{p}_{\tau,h}\|_\infty <t\Bigg|\mathcal{X}_n\Bigg)-  & P\Bigg(\sup_{f\in\cF_{\tau,h}}\|\tilde{\B}_n(f)\|<t\Bigg|\mathcal{X}_n\Bigg)\Bigg|  \\
& \leq  O_P\left( \left(\frac{\log^7 n}{nh^d}\right)^{1/8}\right),
\end{align*}
where $\tilde{\B}_n$ is a Gaussian process defined on $\cF_{\tau,h}$
such that for any $f_1,f_2\in\cF_{\tau,h}$
\begin{align*}
\E\left(\tilde{\B}_n(f_1)\tilde{\B}_n(f_2)|\mathcal{X}_n\right) &= \hat{{\sf Cov}}(f_1(X), f_2(X))\\
& = \frac{1}{n}\sum_{i=1}^nf_1(X_i)f_2(X_i) - \frac{1}{n}\sum_{i=1}^nf_1(X_i)\cdot\frac{1}{n}\sum_{i=1}^nf_2(X_i).
\end{align*}
Namely, $\E\left(\tilde{\B}_n(f_1)\tilde{\B}_n(f_2)|\mathcal{X}_n\right)$ follows
the sample covariance structure at $f_1$ and $f_2$.

Because 
$\tilde{\B}_n$ and $\B_n$ differ only in the covariance structure and the sample covariance 
converges to the population covariance,
by the Gaussian comparison Lemma (Theorem 2 in \citealt{chernozhukov2014comparison}), 
$\sup_{f\in\cF_{\tau,h}}\|\tilde{\B}_n(f)\|$ and $\sup_{f\in\cF_{\tau,h}}\|\B_n(f)\|$ converges
in the Kolmogorov distance (and the convergence rate is faster than the Gaussian approximation described in Theorem~\ref{thm::KDE_gaussian}
so we can ignore the error here). 

Thus, we have shown that 
$$
\sqrt{nh^d} \|\hat{p}_{\tau,h}-p\|_\infty \approx \sup_{f\in\cF_{\tau,h}}|\B_n(f)|\approx \sup_{f\in\cF_{\tau,h}}|\tilde{\B}_n(f)|\approx \sqrt{nh^d} \|\hat{p}^*_{\tau,h}-\hat{p}_{\tau,h}\|_\infty,
$$
which proves that 
\begin{align*}
\sup_{t}\Bigg|P\Bigg(\sqrt{nh^d} \|\hat{p}_{\tau,h}-p\|_\infty <t\Bigg)-& P\Bigg(\sqrt{nh^d} \|\hat{p}^*_{\tau,h}-\hat{p}_{\tau,h}\|_\infty <t\Bigg|\mathcal{X}_n\Bigg)\Bigg| \\
& \leq  O_P\left( \left(\frac{\log^7 n}{nh^d}\right)^{1/8}\right).
\end{align*}

Thus, the quantile of the distribution of $\|\hat{p}^*_{\tau,h}-\hat{p}_{\tau,h}\|_\infty $ approximates the quantile of 
the distribution of $\|\hat{p}_{\tau,h}-p\|_\infty$, which proves the desired result.
Note that although the rate is written in $O_P$,
the quantity inside the $O_P$ part has an expectation of the same rate
so we can convert it into an unconditional bound (the second CDF is not conditioned on $\mathcal{X}_n$) with the same rate.

\end{proof}

%

\section{Justification for remark \ref{rm::honest0}}	\label{sec::RM2}
{
In this section, we provide the justification for the argument in remark \ref{rm::honest0}.  
Let $\partial \mathbb{K} $ be the boundary of $\mathbb{K}$
and define the distance from a point $x$ to a set $A$ as $d(x,A) = \inf_{y\in A}\|x-y\|$.
Let $\mathbb{K}_{\delta} = \{x\in \mathbb{K}: d(x,\partial \mathbb{K}) \geq \delta\}$
be the region within $\mathbb{K}$ that are at least $\delta$ distance from the boundary. 
In lemma \ref{lem::KDE_BV}, we proved that 
\begin{align*}
{\sf Var}(\hat{p}_{\tau, h}(x)) & = \frac{1}{nh^d} \left( p(x) \int M_{\tau}^2(t) dt + O(h^2) + O(h^d) \right)
\end{align*}
which turn implies that $\sigma^2_{rbc} = (nh^d) {\sf Var}(\hat{p}_{\tau, h}(x)) = p(x) \int M_{\tau}^2(t) dt + O(h^2) + O(h^d)$.  For any $x \in \K$, 
\begin{align*}
\sigma_{rbc}(x) &= \sqrt{nh^d {\sf Var}(\hat{p}_{\tau, h}(x))} = \sqrt{p(x) \int M_{\tau}^2(t) dt + O(h^2) + O(h^d)} \\
& \geq \frac{1}{2}\sqrt{ p(x) \int M_{\tau}^2(t) dt }  \qquad \text{for $n$ sufficiently large}
\end{align*}
assuming that $p(x)$ bounded away from 0 for $x \in \K$.   Then we could easily get a constant lower bound on $\sigma_{rbc}(x)$. One thing to note here is that recall in our assumption (P),  we specifically assume that $p(x)$ and $\nabla p(x)$ is 0 on the boundary of $\K$. This is for the purpose of bias estimation and we could simply restrict the uniform confidence band to be covering a slightly smaller inner set for $\mathbb{K}$ like $\mathbb{K}_{\delta}$ defined above. Since $\sigma_{rbc}(x)$ is lower bounded on $\mathbb{K}_{\delta}$, we obtained that
$$
\left\| \frac{\hat{p}_{\tau, h} - p}{\sigma_{rbc}} \right\|_{\mathbb{K}_{\delta}} = O(h^{2 + \delta_0}) + O_P \left( \sqrt{\frac{\log n}{nh^d}} \right)
$$
where $\| f \|_{\mathbb{K}_{\delta}} = \sup_{x \in \mathbb{K}_{\delta}} | f(x)|$. The rescaled difference
can be written as 
$$
\sqrt{nh^d} \left( \frac{ \hat{p}_{\tau, h}(x) - p(x)}{\sigma_{rbc}(x)} \right) = \mathbb{G}_n (f_x^{rbc}) + O(\sqrt{nh^{d + 4 + 2\delta_0}}),
$$
where $f_x^{rbc}(y) \in {\cal F}^{rbc}_{\tau, h}$ and
$$
{\cal F}^{rbc}_{\tau,h} = \left\{f_x(y) = \frac{M_\tau\left(\frac{x-y}{h}\right)}{\sqrt{h^d} \sigma_{rbc}(x)}: x \in \K_\delta \right\}.
$$
As a result,
the Gaussian approximation method could be applied to the new function class ${\cal F}_{\tau, h}^{rbc}$.  
When we use the sample studentized quantity, it can be decomposed as
\begin{align*}
\sqrt{nh^d} \left(\frac{\hat{p}_{\tau, h}(x) - p(x)}{\hat{\sigma}_{rbc}(x)} \right) & = \sqrt{nh^d} \left(\frac{\hat{p}_{\tau, h}(x) - p(x)}{\sigma_{rbc}(x)} \right) + \sqrt{nh^d} \left( \frac{(\hat{p}_{\tau, h}(x) - p(x))(\sigma_{rbc}(x) - \hat{\sigma}_{rbc}(x))}{\hat{\sigma}_{rbc}(x) \sigma_{rbc}(x)} \right).
\end{align*}
Since the Gaussian approximation can be applied to the first term in the right hand side,
we only need to show that the second term is of the rate $o_P(1)$ (so it is negligible).
Noting that
\begin{align*}
\hat{\sigma}^2_{rbc} - \sigma^2_{rbc} & = \frac{1}{h^d}\left[\frac{1}{n}\sum_{i=1}^n M_{\tau}^2 \left( \frac{x - X_i}{h} \right) - \left(\frac{1}{n} \sum_{i=1}^n M_{\tau} \left(\frac{x - X_i}{h} \right)\right)^2 \right]  - \\
& \frac{1}{h^d} \left\{ \E\left[ M_{\tau}^2\left(\frac{x - X_i}{h}\right)\right] - \E\left[ M_{\tau} \left(\frac{x - X_i}{h} \right)\right]^2 \right\} \\
& = \left\{\frac{1}{n h^d} \sum_{i=1}^n M_{\tau}^2 \left(\frac{x - X_i}{h} \right) - \frac{1}{h^d} \E\left[M_{\tau}^2 \left(\frac{ x - X_i}{h} \right) \right] \right\} - \\
& h^d \left\{ \left(\frac{1}{nh^d} \sum_{i=1}^n M_{\tau} \left(\frac{x - X_i}{h} \right) \right)^2  - \E\left[ \frac{1}{h^d}M_{\tau} \left(\frac{ x - X_i}{h} \right) \right]^2 \right\}
\end{align*}
Since assumption (K2) implies that for fixed $\tau$ and $\overline{h}$, 
$$
\cF_2 = \left\{g_x(y) = M_\tau^2\left(\frac{x-y}{h}\right): x\in \K, \overline{h}\geq h>0\right\}
$$
is a bounded VC class of functions. The main result in \cite{Gine2002} implies that
$$
\|\hat{\sigma}^2_{rbc} - \sigma^2_{rbc}\|_\infty = O_P\left( \sqrt{\frac{\log n}{nh^d}}\right).
$$
This in turn implies that $\|\hat{\sigma}_{rbc}  - \sigma_{rbc}\|_\infty = O_P\left( \sqrt{\frac{\log n}{nh^d}}\right)$. As a result,
\begin{align*}
\sqrt{nh^d} \left\| \frac{(\hat{p}_{\tau, h} - p)(\sigma_{rbc} - \hat{\sigma}_{rbc})}{\hat{\sigma}_{rbc} \sigma_{rbc}} \right\|_\infty & = O_P\left(\sqrt{\log n}\cdot \|\hat{p}_{\tau, h} - p \|_\infty\right) = o_P(1),
\end{align*}
which shows that bootstrapping the studentized quantity also works.
}

\section{Proofs of the local polynomial regression}

To simplify our derivation, we denote
a random variable $Z_n = O_r(a_n)$ if $\E|Z_n|^r = O(a_n^r)$ for an integer $r$.   
Then it is obvious that 
$$
Z_n = \E Z_n + O_r( ( \E|Z_n - \E Z_n|^r)^{1/r}).
$$
Note that by Markov's inequality, $O_r(a_n)$ implies $O_P(a_n)$ for any sequence $a_n\rightarrow 0$.

Before we prove lemma ~\ref{lem::LS_BV}, we first derive the bias rate for $\hat{r}_h(x)$ and $\hat{r}^{(2)}_{h /\tau}(x)$.  The proof basically follows from \cite{fan1993local, fan1996local} and for completeness, we put it here. 
\begin{lem}[Rates for function and derivative estimation]
Assume (K3), (R1), and $\tau \in (0,\infty)$ is fixed,  then the bias of $\hat{r}_h$ and $\hat{r}^{(2)}_{h /\tau}$ for a given point $x_0$ is at rate
$$
\E(\hat{r}_h(x_0)) - r(x_0) = \frac{h^2}{2} c_K \cdot r^{(2)}(x_0) + O(h^{2 + \delta_0}) + h^2 O\left(\sqrt{\frac{1}{nh}}\right)
$$

$$
\E(\hat{r}^{(2)}_{h /\tau}(x_0)) - r^{(2)}(x_0) = O(h^{\delta_0}) + O\left(\sqrt{\frac{1}{nh}}\right)
$$

if $h = O(n^{-1/5})$, then 
$$
\E(\hat{r}_h(x_0)) - r(x_0) = \frac{h^2}{2} c_K \cdot r^{(2)}(x_0) + O(h^{2 + \delta_0}) 
$$
$$
\E(\hat{r}^{(2)}_{h /\tau}(x_0)) - r^{(2)}(x_0) = O(h^{\delta_0})
$$
\label{lem::Rate}
\end{lem}
\begin{proof}[ of Lemma ~\ref{lem::Rate}]

{\bf Bias of regression function.}
Using the notation from Section 2 and first conditioning on the covariates $X_1,\cdots,X_n$, the bias could be written as
$$
\E(\hat{r}_h(x_0)) - r(x_0)) = \E\left[\frac{\sum_{i=1}^n [r(X_i) - r(x_0)] w_{i,h}(x_0)}{\sum_{i=1}^n w_{i,h}(x_0)}\right],
$$
where
$$
\sum_{i=1}^n w_{i,h}(x_0) = S_{n,h,2}(x_0) S_{n,0,h}(x_0) - (S_{n,1,h}(x_0))^2.
$$

%
%
Given this notation, the following equation holds: 
\begin{align*}
\frac{1}{n h^{j+1}} S_{n,h,j}(x_0) & = \frac{1}{n h^{j+1}} \E S_{n,h,j}(x_0) + O_2\left(\sqrt{\frac{1}{nh}}\right) \\
& = p_X(x_0) s_j + s_{j+1} O(h) + s_{j+2} O(h^2) + O_2\left(\sqrt{\frac{1}{nh}}\right)
\end{align*}
for $j = 0,1,2$, where $s_j = \int^{\infty}_{-\infty} u^j K(u) du$ and 
$s_0 = 1$, $s_1 = 0$,
Moreover, 
\begin{equation}
\begin{aligned}
\sum_i w_{i,h}(x_0) & = S_{n,h,2}(x_0) S_{n,h,0}(x_0) - (S_{n,1,h}(x_0))^2 \\
& = n^2 h^4 s_2 p_X^2(x_0) \left(1 + O(h^2) + O_2\left(\sqrt{\frac{1}{nh}}\right)\right)
\end{aligned}
\label{eq::LPR::deno}
\end{equation}
Next, for the numerator,  let $R(X_i) = r(X_i) - r(x_0) - r'(x)(X_i - x_0)$ and using the property that $\sum_i w_{i,h}(x_0) (X_i - x_0) = 0$ given that this is a local linear smoother,
then we have
\begin{equation}
\begin{aligned}
\sum_{i=1}^n &[r(X_i) - r(x_0)] w_{i,h}(x_0) = \sum_{i=1}^n R(X_i)w_{i,h}(x_0) \\
& = \sum_i R(X_i) K\left(\frac{X_i - x_0}{h}\right) S_{n,h,2}(x_0) \\
& - \sum_i R(X_i) K\left(\frac{X_i - x_0}{h}\right) (X_i - x_0) S_{n,h,1}(x_0) 
\end{aligned}
\label{eq::LPR::nume}
\end{equation}
Here, for $j = 0, 1$, 
\begin{equation}
\begin{aligned}
\frac{1}{n h^{3+j}} & \sum_i R(X_i) (X_i - x_0)^j K\left(\frac{x_0 - X_i}{h}\right)  =\\
&  h^{- 3 - j} \E\left( [r(X) - r(x_0) - r'(x_0)(X - x_0)] (X - x_0)^j K\left(\frac{X - x_0}{h}\right)\right) + \\ & O_2\left(\sqrt{\frac{1}{nh}}\right).
\end{aligned}
\end{equation}
Note that by the mean value theorem,
\begin{align*}
& h^{- 3 - j} \E\left( [r(X) - r(x_0) - r'(x_0)(X - x_0)] (X - x_0)^j K\left(\frac{X - x_0}{h}\right)\right) \\
& = h^{-3 - j} \E\left( \frac{1}{2} r^{(2)}(X^\star) (X - x_0)^{j+2} K\left(\frac{X - x_0}{h}\right)\right) \\
\end{align*}
for some $X^{\star} \in [x_0-X, x_0+X]$.
%
Thus, by taking expectation with respect to $X$ (we denote it by $t$ and change the corresponding $X^{\star}$ as $t^{\star}$), 
\begin{align*}
& h^{-3 - j} \E\left(\frac{1}{2} r^{(2)}(X^{\star}) (X - x_0)^{j+2} K\left(\frac{X- x_0}{h}\right)\right)  \\
& = h^{-3 - j}\int \frac{1}{2}[ r^{(2)}(x_0) + r^{(2)}(t^{\star}) - r^{(2)}(x_0)] (t - x_0)^{j+2} K\left(\frac{t - x_0}{h}\right) p_X(t) dt \\
& = \frac{1}{2} r^{(2)}(x_0) \int u^{j + 2} K(u) p_X(x_0 - uh) du \\ 
& + h^{-3 - j} \int \frac{1}{2}[r^{(2)}(t^{\star}) - r^{(2)}(x_0)] (t - x_0)^{j + 2} K\left(\frac{t - x_0}{h}\right) p_X(t) dt \\
& = \frac{1}{2} r^{(2)}(x_0)s_{j+2}p_X(x_0)  + O(h^{2-j}) + O(h^{\delta_0}).
\end{align*}
The $O(h^{2-j})$ comes from the fact that $s_3 = 0$ and $s_4 \neq 0$ since $K(x)$ is a even function.
The $O(h^{\delta_0})$ comes from applying
a variable transformation for the second part $t - x_0 = uh$ and using the fact that $|t^{\star} - x_0 | \leq |t - x_0| = |uh|$ and the H\"older condition (Assumption (P))
to the second part.

Thus, pluging in the results above to \eqref{eq::LPR::nume}, 
\begin{align*}
& \sum_{i=1}^n [r(X_i) - r(x_0)] w_{i,h}(x_0)  \\
& = \sum_{i=1}^n R(X_i) w_{i,h}(x_0) \\
& = \sum_i R(X_i) K\left(\frac{X_i - x_0}{h}\right) S_{n,h,2}(x_0) - \sum_i R(X_i) K\left(\frac{X_i - x_0}{h}\right) (X_i - x_0) S_{n,h,1}(x_0) \\
& = n^2 h^6\left[\frac{1}{2} r^{(2)}(x_0)s_2 p_X(x_0) + O(h^{\delta_0}) + O(h^2) + O_2\left(\sqrt{\frac{1}{nh}}\right)\right] \\
& \left[p_X(x_0) s_2 + O(h^2) +  O_2\left(\sqrt{\frac{1}{nh}}\right)\right] \\
 & - n^2 h^6\left[O(h) + O(h^{\delta_0}) + O_2\left(\sqrt{\frac{1}{nh}}\right)\right]\left[O(h) + O_2\left(\sqrt{\frac{1}{nh}}\right)\right] \\
 & = n^2 h^6 \left[\frac{1}{2} r^{(2)}(x_0) s_2^2 p_X^2(x_0) + O(h^{\delta_0}) + O_2\left(\sqrt{\frac{1}{nh}}\right)\right]
\end{align*}
Then using this result combined with \eqref{eq::LPR::deno}
\begin{align*}
\E(\hat{r}_h(x_0)) - r(x_0)) & = \E\left[\frac{\sum_{i=1}^n [r(X_i) - r(x_0)] w_{i,h}(x_0)}{\sum_{i=1}^n w_{i,h}(x_0)}\right] \\
& = \frac{1}{2} s_2 h^2 r^{(2)}(x_0) + O(h^{2 + \delta_0}) + h^2 O\left(\sqrt{\frac{1}{nh}}\right)
\end{align*}
where given $h = O(n^{-1/5})$ and $\delta_0 \leq 2$,  the bias would be 
$$
\E(\hat{r}_h(x_0)) - r(x_0)) = \frac{1}{2} s_2 h^2 r^{(2)}(x_0) + O(h^{2 + \delta_0})
$$

{\bf Bias of derivative.}
For the rates of the bias of second order derivative,  here we use the notation defined before Lemma ~\ref{lem::LPR::inv}, 
\begin{equation}
\begin{aligned}
\E(\hat{r}^{(2)}_{h/\tau}(x_0)) - r^{(2)}(x_0) = 2\left[\E\left(e_3^T\left(X_{x_0}^T W_{x_0} X_{x_0}\right)^{-1} X_{x_0}^T W_{x_0} \Y - \frac{1}{2}r^{(2)}(x_0)\right)\right]
\end{aligned}
\label{eq::LPR::sd}
\end{equation}
and again we use the property of third order local polynomial regression such that 
$$
e_3^T\left(X_{x_0}^T W_{x_0} X_{x_0}\right)^{-1} X_{x_0}^T W_{x_0}\mathbbm{1}_n r(x_0) = 0 
$$
and 
$$
e_3^T\left(X_{x_0}^T W_{x_0} X_{x_0}\right)^{-1} X_{x_0}^T W_{x_0} \left(\begin{array}{c} X_1 - x_0 \\ X_2 - x_0 \\ \cdots \\ X_n - x_0 \end{array}\right) r'(x_0) = 0
$$
and 
$$
e_3^T\left(X_{x_0}^T W_{x_0} X_{x_0}\right)^{-1} X_{x_0}^T W_{x_0} \left(\begin{array}{c} (X_1 - x_0)^2 \\ (X_2 - x_0)^2 \\ \cdots \\ (X_n - x_0)^2 \end{array}\right) \frac{r^{(2)}(x_0)}{2} = \frac{1}{2} r^{(2)}(x_0)
$$

Let $R(X_i) = r(X_i) - r(x_0) - r'(x_0)(X_i - x_0) - \frac{1}{2}r^{(2)}(x_0)(X_i - x_0)^2$.
Then conditioning on $X_1,\ldots, X_n$, the right hand side of \eqref{eq::LPR::sd} becomes (up to a constant factor of 2)
\begin{align*}
\E\left(e_3^T(X_{x_0}^T W_{x_0} X_{x_0})^{-1} X_{x_0}^T W_{x_0} \left(\begin{array}{c} R(X_1) \\ R(X_2) \\ \cdots \\ R(X_n) \end{array}\right) \right)
\end{align*}
Using the notations in Lemma ~\ref{lem::LPR::inv},
the expectation can be rewritten as 
$$
\frac{1}{b^2}e_3^T (\frac{1}{nb}\X_{x_0, b}^T \W_{x_0} \X_{x_0,b})^{-1} \frac{1}{nb}\X_{x_0,b}^T \W_{x_0, b} \mathbf{R}
$$
where $\mathbf{R} = \left(R(X_1), \cdots, R(X_n)\right)^T$.  Now through a similar derivation, above equals to 
\begin{equation}
\begin{aligned}
e_3^T \left(\frac{1}{p_X(x)} \Omega_3^{-1} + O(b) + O_2\left(\sqrt{\frac{1}{nb}}\right)\right) \frac{1}{nb^3}\X_{x_0,b}^T \W_{x_0, b} \mathbf{R}
\end{aligned}
\label{eq::LPR::bias}
\end{equation}
where 
$$
\Omega^{-1}_3 = \left(\begin{array}{cccc} \cdots & \cdots & \cdots & \cdots \\ 
\cdots & \cdots & \cdots & \cdots \\ 
\frac{s_2 s_4^2 - s_2^2 s_6}{s_2^2 s_4^2 - s_4^3 - s_2^3 s_6 + s_2 s_4 s_6} & 0 & \frac{-s_4^2 + s_2 s_6}{s_2^2 s_4^2 - s_4^3 - s_2^3 s_6 + s_2 s_4 s_6} & 0 \\
\cdots & \cdots & \cdots & \cdots 
\end{array}\right).
$$
%
Here, we only show the third row in the $\Omega^{-1}$ since this is the only row that are related to our results here (we have a vector $e_3^T$
in front of it so other rows will be eliminated). 
Moreover, by a direct expansion,
$$
\frac{1}{nb^3}\X_{x_0,b}^T \W_{x_0, b} \mathbf{R} = \begin{pmatrix}
\frac{1}{nb^3}\sum_{i=1}^n R(X_i) K\left(\frac{X_i-x_0}{b}\right)\\
\frac{1}{nb^3}\sum_{i=1}^n R(X_i) \cdot \left(\frac{X_i-x_0}{b}\right)\cdot K\left(\frac{X_i-x_0}{b}\right)\\
\frac{1}{nb^3}\sum_{i=1}^n R(X_i) \cdot \left(\frac{X_i-x_0}{b}\right)^2\cdot K\left(\frac{X_i-x_0}{b}\right)\\
\frac{1}{nb^3}\sum_{i=1}^n R(X_i) \cdot \left(\frac{X_i-x_0}{b}\right)^3\cdot K\left(\frac{X_i-x_0}{b}\right)
\end{pmatrix}\\
$$
For $j = 0,1,2,3$, 
\begin{align*}
& \frac{1}{nb^3} \sum_{i=1}^n R(X_i)  \left(\frac{X_i - x_0}{b}\right)^j   K\left(\frac{X_i - x_0}{b}\right) \\
& = \frac{1}{b^3}\int R(t) \left(\frac{t - x_0}{b}\right)^j K\left(\frac{t - x_0}{n}\right)p_X(t) dt + O_2\left(\sqrt{\frac{1}{nb}}\right) \\
& = \frac{1}{b^2}\int R(x_0 + ub) u^j K(u) p_X(x_0 + ub) du + O_2\left(\sqrt{\frac{1}{nb}}\right)\\
& = \int \left(\frac{1}{2} r^{(2)}(x^{\star}) - \frac{1}{2} r^{(2)}(x_0)\right)u^{2+j} K(u) p_X(x_0 +ub) du + O_2\left(\sqrt{\frac{1}{nb}}\right) \\
& = O(b^{\delta_0}) + O_2\left(\sqrt{\frac{1}{nb}}\right)
\end{align*}
where $|x^{\star} - x_0| \leq ub$ is again from the mean value theorem.
Then the bias based on \eqref{eq::LPR::bias} is  
\begin{align*}
& \E[e_3^T \left(\frac{1}{p_X(x)} \Omega_3^{-1} + O(b) + O_2\left(\sqrt{\frac{1}{nb}}\right)\right) \frac{1}{nb^3}\X_{x_0,b}^T \W_{x_0, b} \mathbb{R}] = \\
& O(b^{\delta_0}) + O\left(\sqrt{\frac{1}{nb}}\right) = O(h^{\delta_0})+ O\left(\sqrt{\frac{1}{nh}}\right)
\end{align*}
given $\tau$ fixed,
which completes the proof.
\end{proof}

\begin{proof}[ of Lemma~\ref{lem::LS_BV}]
Recall from equation \eqref{eq::dLS} that the debiased local linear smoother is
$$
\hat{r}_{\tau,h}(x) = \hat{r}_h(x) - \frac{1}{2}\cdot c_K\cdot h^2\cdot \hat{r}^{(2)}_{h/\tau}(x).
$$
Under assumption (K3) and (R1) and by Lemma ~\ref{lem::Rate}, the bias and variance of $\hat{r}_h(x)$
is (by a similar derivation as the one described in Lemma~\ref{lem::KDE_BV})
\begin{equation*}
\begin{aligned}
\E\left( \hat{r}_h(x)\right) - r(x) &= \frac{h^2}{2}c_K\cdot r^{(2)}(x) + O(h^{2+\delta_0}) + h^2 O\left(\sqrt{\frac{1}{nh}}\right)\\
{\sf Var}( \hat{r}_h(x)) &= O_P\left(\frac{1}{nh}\right),
\end{aligned}
\end{equation*}
and the bias and variance of the second derivative estimator $ \hat{r}^{(2)}_{h/\tau}(x)$ is

\begin{equation*}
\begin{aligned}
\E\left( \hat{r}^{(2)}_{h/\tau}(x)\right) - r^{(2)}(x) &= O(h^{\delta_0}) + O\left(\sqrt{\frac{1}{nh}}\right)\\
{\sf Var}( \hat{r}^{(2)}_{h/\tau}(x)) &= O_P\left(\frac{1}{nh^5}\right).
\end{aligned}
\end{equation*}

Thus, given $\delta_0 \leq 2$, the bias of $\hat{r}_{\tau,h}(x)$ is 
\begin{align*}
\E\left(\hat{r}_{\tau,h}(x)\right) - r(x) & = \E\left(\hat{r}_{h}(x)\right) - r(x) - \frac{1}{2}\cdot c_K\cdot h^2\cdot \E\left(\hat{r}^{(2)}_{h/\tau}(x)\right)\\
& = \frac{h^2}{2}c_K\cdot r^{(2)}(x) + O(h^{2+\delta_0}) + h^2 O\left(\sqrt{\frac{1}{nh}}\right) \\
& - \frac{1}{2}\cdot c_K\cdot h^2\cdot \left( r^{(2)}(x) + O(h^{\delta_0}) + O\left(\sqrt{\frac{1}{nh}}\right)\right)\\
& = O(h^{2+\delta_0}) + h^2 O\left(\sqrt{\frac{1}{nh}}\right).
\end{align*}

The variance of $\hat{r}_{\tau,h}(x)$ is
\begin{align*}
{\sf Var}\left(\hat{r}_{\tau,h}(x)\right) &= {\sf Var}\left(\hat{r}_h(x) - \frac{1}{2}\cdot c_K\cdot h^2\cdot \hat{r}^{(2)}_{h/\tau}(x)\right)\\
&= {\sf Var}\left(\hat{r}_h(x)\right) + O\left(h^4\right)\cdot{\sf Var}\left(\hat{r}^{(2)}_{h/\tau}(x)\right)
- O\left(h^2\right) \cdot {\sf Cov}\left(\hat{r}_{\tau,h}(x), \hat{r}^{(2)}_{h/\tau}(x)\right)\\
&\leq O\left(\frac{1}{nh}\right) +O\left(\frac{h^4}{nh^5}\right) + O\left(h^2\cdot \sqrt{\frac{1}{nh}}\cdot \sqrt{\frac{1}{nh^5}}\right)\\
& = O\left(\frac{1}{nh}\right),
\end{align*}
which has proven the desired result.

\end{proof}

Before we move on to the proof of Lemma~\ref{lem::LS_empirical}, we first introduce some notations.
For a given point $x\in\D$, let 
\begin{align*}
\X_x &= \begin{pmatrix}
1 & (X_1-x) & (X_1-x)^2  &(X_1-x)^3 \\
1 & (X_2-x) & (X_2-x)^2  &(X_2-x)^3  \\
\vdots & \vdots &\vdots &\vdots\\
1 & (X_n-x) & (X_n-x)^2  &(X_n-x)^3 
\end{pmatrix}
\in \R^{n\times 4}\\
\X_{x,h} &= \begin{pmatrix}
1 & \left(\frac{X_1-x}{h}\right) & \left(\frac{X_1-x}{h}\right)^2&
\left(\frac{X_1-x}{h}\right)^3\\
1 & \left(\frac{X_2-x}{h}\right) & \left(\frac{X_2-x}{h}\right)^2&\left(\frac{X_2-x}{h}\right)^3\\
\vdots & \vdots &\vdots &\vdots\\
1 & \left(\frac{X_n-x}{h}\right) & \left(\frac{X_n-x}{h}\right)^2&\left(\frac{X_n-x}{h}\right)^3\\
\end{pmatrix}
\in \R^{n\times 4}\\
\W_x &= {\sf diag}\left(K\left(\frac{X_1-x}{h}\right),K\left(\frac{X_2-x}{h}\right),\cdots, K\left(\frac{X_n-x}{h}\right)\right)\in\R^{n\times n}\\
\Gamma_h & = {\sf diag}\left(1, h^{-1}, h^{-2}, h^{-3}\right)\in\R^{4\times 4}\\
\Y & = (Y_1,\cdots, Y_n)\in\R^n.
\end{align*}
Based on the above notations, 
the local polynomial estimator $\hat{r}^{(2)}_h(x)$
can be written as 
\begin{equation}
\begin{aligned}
\frac{1}{2!}\hat{r}^{(2)}_h(x) &= e_3^T (\X_x^T \W_x\X_x)^{-1}\X_x^T\W_x \Y\\
&= \frac{1}{h}e_3^T \Gamma_h \left(\frac{1}{nh}\Gamma_h\X_x^T \W_x\X_x\Gamma_h\right)^{-1}\frac{1}{n}\Gamma_h\X_x^T\W_x \Y\\
&= \frac{1}{h^3}e_3^T \left(\frac{1}{nh}\X_{x,h}^T \W_x\X_{x,h}\right)^{-1}\frac{1}{n}\X_{x,h}^T\W_x \Y
\end{aligned}
\label{eq::LPR::matrix}
\end{equation}
where $e_3^T = (0,0,1,0)$; see, e.g., \cite{fan1996local,wasserman2006all}.
Thus, a key element in the proof of Lemma~\ref{lem::LS_empirical} 
is deriving the asymptotic behavior of $(\frac{1}{nh}\X_{x,h}^T \W_x\X_{x,h})^{-1}$.

\begin{lem}
Assume (K1,K3) and (R1).
Then 
$$
\sup_{x\in\D}\left\|\frac{1}{nh}\X_{x,h}^T \W_x\X_{x,h} - p_X(x) \cdot \Omega_3\right\|_{\max} =O(h) + O_P\left(\sqrt{\frac{\log n}{nh}}\right).
$$
Thus, 
$$
\sup_{x\in\D}\left\|\left(\frac{1}{nh}\X_{x,h}^T \W_x\X_{x,h}\right)^{-1} - \frac{1}{p_X(x)} \Omega^{-1}_3\right\|_{\max} =O(h) + O_P\left(\sqrt{\frac{\log n}{nh}}\right).
$$
\label{lem::LPR::inv}
\end{lem}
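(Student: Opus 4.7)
The plan is to prove the first assertion entry-wise and then invert via a matrix perturbation argument using the lower bound on $p_X$.

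First I would write out the $(i,j)$ entry of the target matrix,
$$
\left(\tfrac{1}{nh}\X_{x,h}^T\W_x\X_{x,h}\right)_{ij} = \frac{1}{nh}\sum_{k=1}^n \left(\frac{X_k-x}{h}\right)^{i+j-2} K\!\left(\frac{X_k-x}{h}\right),
$$
for $i,j=1,\dots,4$, so the relevant exponents range over $0\le i+j-2\le 6$. A standard change of variables $u=(y-x)/h$ together with a Taylor expansion of $p_X$ around $x$ (which is valid since (R1) gives $p_X$ continuous and, as implicit in the context, H\"older-smooth enough to yield an $O(h)$ remainder uniformly on $\D$) shows that
$$
\E\left(\tfrac{1}{nh}\X_{x,h}^T\W_x\X_{x,h}\right)_{ij} = p_X(x)\int u^{i+j-2}K(u)\,du + O(h) = p_X(x)\,\Omega_{3,ij} + O(h),
$$
uniformly in $x\in\D$ because $p_X$ is bounded and continuous on the compact support $\D$.

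Next I would handle the stochastic variation. For each fixed pair $(i,j)$, the entry is a kernel-type empirical average indexed by $x\in\D$ and $h>0$, with integrand $(u-x/h)^{i+j-2}K((u-x)/h)$ belonging precisely to the class $\mathcal{K}^\dagger_6$ of assumption (K3). Since $\mathcal{K}^\dagger_6$ is a VC-type class with a bounded envelope, I can invoke the uniform empirical process bound of \cite{Gine2002} (the same tool used in the proof of Lemma~\ref{lem::KDE_infty}) to conclude
$$
\sup_{x\in\D}\left|\left(\tfrac{1}{nh}\X_{x,h}^T\W_x\X_{x,h}\right)_{ij} - \E\left(\tfrac{1}{nh}\X_{x,h}^T\W_x\X_{x,h}\right)_{ij}\right| = O_P\!\left(\sqrt{\tfrac{\log n}{nh}}\right).
$$
Taking the maximum over the finitely many entries $(i,j)$ and combining with the bias calculation above yields the first claim in max-norm.

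For the second assertion I would invoke matrix inversion perturbation. By (R1), $p_X$ is continuous and strictly positive on the compact $\D$, so $\inf_{x\in\D}p_X(x)\ge p_{\min}>0$; combined with the fact that $\Omega_3$ is symmetric positive definite (a standard consequence of (K1), since the moment matrix of a non-degenerate measure has full rank), we get $\inf_{x\in\D}\sigma_{\min}(p_X(x)\Omega_3)\ge p_{\min}\sigma_{\min}(\Omega_3)>0$. On the high-probability event where the first assertion's bound $\Delta_n:=O(h)+O_P(\sqrt{\log n/(nh)})$ is smaller than half this lower bound (which holds eventually since $h\to 0$ and $nh/\log n\to\infty$, both implied by $nh^5/\log n\to c_0\ge 0$ combined with $h\to 0$), a standard Neumann-series bound gives
$$
\left\|\left(\tfrac{1}{nh}\X_{x,h}^T\W_x\X_{x,h}\right)^{-1} - \frac{1}{p_X(x)}\Omega_3^{-1}\right\|_{\max} \lesssim \Delta_n,
$$
uniformly in $x\in\D$, which is the second assertion. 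The main obstacle here is really the uniform-in-$x$ stochastic bound; once (K3) is invoked, everything else reduces to Taylor expansion and an elementary perturbation estimate.
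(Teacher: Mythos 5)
Your proposal is correct and follows essentially the same route as the paper's own proof: an entry-wise decomposition into bias and stochastic variation, a change of variables plus expansion of $p_X$ for the expectation term, and the uniform empirical-process bound of \cite{Gine2002} under (K3) for the stochastic term. The only difference is that you explicitly supply the positive-definiteness and Neumann-series perturbation argument for the second (inverse) assertion, which the paper leaves implicit.
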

\begin{proof}
We denote $\Xi_n(x,h) = \frac{1}{nh}\X_{x,h}^T \W_x\X_{x,h} $. 
Then direct computation shows that the $(j,\ell)$ element of the matrix $\Xi_n(x,h)$ is
$$
\Xi_n(x,h)_{j\ell} = \frac{1}{nh}\sum_{i=1}^n \left(\frac{X_i-x}{h}\right)^{j+\ell-2} K\left(\frac{X_i-x}{h}\right) 
$$
for $j,\ell = 1,2,3,4$.

Thus, the difference
\begin{align*}
\Xi_n(x,h)_{j\ell} - p_X(x) \Omega_{3,j\ell} = \underbrace{\Xi_n(x,h)_{j\ell} - \E\left(\Xi_n(x,h)_{j\ell}\right)}_{(I)}
+\underbrace{\E\left(\Xi_n(x,h)_{j\ell}\right) - p_X(x) \Omega_{3,j\ell}}_{(II)}.
\end{align*}
The first quantity (I) is about stochastic variation and the second quantity (II) is like bias in the KDE.

We first bound (II). 
By direct derivation,
\begin{align*}
\E\left(\Xi_n(x,h)_{j\ell}\right)  &= \frac{1}{h}\int\left(\frac{\omega-x}{h}\right)^{j+\ell-2} K\left(\frac{\omega-x}{h}\right) p_X(\omega) d\omega\\
& = \int u^{j+\ell-2} K\left(u\right) p_X(x+uh) du\\
& = p_X(x)\int u^{j+\ell-2} K\left(u\right)  + O(h)\\
& = p_X(x) \Omega_{3,j\ell} + O(h).
\end{align*}

Now we bound (I).
Let $\P_{X,n}$ denote the empirical measure and $\P_{X}$ denote the probability measure of the covariate $X$.
We rewrite the first quantity (I) as
$$
(I) = \frac{1}{h}\int\left(\frac{\omega-x}{h}\right)^{j+\ell-2} K\left(\frac{\omega-x}{h}\right) (d\P_{X,n}(\omega)-d\P_{X}(\omega)).
$$
This quantity can be uniformly bounded using the empirical process theory from \cite{Gine2002},
which proves that 
$$
(I) = O_P\left(\sqrt{\frac{\log n}{nh}}\right)
$$
uniformly for all $x\in\D$ under assumption (K3) and (R1).
Putting it altogether, we have proved
$$
\sup_{x\in\D}\left\|\Xi_n(x,h)_{j\ell} - p_X(x) \Omega_{3,j\ell}\right\| = O(h) + O_P\left(\sqrt{\frac{\log n}{nh}}\right).
$$
This works for all $j,\ell=1,2,3,4$, so we have proved this lemma.

\end{proof}

\begin{proof}[ of Lemma~\ref{lem::LS_empirical}]
{\bf Empirical approximation.}
Recall that 
$$
\hat{r}_{\tau,h}(x) = \hat{r}_h(x) - \frac{1}{2}\cdot c_K\cdot h^2\cdot \hat{r}^{(2)}_{h/\tau}(x),
$$
The goal is to prove that 
$\hat{r}_{\tau,h}(x)-r(x)$ can be uniformly approximated by
an empirical process.

By Lemma~\ref{lem::LS_BV},
the difference 
$
\hat{r}_{\tau,h}(x)-r(x)
$
is dominated by the stochastic variation $\hat{r}_{\tau,h}(x)-\E\left(\hat{r}_{\tau,h}(x)\right)$ when $nh^5\rightarrow c<\infty$. 
Thus, 
we only need to show that $\sqrt{nh}\left(\hat{r}_{\tau,h}(x)-\E\left(\hat{r}_{\tau,h}(x)\right)\right)\approx \sqrt{h}\G_n(\psi_x)$. 

Because 
$$
\hat{r}_{\tau,h}(x)-\E\left(\hat{r}_{\tau,h}(x)\right) = \hat{r}_h(x)-\E\left(\hat{r}_h(x)\right) - \frac{1}{2}\cdot c_K\cdot h^2\cdot \left(\hat{r}^{(2)}_{h/\tau}(x)-\E\left(\hat{r}^{(2)}_{h/\tau}(x)\right)\right).
$$
For simplicity, we only show the result of the second derivative part (the result of the first part 
can be proved in a similar way). Namely, we will prove
\begin{equation}
\frac{\sqrt{nh}}{2}\cdot c_K\cdot h^2\cdot \left(\hat{r}^{(2)}_{h/\tau}(x)-\E\left(\hat{r}^{(2)}_{h/\tau}(x)\right)\right) \approx \sqrt{h} \G_n(\psi_{x,2})
\label{eq::pf::LSe::1}
\end{equation}
uniformly for all $x\in \D$,
where $\psi_{x,2}(z_1,z_2) = \frac{1}{h p_X(x)}c_K\cdot\tau^3\cdot e_3^T \Omega_3^{-1}\Psi_{2,\tau x}(\tau z_1, z_2)$
is the second part of $\psi_x(z_1,z_2)$.

Recall from equation \eqref{eq::LPR::matrix} and apply Lemma~\ref{lem::LPR::inv},
\begin{align*}
\frac{1}{2} \hat{r}^{(2)}_{h/\tau}(x)
&= \frac{1}{b^2}e_3^T \left(\frac{1}{nb}\X_{x,b}^T \W_x\X_{x,b}\right)^{-1}\frac{1}{nb}\X_{x,h}^T\W_x \Y\\
&= \frac{1}{b^2}e_3^T\left(\frac{1}{p_X(x)} \Omega_3^{-1} + O(h) + O_P\left(\sqrt{\frac{\log n}{nh}}\right)\right)\frac{1}{nb}\X_{x,b}^T\W_x \Y
\end{align*}
where $b=h/\tau$.
The vector $\frac{1}{nb}\X_{x,b}^T\W_x \Y$ can be decomposed into
\begin{align*}
\frac{1}{nb}\X_{x,b}^T\W_x \Y &= \begin{pmatrix}
\frac{1}{nb}\sum_{i=1}^n Y_i K\left(\frac{X_i-x}{b}\right)\\
\frac{1}{nb}\sum_{i=1}^n Y_i\cdot \left(\frac{X_i-x}{b}\right)\cdot K\left(\frac{X_i-x}{b}\right)\\
\frac{1}{nb}\sum_{i=1}^n Y_i\cdot \left(\frac{X_i-x}{b}\right)^2\cdot K\left(\frac{X_i-x}{b}\right)\\
\frac{1}{nb}\sum_{i=1}^n Y_i\cdot \left(\frac{X_i-x}{b}\right)^3\cdot K\left(\frac{X_i-x}{b}\right)
\end{pmatrix}\\
& = \frac{\tau}{h}\int \Psi_{2,\tau x}(\tau z_1, z_2) d\P_{n}(z_1,z_2).
\end{align*}
Thus, by denoting $\P_n (e_3^T\Psi_{2,\tau x})= \int e_3^T\Psi_{2,\tau x}(\tau z_1, z_2) d\P_{n}(z_1,z_2)$,
\begin{align*}
\frac{1}{2}\hat{r}^{(2)}_{h/\tau}(x) &= \int\frac{\tau^3}{h^3\cdot p_X(x)}e_3^T \Omega_3^{-1} \Psi_{2,\tau x}(\tau z_1, z_2) d\P_{n}(z_1,z_2)\\
& + \frac{\tau^3}{h^3}e_3^T \Psi_{2,\tau x}(\tau z_1, z_2) d\P_n(z_1, z_2) \left(O(h) + O_P \left(\sqrt{\frac{\log n}{nh}}\right)\right) \\ 
& = \frac{1}{c_K\cdot h^2}\P_n(\psi_{x,2}) + \frac{2 \tau^3}{h^3} \P_n(e_3^T\Psi_{2,\tau x})\left(O(h) +O_P\left(\sqrt{\frac{\log n}{nh}}\right)\right).
\end{align*}
This also implies 
$$
\frac{1}{2}\E\left(\hat{r}^{(2)}_{h/\tau}(x)\right) = \frac{1}{c_K\cdot h^2}\P(\psi_{x,2}) + \frac{2 \tau^3}{h^3} \P(e_3^T\Psi_{2,\tau x})\left(O(h) +O\left(\sqrt{\frac{\log n}{nh}}\right)\right),
$$
where $\P (e_3^T\Psi_{2,\tau x})= \int e_3^T\Psi_{2,\tau x}(\tau z_1, z_2) d\P(z_1,z_2)$.
Based on the above derivations, the scaled second derivative
\begin{align*}
& \frac{\sqrt{nh}}{2}\cdot c_K\cdot h^2\cdot \left(\hat{r}^{(2)}_{h/\tau}(x)-\E\left(\hat{r}^{(2)}_{h/\tau}(x)\right)\right)
= \sqrt{nh} \cdot\left(\P_n(\psi_{x,2})-\P(\psi_{x,2})\right) + \\
&  \frac{\sqrt{n} \cdot c_K \cdot \tau^3}{\sqrt{h}} ( \P_n(e_3^T \Psi_{2, \tau x} - \P(e_3^T \Psi_{2, \tau x})) \left(O(h) + O_P \left(\sqrt{\frac{\log n}{nh}}\right) \right) \\
& = \sqrt{h} \G_n(\psi_{x,2}) + \frac{c_K \cdot \tau^3}{\sqrt{h}} \G_n(e_3^T \Psi_{2, \tau x}) \left(O(h) + O_P \left(\sqrt{\frac{\log n}{nh}}\right) \right).
\end{align*}
By the similar derivation, we obtain
$$
\sqrt{nh}(\hat{r}_h(x) - \E(\hat{r}_h(x)) = \sqrt{h} \G_n(\psi_{x,0}) + \frac{1}{\sqrt{h}} \G_n(e_1^T \Psi_{x,0}) \left(O(h) + O_P\left(\sqrt{\frac{ \log n}{nh}}\right)\right)
$$
where $\psi_{x,0}(z) = \frac{1}{p_X(x) h} e_1^T \Omega_1^{-1} \Psi_{0,x}(z)$. Then combined together,
\begin{align*}
& \sqrt{nh}(\hat{r}_{\tau, h}(x) - \E (\hat{r}_{\tau, h}(x)))  = \sqrt{h} \G_n(\psi_x) + \\
& \frac{1}{\sqrt{h}} \G_n(e_1^T \Psi_{x,0}- c_K \cdot \tau^3 e_3^T \Psi_{2, \tau x}) \left(O(h) + O_P\left(\sqrt{\frac{\log n}{nh}}\right)\right)
\end{align*}

Because this $O_P$ term is uniformly for all $x$ (Lemma~\ref{lem::LPR::inv}), we conclude that 
$$
\sup_{x\in\D}\left\| \frac{\sqrt{nh}(\hat{r}_{\tau, h}(x) - \E(\hat{r}_{\tau, h}(x)) - \sqrt{h} \G_n(\psi_x)}{\frac{1}{\sqrt{h}} \G_n(e_1^T \Psi_{x,0} - c_K \cdot \tau^3 e_3^T \Psi_{2,\tau x})}\right\|
= O(h) +O_P\left(\sqrt{\frac{\log n}{nh}}\right).
$$

{\bf Uniform bound.}
In the first assertion, we have shown that 
$$
\sqrt{nh}(\hat{r}_{\tau,h}(x)- \E(\hat{r}_{\tau, h}(x)) \approx \sqrt{h}\G_n(\psi_x).
$$
In the definition of $\psi_x(z_1,z_2)$ (equation \eqref{eq::z1z2}),
the dependency on $z_2$ can is linear, i.e., $\psi_x(z_1,z_2) = z_2 \cdot \psi_x'(z_1)$  
for some function $\psi_x'(z_1)$.
Thus, by defining $z_2\cdot \Psi'_{0,x}(z_1) = \Psi'_{0,x}(z_1,z_2)$ and $z_2\cdot \Psi'_{2,x}(\tau z_1) = \Psi'_{2,x}(\tau z_1,z_2)$,
we can rewrite the above expectation as
\begin{align*}
& \sqrt{h} \G_n(\psi_x)  = \sqrt{\frac{h}{n}} \sum_{i=1}^n \left(\psi_x(X_i, Y_i) - \E \psi_x(X_i, Y_i)\right)\\
& = \frac{1}{\sqrt{nh}} \sum_{i=1}^n \left( \frac{1}{p_X(x)} \left(e_1^T \Omega_1^{-1} \Psi_{0,x}(X_i, Y_i) -  c_K \cdot \tau^3 e_3^T \Omega_3^{-1} \Psi_{2, \tau x}(\tau X_i, Y_i) \right) - \E(\psi_x(X_i, Y_i))\right) \\
& = \frac{1}{\sqrt{nh}} \sum_{i=1}^n Y_i \left( \frac{1}{p_X(x)} \left(e_1^T \Omega_1^{-1} \Psi'_{0,x}(X_i) -  c_K \cdot \tau^3 e_3^T \Omega_3^{-1} \Psi'_{2, \tau x}(\tau X_i) \right) - \E(\psi_x'(X_i))\right).
\end{align*}

Then since $\Psi'_{0,x}$ and $\Psi'_{2,\tau x}$ are simply linear combinations of functions from 
$$
{\cal M}_3^{\dagger} = \left\{ y \mapsto \left(\frac{y - x}{h} \right)^\gamma K \left(\frac{y - x}{h} \right):  x \in \D, \gamma = 0, \ldots, 3, h > 0\right\}
$$
that is bounded if $K$ is gaussian kernel or any other compact supported kernel function, so we can apply the empirical process theory in \cite{Einmahl2005}
which proves that 
\begin{align*}
\sup_{x \in \D} \Bigg\|\sum_{i=1}^n Y_i \Big( \frac{1}{p_X(x)} &\bigg(e_1^T \Omega_1^{-1} \Psi'_{0,x}(X_i)-\\ 
 & c_K \cdot \tau^3 e_3^T \Omega_3^{-1} \Psi'_{2, \tau x}(\tau X_i) \bigg) - \E(\psi_x'(X_i) \Big)\Bigg\| \\
&\qquad =O_P(\sqrt{nh \log n})
\end{align*}
Thus, 
$$
\sup_{{x \in \D}}\|\sqrt{h} \G_n(\psi_x)\| =  O_P(\sqrt{\log n})
$$
and similarly, 
$$
\left\|\frac{1}{\sqrt{h}} \G_n(e_1^T \Psi_{x,0} - c_K \cdot \tau^3 e_3^T \Psi_{2, \tau x})\right\|_\infty = O_P(\sqrt{\log n})
$$
Therefore,
\begin{align*}
 \hat{r}_{\tau, h}(x) &- \E(\hat{r}_{\tau, h}(x))=\\ 
& \frac{\sqrt{h} \G_n(\psi_x) + \frac{1}{\sqrt{h}} \G_n(e_1^T \Psi_{x,0} - c_K \cdot \tau^3 e_3^T \Psi_{2, \tau x}) \left(O(h) + O_P\left(\sqrt{\frac{\log n}{nh}}\right)\right)}{\sqrt{nh}}
\end{align*}
Then 
\begin{align*}
 \|\hat{r}_{\tau, h}(x) - \E(\hat{r}_{\tau, h}(x))\|_\infty &\leq\left\|\frac{\sqrt{h}\G_n(\psi_x)}{\sqrt{nh}}\right\|_\infty + \\
&\quad  \left\|\frac{\frac{1}{\sqrt{h}} \G_n(e_1^T \Psi_{x,0} - c_K \cdot \tau^3 e_3^T \Psi_{2,\tau x})}{\sqrt{nh}}\right\|_\infty \left(O(h) + O_P\left(\sqrt{\frac{\log n}{nh}}\right)\right) \\
&  = O_P\left(\sqrt{\frac{\log n}{nh}}\right) + O_P\left(\sqrt{\frac{\log n}{nh}}\right) \left(O(h) + O_P\left(\sqrt{\frac{\log n}{nh}}\right)\right) \\
&= O_P\left(\sqrt{\frac{\log n}{nh}}\right)
\end{align*}
This, together with the bias in Lemma~\ref{lem::LS_BV},
implies 
$$
\|\hat{r}_{\tau,h}(x)-r(x)\|_\infty  = O(h^{2+\delta_0}) + h^2 O\left(\sqrt{\frac{1}{nh}}\right) + O_P\left(\sqrt{\frac{\log n}{nh}}\right)
$$
which completes the proof.
\end{proof}
{
\section{More Simulation Results for Density Case}	\label{sec::DE::sim::app}
In this section, we further evaluate the performance of our debiased approach by comparing it with undersmoothing (US), traditional bias correction (BC), and the variable-width debiased approach (Variable DE) proposed in remark \ref{rm::honest0} on a couple different density functions. 

We consider the following 5 scenarios that 
have been used in \cite{marron1992exact, calonico2018effect}:
\begin{align*}
& \text{Model 1}: x \sim \frac{1}{5} \cN(0, 1) + \frac{1}{5} \cN\left(\frac{1}{2}, \left(\frac{2}{3} \right)^2 \right) + \frac{3}{5} \cN\left(\frac{13}{12}, \left( \frac{5}{9} \right)^2 \right) \\
& \text{Model 2}: x \sim \frac{1}{2} \cN\left(-1, \left(\frac{2}{3} \right)^2 \right) + \frac{1}{2} \cN\left(1, \left(\frac{2}{3} \right)^2 \right) \\
& \text{Model 3}: x \sim \frac{3}{4} \cN(0, 1) + \frac{1}{4} \cN\left(\frac{3}{2}, \left(\frac{1}{3}\right)^2 \right) \\
& \text{Model 4}: x \sim \frac{3}{5} \cN \left(-1, \left( \frac{1}{4} \right)^2 \right) + \frac{2}{5} \cN \left(1, 1 \right) \\
& \text{Model 5}: x \sim \frac{3}{5} \cN \left(1.5, 1 \right) + \frac{2}{5} \cN \left(-1.5, 1\right)
\end{align*}
For each model, we consider $n=500,1000,2000$ and 
construct uniform confidence band within range of $[-2, 2]$ for each model.  
The bandwidth was selected by either cross validation with package \texttt{ks} \citep{duong2007ks} or rule of thumb (ROT).  
As mentioned previously, undersmoothing is performed with half the bandwidth of debiased approach.  Traditional bias correction involves a second bandwidth for consistently estimating the second derivative. }

{
Table \ref{table::appendix_kde_comparison_coverage} and \ref{table::appendix_kde_comparison_width} reports the simulation results for the above 5 models.  Overall our debiased approach achieves very accurate coverages in almost all the cases except when bandwidth are chosen by the rule of thumb and the scenarios are Model 3 and 4, which are two of the hardest cases among the 5 models that we tested (one can see that
	almost all methods with ROT fail in this case).  
	With these observations, we recommend to use cross validation for bandwidth selection over rule of thumb, which in a way ``adapts" to the specific smoothness of each density function.  It is also worth mentioning that undersmoothing combined with bootstrap also works pretty well in our simulations.  In all the cases we considered, the width of confidence bands is wider for undersmoothing than the debiased approach, which makes our approach more favorable than undersmoothing. Traditional bias correction always undercovers in our simulations, which may be caused by the fact that the requirement of an additional bandwidth makes the problem more complicated. Finally, the variable-width debiased approach also works well in most cases, especially with bandwidth selected by cross validation.  Again for Model 4, it seems to be undercovering a bit.  This is due to the fact that the density value is very close to 0 when $x$ is close to -2 under Model 4.  If we restrict the range for uniform confidence band to be $[-1.5, 2]$, then the empirical coverage for $n = 500, 1000, 2000$ would be $[0.972, 0.956, 0.948]$ with bandwidth selected by cross validation.   It seems that the original debiased approach is more robust to the case where the density value is close to 0. Another interesting observation is that the variable-width confidence band from the debiased estimator has a similar but slightly smaller averaged width compared to the fixed width method.
\begin{table}[H]
\caption{Empirical Coverage of 95\% simultaneous confidence band}
\begin{tabular}{c c c  c  c  c  c  }
\hline
 & & & \multicolumn{4}{c}{Empirical coverage}\\
 \hline
 Model & n & Bw Selection & US & BC & Debiased & Variable DE  \\
 \hline
 1 & 500 & CV & 0.953 &0.913 & 0.954 & 0.949\\
    & & ROT & 0.949 &0.937 &0.951 & 0.96\\
    & 1000 & CV & 0.949 & 0.906&0.950 &0.977\\
    & & ROT & 0.941 & 0.936&0.948  &0.979\\
    & 2000 & CV & 0.949 & 0.913&0.953 &0.97\\
    & & ROT & 0.947 & 0.929 & 0.948 &0.962\\ 
 \hline
  2 & 500 & CV & 0.957 & 0.914 & 0.949 & 0.953\\
    & & ROT & 0.953 & 0.833 &0.956 & 0.956\\
    & 1000 & CV & 0.961 & 0.912 &0.958 & 0.949\\
    & & ROT & 0.947 & 0.83 &0.949  & 0.955\\
    & 2000 & CV & 0.95 & 0.916 &0.95 & 0.958\\
    & & ROT & 0.937 & 0.84 & 0.953 & 0.954\\ 
 \hline
   3 & 500 & CV & 0.933 & 0.863 & 0.924 & 0.939\\
    & & ROT & 0.9 & 0.324 &0.847 &0.89 \\
    & 1000 & CV & 0.951 & 0.883&0.947 &0.958\\
    & & ROT & 0.892 & 0.196&0.874  & 0.911\\
    & 2000 & CV & 0.931 & 0.897&0.932 & 0.94\\
    & & ROT & 0.868 & 0.171 & 0.877 & 0.911\\ 
 \hline
   4 & 500 & CV & 0.942 &0.88 & 0.951 & 0.781\\
    & & ROT & 0.01 & 0&0 & 0 \\
    & 1000 & CV & 0.951 & 0.898&0.942 & 0.787 \\
    & & ROT & 0.008 & 0&0  &0\\
    & 2000 & CV & 0.941 & 0.897&0.95 & 0.827\\
    & & ROT & 0.008 & 0& 0 &0 \\ 
 \hline
   5 & 500 & CV & 0.956 &0.907 & 0.956 & 0.939\\
    & & ROT & 0.944 & 0.84&0.95 & 0.937\\
    & 1000 & CV & 0.955 &0.911 &0.954 &0.955\\
    & & ROT & 0.946 &0.844 &0.95  &0.946\\
    & 2000 & CV & 0.947 & 0.911&0.944 &0.944\\
    & & ROT & 0.934 & 0.853 & 0.942 & 0.946\\ 
\end{tabular}
\label{table::appendix_kde_comparison_coverage}
\end{table}
\begin{table}[H]
\caption{Average width of 95\% simultaneous confidence band}
\begin{tabular}{c c c  c  c  c  c }
\hline
 & & & \multicolumn{4}{c}{Average Confidence Band width} \\
 \hline
 Model & n & Bw Selection & US  & BC & Debiased & Variable DE \\
 \hline
 1 & 500 & CV & 0.133 & 0.189 & 0.113 & 0.134 \\
    &  & ROT & 0.118 & 0.078 & 0.100 & 0.106  \\
    & 1000 & CV  & 0.103 & 0.114 & 0.089 & 0.082 \\
    &  & ROT & 0.092 & 0.061 & 0.079 & 0.071\\
    & 2000 & CV  & 0.081 & 0.078 &0.070 & 0.060\\
    &  & ROT & 0.072 & 0.047 & 0.062 & 0.053\\ 
 \hline
  2 & 500 & CV & 0.096 & 0.148& 0.083 & 0.086\\
    &  & ROT & 0.077 & 0.052 & 0.066 &  0.067\\
    & 1000 & CV & 0.076 & 0.097 & 0.066 & 0.066\\
    &  & ROT & 0.060 & 0.040& 0.052 & 0.052\\
    & 2000 & CV & 0.059 & 0.065 &0.052 & 0.052\\
    &  & ROT  & 0.047 & 0.031 & 0.040 & 0.040\\ 
 \hline
   3 & 500 & CV & 0.110 & 0.177 & 0.095 & 0.099\\
    &  & ROT & 0.086 & 0.058& 0.073 & 0.074 \\
    & 1000 & CV & 0.088 & 0.111& 0.077 &0.077 \\
    &  & ROT & 0.067 & 0.045& 0.057 & 0.057\\
    & 2000 & CV & 0.071 & 0.077 &0.062 & 0.061\\
    &  & ROT & 0.052 & 0.035 & 0.045 & 0.044\\ 
 \hline
   4 & 500 & CV  & 0.219 &0.244 & 0.186 & 0.183\\
    & & ROT & 0.090 & 0.053 & 0.071 & 0.064 \\
    & 1000 & CV & 0.179 & 0.159 & 0.154 & 0.128\\
    & & ROT & 0.072 & 0.042  & 0.057 & 0.049 \\
    & 2000 & CV & 0.143 & 0.116 &0.125 & 0.093\\
    & & ROT & 0.057 & 0.034 & 0.046 & 0.038\\ 
 \hline
   5 & 500 & CV & 0.065 &0.097 & 0.056 & 0.055 \\
    & & ROT & 0.052 & 0.035 & 0.044 & 0.043 \\
    & 1000 & CV & 0.051 & 0.066 & 0.044 & 0.043\\
    & & ROT & 0.040 &0.027  & 0.035 & 0.034 \\
    & 2000 & CV & 0.040 & 0.044&0.035 & 0.033  \\
    & & ROT & 0.031 & 0.021 & 0.027 &  0.026 \\ 
\end{tabular}
\label{table::appendix_kde_comparison_width}
\end{table}

\begin{figure}[H]
\centering
\includegraphics[height=1.5in]{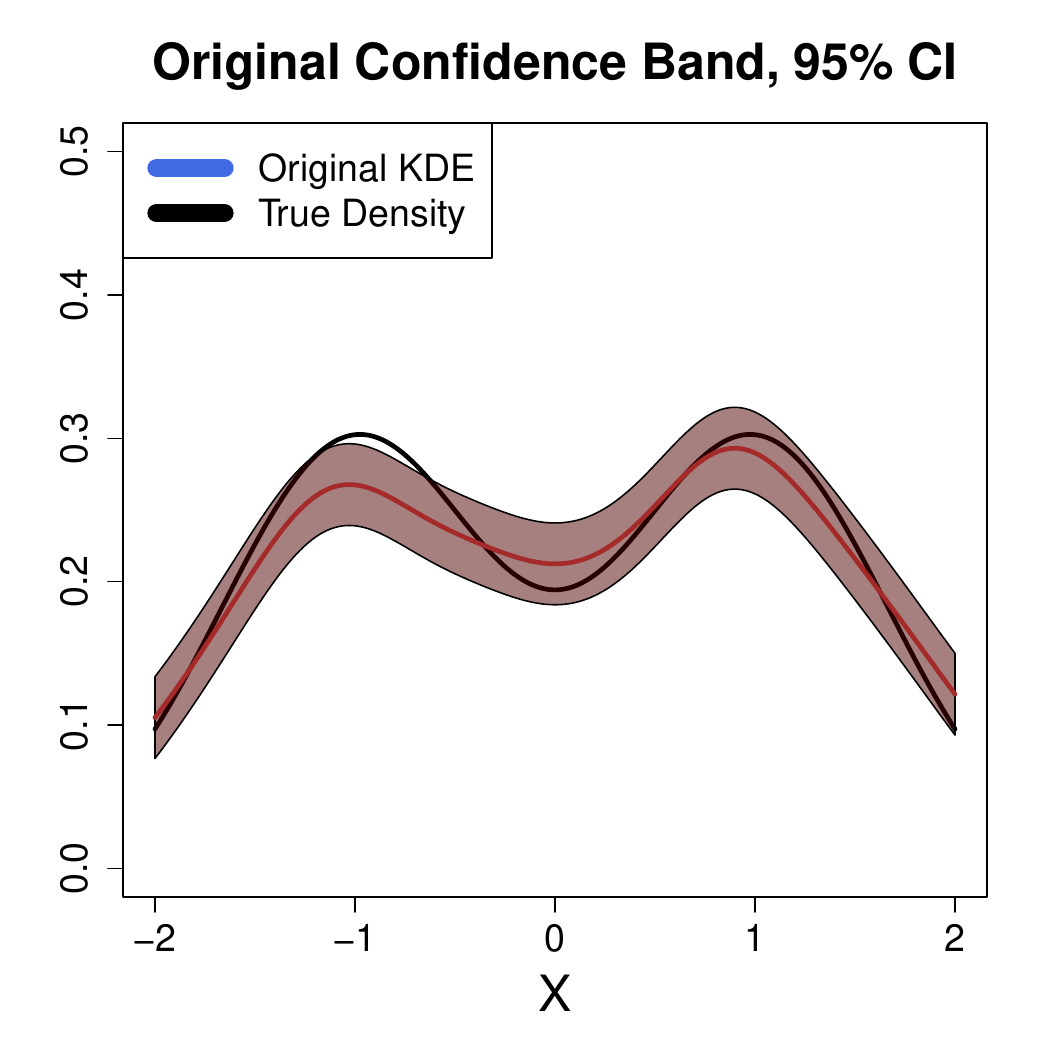}
\includegraphics[height=1.5in]{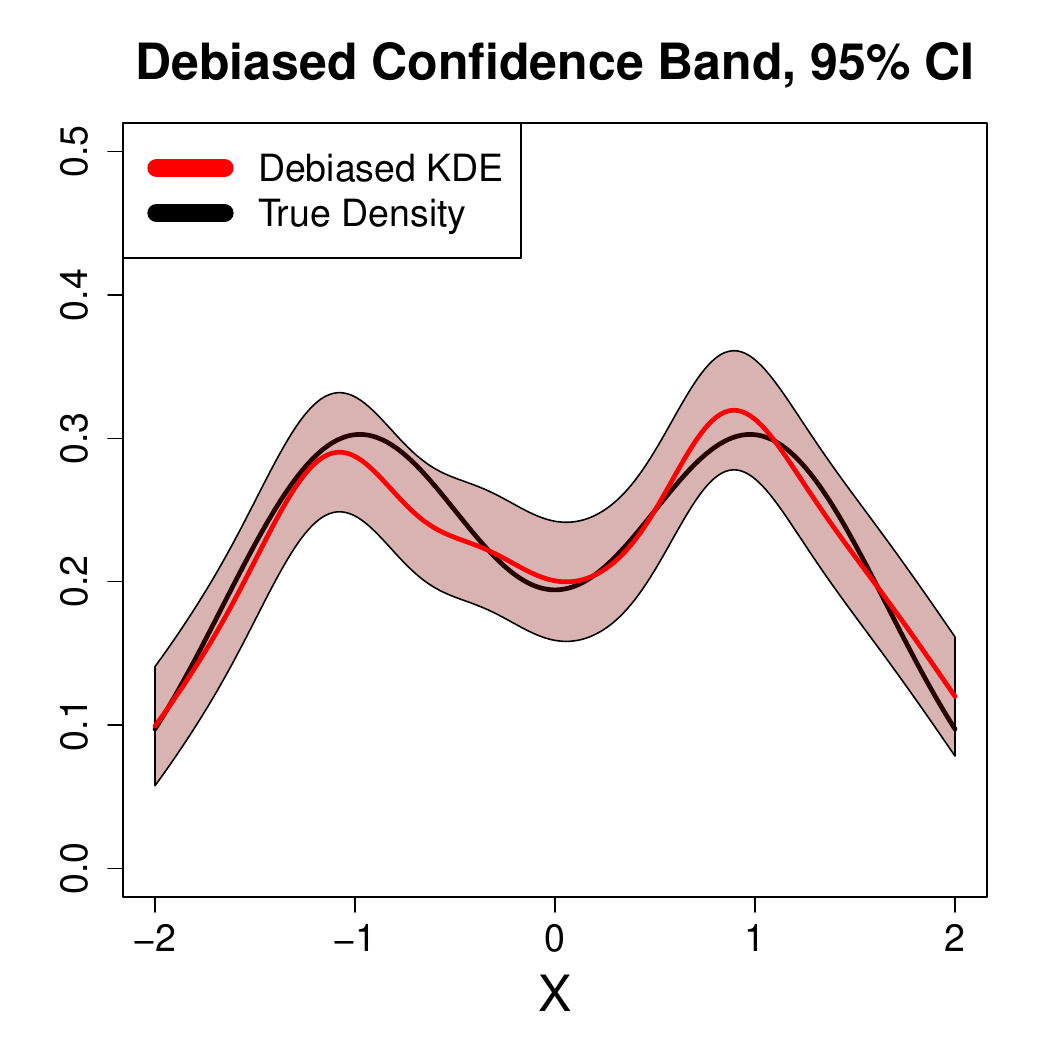}
\includegraphics[height=1.5in]{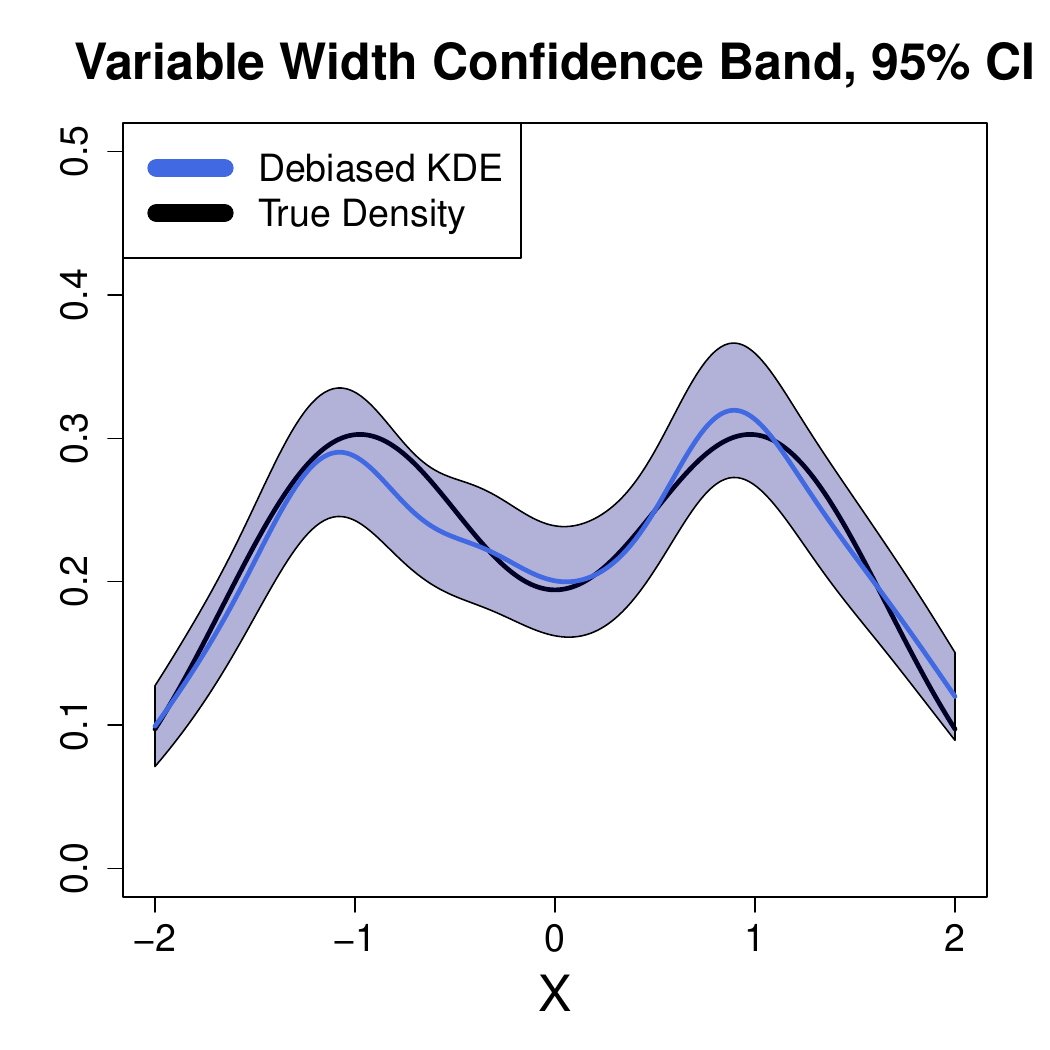}\\
\caption{From left to right, simultaneous confidence bands by bootstrapping original kernel density estimator, debiased density estimator and the variable-width confidence band.}
\label{fig::variable_length_confidence_band}
\end{figure}

As a case study, 
we plot simultaneous confidence bands for one instance of Model 2 with three approaches: 1. bootstrapping the original KDE, 2. our debiased estimator, and 3. the variable length debiased estimator 
in Figure \ref{fig::variable_length_confidence_band}. 
The confidence band of bootstrapping the original kernel density estimator is narrowest. The confidence band with from bootstrapping the original KDE is around 0.028 in this case, but it does not cover the whole density function within range [-2, 2].  Our debiased confidence band has width 0.042 and indeed cover the whole density function.  The variable-width confidence band has width that roughly increases as the density value increases (which is expectd from the theory).  The far left side has width 0.028, which achieves the width as bootstrapping the original kernel density. The first bump has width about 0.045, which is wider than the debiased confidence band.  The width at the valley (around value 0) is about 0.038 and the width at the second bump is around 0.046. Finally, the width at the far right side is about 0.031, which is again pretty close to the width of bootstrapping the original kernel estimator.
}

{
\section{More Simulation Results for Regression Case}	\label{sec::REG::sim::app}
In this section,  we reports more simulation results for the regression case.  We evaluate our debiased approach by comparing it with undersmoothing (US), traditional bias correction (BC),  locfit on several functions.  
We consider the scenarios from the supplement to \citep{calonico2018effect} such that
\begin{align*}
Y = m(x) + \epsilon;  \quad x \sim {\cal U}[-1, 1] \quad \epsilon \sim \cN(0, 1)
\end{align*}
with $m(x)$ being as follow:
\begin{align*}
& \text{Model 1}: m(x) = \sin(4x) + 2\exp(-64x^2) \\
& \text{Model 2}: m(x) = 2x + 2 \exp(-64x^2) \\
& \text{Model 3}: m(x) = 0.3 \exp(-4(2x + 1)^2) + 0.7\exp(-16(2x-1)^2) \\
& \text{Model 4}: m(x) = x + 5 \frac{1}{\sqrt{2\pi}} \exp(- (10x)^2 / 2) \\
& \text{Model 5}: m(x) = \frac{\sin(\pi x / 2)}{1 + 2x^2 [\text{sign}(x) + 1]}
\end{align*}
We vary the sample size from $n=500,1000$ to $2000$.
The confidence band is constructed on $[-0.9, 0.9] \subset [-1, 1]$. 
For the traditional bias correction, we use cross validation to choose the bandwidth for both function estimation and derivative estimation.  To be more specific, we use cross validation to choose the bandwidth for the local linear smoother to estimate the regression function, and again use cross validation to choose the bandwidth for the third order local polynomial regression to estimate the second order derivative.  Then we apply our bootstrap strategy to the debiased estimator to obtain the uniform confidence band. }

{
Table \ref{table::lpr_comparison_complete_coverage} and \ref{table::lpr_comparison_complete_width} show the final simulation result.  Overall, the undersmoothing, bias correction and debiased approaches all perform well with the cross validated bandwidth. They almost always achieve the nominal coverage except Model 5. With Model 5, the rule of thumb bandwidth is slightly better than cross validation bandwidth and traditional bias correction undercovers a bit.  The above analysis suggests that we should be using cross validation to choose bandwidth in general case. Locfit does not seem to be a good approach because it always suffers from undercoverage.  The debiased approach also performs much better with rule of thumb bandwidth than the other two approaches though it still undercovers.  This suggests that our debiased approach is more robust to bandwidth selection than other approaches. }

{
In terms of the width of confidence band,  again the traditional bias correction achieves the narrowest band with the bootstrap strategy, our debiased approach comes the next, and the undersmoothing gives the widest confidence band.  The consistent estimation of the bias seems to be improving the width of confidence bands.
\begin{table}[H]
\caption{Empirical Coverage and average width of 95\% simultaneous confidence band}
\begin{center}
\begin{tabular}{c c c c  c  c  c }
\hline
 & & & \multicolumn{4}{c}{Empirical coverage}\\
 \hline
 Model & n & Bw Selection & US & locfit & BC & Debiased\\
 \hline
 1 & 500 & CV & 1  & 0.77 & 0.985 & 0.993\\
  & & ROT & 0.498 & 0   &  0 &  0.805 \\
& 1000 & CV & 0.997 & 0.812 & 0.981 & 0.988 \\
  & & ROT & 0.34 & 0 &  0 &  0.827 \\
& 2000 & CV & 0.994 & 0.818 & 0.977 & 0.98 \\
 & & ROT & 0.263 & 0 & 0 & 0.841 \\
 \hline
  2 & 500 & CV & 1  & 0.76 & 0.985 & 0.993 \\
  & & ROT & 0.118 & 0   &  0 &  0.602 \\
& 1000 & CV & 0.997 & 0.807 & 0.98 & 0.989 \\
  & & ROT & 0.038 & 0 &  0 &  0.593 \\
& 2000 & CV & 0.994 & 0.803 & 0.977 & 0.979 \\
 & & ROT & 0.021 & 0 & 0 & 0.611 \\
  \hline
  3 & 500 & CV & 0.995  & 0.74 & 0.958 & 0.979 \\
  & & ROT & 0.949 & 0   &  0.451 &  0.932 \\
& 1000 & CV & 0.994 & 0.786 & 0.961 & 0.979 \\
  & & ROT & 0.927 & 0 &  0.342 &  0.928\\
& 2000 & CV & 0.982 & 0.789 & 0.958 & 0.971\\
 & & ROT & 0.908 & 0 & 0.27 & 0.943\\
 \hline
   4 & 500 & CV & 1  & 0.791 & 0.983 & 0.992 \\
  & & ROT & 0.448 & 0   &  0.029 &  0.805 \\
& 1000 & CV & 0.996 & 0.812 & 0.981 & 0.988\\
  & & ROT & 0.314 & 0 &  0.005 &  0.827\\
& 2000 & CV & 0.992 & 0.82 & 0.974 & 0.977  \\
 & & ROT & 0.24 & 0 & 0 & 0.833 \\
  \hline
   5 & 500 & CV & 0.974  & 0.852 & 0.955& 0.956\\
  & & ROT & 0.973 & 0.938   &  0.969&  0.967  \\
& 1000 & CV & 0.967 & 0.874 & 0.942 & 0.959 \\
  & & ROT & 0.976 & 0.927 &  0.965 &  0.97\\
& 2000 & CV & 0.969 & 0.858 & 0.942 & 0.948 \\
 & & ROT & 0.966 & 0.927 & 0.953 & 0.963 \\
\end{tabular}
\end{center}
\label{table::lpr_comparison_complete_coverage}
\end{table}
\begin{table}[H]
\caption{Average width of 95\% simultaneous confidence band}
\begin{center}
\begin{tabular}{c c c c  c  c  c }
\hline
 & & & \multicolumn{4}{c}{Average Confidence Band width} \\
 \hline
 Model & n & Bw Selection & US & locfit & BC & Debiased \\
 \hline
 1 & 500 & CV & 0.301 & 0.090& 0.113 & 0.144 \\
  & & ROT & 0.091  &  0.058 & 0.092 & 0.096 \\
& 1000 & CV & 0.135 & 0.067& 0.077 & 0.099 \\
  & & ROT& 0.063 & 0.042 & 0.058 & 0.063\\
& 2000 & CV & 0.088 & 0.050 & 0.055 & 0.072 \\
 & & ROT & 0.046 & 0.030 & 0.038 & 0.043 \\
 \hline
  2 & 500 & CV & 0.297 & 0.089& 0.112 & 0.143 \\
  & & ROT & 0.084  &  0.057 & 0.099 & 0.102 \\
& 1000 & CV & 0.134 & 0.066& 0.077 & 0.099 \\
  & & ROT & 0.058 & 0.040 & 0.063 & 0.065\\
& 2000 & CV & 0.088 & 0.050 & 0.055 & 0.072 \\
 & & ROT & 0.042 & 0.029 & 0.040 & 0.043 \\
  \hline
  3 & 500 & CV & 0.144 & 0.067 & 0.080 & 0.101 \\
  & & ROT & 0.084  &  0.048 & 0.060 & 0.071 \\
& 1000 & CV & 0.091 & 0.051 & 0.058 & 0.073 \\
  & & ROT & 0.058 & 0.036 & 0.042 & 0.050\\
& 2000 & CV & 0.064 & 0.039 & 0.042 & 0.054 \\
 & & ROT & 0.043 & 0.027 & 0.031 & 0.037 \\
 \hline
   4 & 500 & CV & 0.255 & 0.085 & 0.105 & 0.134 \\
  & & ROT & 0.084  &  0.054 & 0.085 & 0.089 \\
& 1000 & CV  & 0.125 & 0.063 & 0.072 & 0.094 \\
  & & ROT & 0.058 & 0.039 & 0.054 & 0.058\\
& 2000 & CV & 0.083 & 0.048 & 0.052 & 0.068 \\
 & & ROT & 0.043 & 0.029 & 0.036 & 0.043 \\
  \hline
   5 & 500 & CV & 0.065 & 0.039 & 0.045 & 0.071 \\
  & & ROT & 0.076  &  0.044 & 0.050 & 0.068 \\
& 1000 & CV & 0.048 & 0.030 & 0.033 & 0.047 \\
  & & ROT & 0.058 & 0.034 & 0.037 & 0.048\\
& 2000 & CV & 0.035 & 0.023 & 0.025 & 0.033 \\
 & & ROT & 0.041 & 0.026 & 0.027 & 0.036 \\
\end{tabular}
\end{center}
\label{table::lpr_comparison_complete_width}
\end{table}
}
\bibliographystyle{abbrvnat}
\bibliography{DKDE1.bib}

\end{document}